  \DeclareDocumentCommand{\orcidlink}{m}{}
\crefname{equation}{}{}
\Crefname{equation}{Equation}{Equations}
\tryWithFiles\addbibresource{short.bib,bib.bib}
\NewDocumentCommand \arxivOrcid {m} {    \ifarxiv\orcidlink{#1}\fi,}
\title{      Identification Under the Semantic Effective Secrecy Constraint
}
\author{  \IEEEauthorblockN{    Abdalla~Ibrahim\arxivOrcid{0009-0008-7646-6276}
      \IEEEmembership{Graduate~Student~Member,~IEEE},
    Johannes~Rosenberger\arxivOrcid{0000-0003-2267-3794}
      \IEEEmembership{Graduate~Student~Member,~IEEE},
    Boulat~A.~Bash\arxivOrcid{0000-0002-1205-3906}
      \IEEEmembership{Member,~IEEE},
    Christian~Deppe\arxivOrcid{0000-0002-2265-4887}
      \IEEEmembership{Senior Member,~IEEE},
    Roberto~Ferrara\arxivOrcid{0000-0002-1991-3286}
    and
    Uzi~Pereg\arxivOrcid{0000-0002-3259-6094}
      \IEEEmembership{Member,~IEEE}
  }                                 \thanks{    Abdalla Ibrahim, Johannes Rosenberger and Roberto Ferrara are with the
    Technical University of Munich, Germany;
    TUM School of Computation, Information and Technology,
    Department of Computer Engineering,
    (e-mail: \{\tumail{abdalla.m.ibrahim},\tumail{johannes.rosenberger},\tumail{roberto.ferrara}\}@tum.de).
    Boulat Bash is with the
    Electrical and Computer Engineering Department, University of Arizona, USA
    (e-mail: \uref{mailto:boulat@arizona.edu}{boulat@arizona.edu}).
    Christian Deppe is with the Institute for Communications Technology,
    Technische Universit\"{a}t Braunschweig
    (e-mail: \uref{mailto:christian.deppe@tu-braunschweig.de}{christian.deppe@tu-braunschweig.de}).
    Uzi Pereg is with the Faculty of Electrical and
    Computer Engineering at the Technion -- Israel Institute of Technology and
    the Technion Hellen Diller Quantum Center, 3200003 Haifa, Israel
    (e-mail: \uref{mailto:uzipereg@technion.ac.il}{uzipereg@technion.ac.il}).  }  \thanks{    Parts of this paper have been presented at the
    2021 IEEE Information Theory Workshop (ITW) \cite{IbrahimFerraraDeppe2021IdentificationEffectiveSecrecy_conference}
    and at the 2023 IEEE International Symposium on Information Theory (ISIT) \cite{Rosenberger2023}.  }}
\providecommand \ifqtikz {\iftrue}
\tikzset{
  font=\sffamily,
  rateBound/identification/.style = {color=identification},
  rateBound/transmission/.style = {dotted,color=transmission}
}
\newcommand \tikzMutInfsRevDegraded [3][]{

\providecolor{identification}{named}{blue}
\providecolor{transmission}{named}{red}

\tikzmath{
 real \ticklen, \perasure, \pcrossover, \capBSC;
 \ticklen = 0.04; 
 \perasure = #3;
 \pcrossover = #2;
 function IbscBec(\a, \p, \e) { return (1-\e) * ( Hbin(fold(\a,\p)) - Hbin(\p) ); };
}

\begin{tikzpicture}[auto,thick,#1]
  \datavisualization [school book axes={unit=0.25},
    visualize as smooth line/.list={Ixy,Ixz,Iuy,Iuz,Ix1y,It},
    x axis={ticks={step=0.5},label={$p_{X_2}$}},
    y axis={ticks={step=0.5},label={bits/channel use}},
    legend={above right of={x=0.9,y=0.63}},
    style sheet=strong colors,
    style sheet=vary dashing,
    Ixy = {label in legend={text=$I(X;Y)$ [secret ID]}},
    Ixz = {label in legend={text=$I(X;Z)$}},
    Iuy = {label in legend={text=$I(U;Y)$ [ESID]}},
    Iuz = {label in legend={text=$I(U;Z)$}},
    Ix1y = {label in legend={text=$I(X_1;Y)$ [ES transmission]}},
    It = {label in legend={text=$I(U;Y) - I(U; Z)$}},
    data/format=function,
  ]
  data [set=Ixy] {
    var x : interval [0.001:1];
    func y = Ibec(0.5, \perasure) + Ibec(\value x, \perasure);
  }
  data [set=Ixz] {
    var x : interval [0.001:0.999];
    func y = Hbin(\value x);
  }
  data [set=Iuy] {
    var x : interval [0.001:0.999];
    func y = Ibec(0.5, \perasure) + IbscBec(\value x, \pcrossover, \perasure);
  }
  data [set=Iuz] {
    var x : interval [0.001:0.999];
    func y = Ibsc(\value x, \pcrossover);
  }
  data [set=Ix1y] {
    var x : interval [0.001:0.999];
    func y = Ibec(0.5, \perasure);
  }
  data [set=It] {
    var x : interval [0.001:0.999];
    func y = Ibec(0.5, \perasure) + IbscBec(\value x, \pcrossover, \perasure) - Ibsc(\value x, \pcrossover);
  };
\end{tikzpicture}
}
\providecommand \ifqtikz {\iftrue}
  \tikzset{
    box/.style={draw, minimum height = .8cm, minimum width = 1cm, inner sep=4pt}, 
    pfeil/.style={->, >=latex}
  }
  \providecommand{\BEC}{\mathrm{BEC}}
  \providecommand{\BSC}{\mathrm{BSC}}
\NewDocumentCommand \exampleChannelESID { O{} O{1.5} }{

\begin{tikzpicture}[auto,#1]

  \node[box,minimum height=2*#2\baselineskip+2em] (Pu) {$P_{U}$};
  \coordinate[yshift = #2\baselineskip] (Pu1) at (Pu.east);
  \coordinate[yshift = -#2\baselineskip] (Pu2) at (Pu.east);
  \path
    (Pu1)
    ++(4,0)
    node[box](BEC1)  {$\BEC(\epsilon)$}

    (Pu2)
    ++(1.6,0)
    node[box] (Pxu2) {$\BSC(\delta)$}
    ++(2.4,0)
    node[box] (BEC2) {$\BEC(\epsilon)$}

    (Pu)
   ++(6.5,0)
    node[box,minimum height=2*#2\baselineskip+2em] (Bob) {Bob}
    (Bob.south)
    ++(0,-1)
    node[box] (Willie) {Willie}
  ;

  \coordinate[yshift = #2\baselineskip] (Bob1) at (Bob.west);
  \coordinate[yshift = -#2\baselineskip] (Bob2) at (Bob.west);

  \draw[pfeil] (Pu1) -> (BEC1);
  \draw[pfeil] (Pu2) -> node[above] {$U_2$} (Pxu2);
  \draw[pfeil] (Pxu2.east) -> node[inner sep=0,minimum
          width=4pt,fill,circle,anchor=center] (tap) {} (BEC2);
  \node[above,yshift = 2*#2\baselineskip] (X1) at (tap) {$X_1$};
  \node[above] (X2) at (tap) {$X_2$};
  \draw[pfeil] (tap.center) |- node[above,pos=0.93] {$Z$} (Willie);
  \draw[pfeil] (BEC1.east) -> node[above] {$Y_1$} (Bob1);
  \draw[pfeil] (BEC2.east) -> node[above] {$Y_2$}(Bob2);
  \node[box,fit=(Pu) (Pxu2)] {};

\end{tikzpicture}
}
\colorlet{identification}{blue}
\colorlet{transmission}{red}
\tikzset{
	box/.style={draw, minimum height = 1em, minimum width = 1em, inner sep=4pt},
	pfeil/.style={->, >=latex},
  font = {\sffamily\footnotesize}, }
\tryWithFiles\input{local-shorthands,acros}
\begin{document}

\maketitle

\begin{abstract}
                         The problem of identification over a discrete memoryless wiretap channel is  examined under the criterion of semantic effective secrecy. This
  secrecy criterion guarantees both the requirement of semantic secrecy and of stealthy communication.
  Additionally, we introduce the related problem of combining
  approximation-of-output statistics and transmission. We derive a capacity
  theorem for approximation-of-output statistics transmission codes. For a
  general model, we present lower and upper bounds on the capacity, showing that
  these bounds are tight for more capable wiretap channels. We also provide
  illustrative examples for more capable wiretap channels, along with examples of
  wiretap channel classes where a gap exists between the lower and upper bounds.
\end{abstract}
\begin{IEEEkeywords}
  identification capacity, effective secrecy, stealth, semantic secrecy, physical-layer security, multi-user information theory
\end{IEEEkeywords}

\ifdraft
\ifonecol
  \tryWithFiles \input {edit-notes}
\fi
\fi

\ifdraft
  \tableofcontents
\fi

\section{Introduction}

In modern cyber-physical systems, there is an increasing need for task-oriented
communication paradigms~\cite{Gündüz2023semantic}.
Since Shannon's pioneering work~\cite{Shannon1948MathematicalTheoryCommunication},
traditional communication emphasizes the problem of transmitting
messages, where a decoder is expected to decide which message has been sent over
a noisy channel among exponentially many possible messages.
If the purpose of communication is to achieve a certain task, then this full
reconstruction of a message may be unnecessary. Efficiency gains are
possible by designing communication systems for task-specific requirements
of reliability, robustness, and security~\cite{Shannon1959rdTheory,
  AhlswedeDueck1989Identificationviachannels,AhlswedeZhang1995Newdirectionstheory,
  CabreraEA2021postShannon6G,RezwanCabreraFitzek2022funcomp,Gündüz2023semantic}.

In the task of identification, the receiver, Bob, selects one of many possible
messages and tests whether this particular message was sent by the transmitter,
Alice, or not. It is assumed that Alice does not know Bob's chosen message;
otherwise, she could simply answer with "Yes" or "No" by sending a single bit.
Given the nature of this very specific task, identification is in stark contrast
to the conventional and general task of uniquely decoding messages, i.e.,
estimating which message was sent, as required in transmission. Decoding is
general such that that Bob can estimate any function of Alice's message.
On the other hand, identification only requires Bob to determine whether the message is the one he desires or not, hence the function that Bob is interested in is a simple
indicator function. However, while the code sizes may only grow exponentially in
the block length for the message-transmission task, a \emph{doubly exponential}
growth can be achieved in identification if stochastic encoding is used
\cite{AhlswedeDueck1989Identificationviachannels};
conversely, messages can be compressed by a logarithmic order.
This is possible as stochastic encoding allows the number of messages to be restricted not by the number of possible distinct codewords, but rather by the number of input distributions to the channel that are pairwise distinguishable at the output
\cite{AhlswedeDueck1989Identificationviachannels,
HanVerdu1993Approximationtheoryoutput}.

Identification has applications in various domains that span authentication
tasks such as watermarking \cite{SteinbergMerhav2001Identificationpresenceside,
Steinberg2002Watermarkingidentificationprivate_conference}, sensor communication
\cite{labidi2023joint}, and vehicle-to-X communication
\cite{BocheDeppe2018SecureIdentificationWiretap, BocheArendt2021patentV2Xid,
RosenbergerPeregDeppe2022IdentificationoverCompound_conference}, among others.
From a more theoretical point of view, there are several interesting connections
between identification and common randomness generation
\cite{AhlswedeCsiszar1998Commonrandomnessinformation}, as well as the problem of
approximation of output statistics, also known as channel
resolvability~\cite{HanVerdu1993Approximationtheoryoutput,
  Steinberg1998Newconversestheory,
  Hayashi2006Generalnonasymptoticasymptotic,
  Watanabe2022MinimaxConverseIdentification,
ressurvey}
and soft-covering~\cite[p.~656]{Ahlswede2018CombinatorialMethodsModels},
\cite{Wyner1975commoninformationtwo, Cuff15, BocheDeppe2019SecureIdentificationPassive}.
These connections lead to a remarkable discontinuous behavior of identification
capacities under secrecy constraints. The identification capacity under a
semantic secrecy constraint is the same as the identification
capacity without a secrecy constraint, provided that the secret-transmission
capacity of the channel is positive~\cite{AhlswedeZhang1995Newdirectionstheory, wiretapwafa}.

 Stealthy communication was studied in~\cite{HouKramer14a}, where it was examined
 under a joint requirement of both stealth and secrecy. In stealthy
 communication \cite{cheJaggiEA2014surveyStealth,
 HouKramerBloch2017EffectiveSecrecyReliability,songJaggiEA2020stealthMultipathJammed,BlochGuenlueYenerOggierPoorSankarSchaefer2021OverviewInformationTheoretic},
 the goal is to prevent a potential adversary from discerning whether any
 meaningful communication is occurring or not.
 Covert communication
 \cite{bash2013limits,BashGoeckelTowsleyGuha2015covertLimits,CheBakshiJaggi13,Bloch2016covertResolvability,WangWornellZheng16}
 can be considered as the special case of stealthy communication where the sender
 does not send anything when no meaningful communication takes place
 \cite{LentnerKramer20}. In general, neither stealth nor secrecy implies the
 other, and they can be studied independently. The
 covert setting of the identification problem  for binary-input discrete
 memoryless wiretap channels~\cite{Wyner1975wiretap, BlochBarros2011phys_sec_book}
 has been studied in
 \cite{ZhangTan2021CovertIdentificationBinary}.

Here, we introduce and study the problem of identification under a semantic
effective secrecy constraint. A \textit{semantic} secrecy constraint is one that holds for all possible message probability distributions. An \textit{effective secrecy} constraint is one that combines both secrecy and stealth requirements. A general description
of such a setting is given in \cref{fig:cr-scheme}.
We note that in this model, semantic stealth is equivalent
to semantic effective secrecy and hence does indeed imply semantic secrecy,
in contrast to the setting of strong secrecy and stealth
\cite{Hou14thesis,HouKramer14a,HouKramerBloch2017EffectiveSecrecyReliability},
which is defined as an average case over the messages.
We obtain both lower and upper bounds on the capacity which
are are tight for some classes of discrete memoryless wiretap channels. In
particular, we prove the capacity for the class of more capable discrete memoryless
wiretap channels. For other channels,
the derived achievable rate bound is further enhanced by
observing that the encoder can always prepend an auxiliary channel to the actual
channel, which may increase the achievable rates for some classes of channels.
As a tool, we introduce the problem of joint approximation of output statistics
at one party and transmission of information to another party.
For this task, we determine the capacity, as an auxiliary result to prove
the identification capacity with effective secrecy: in our identification code
construction, stealth is ensured by approximating the innocent output
statistics expected by the attacker, while reliable identification is ensured by an
identification code for the noiseless channel built upon the transmission
part of the underlying approximation-of-output-statistics code.

To illustrate our results, we provide multiple examples.
One compares the capacity behavior of approximation-of-output-statistics transmission capacity and a simple
transmission capacity, as well as comparing both with the optimal rate for pure approximation of output
statistics without transmission.
To provide intuition for our bounds on the identification capacity with
effective secrecy, we discuss several examples of channels where this capacity is known.
These include a wiretap channel with reversely degraded subchannels, where the capacity is
known only for a certain range of channel parameter values,
whereas for different parameters, the upper and lower bounds do not coincide.

Finally, we discuss future steps needed to obtain a generally tight converse
bound. The identification capacity of the discrete memoryless wiretap channel
exhibits a similar dichotomy for semantic effective secrecy as for only semantic
secrecy \cite{AhlswedeZhang1995Newdirectionstheory}, but with a more stringent
positivity condition and constraint. This is because for secrecy, only a small
part of the identification codeword has to be secret
\cite{AhlswedeZhang1995Newdirectionstheory}, while for effective secrecy, the
whole codeword must be stealthy
\cite{IbrahimFerraraDeppe2021IdentificationEffectiveSecrecy_conference}.

This paper is organized as follows: \cref{sec:prelim} introduces our notation
and provides the preliminaries and necessary background.
In \cref{sec.code.denfns}, we define
several classes of codes that are relevant to the problems tackled in this
paper, including both approximation-of-output-statistics-transmission-codes
(\cref{AOSTcodeDef})
and effectively secret identification codes (\cref{ESID.code.defn}).
In \cref{sec:results}, we present our
main results, and in \cref{sec:example} we discuss various examples.
\Cref{sec:esid.dm.proof} contains the proofs of the theorems in \cref{sec:results}.
Finally, \cref{sec:conclusion} summarizes the results and discusses further steps.

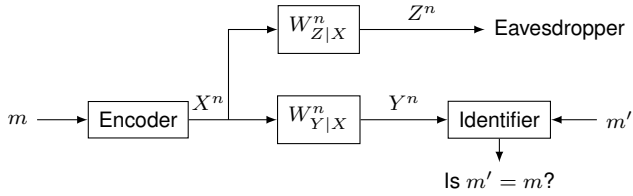
\begin{figure}
  \centering
  \begin{tikzpicture}[scale=1.6]
    \path
        node[box] (channel) {$W_{Y|X}^n$}
        ++(0,.75)
        node[box] (wiretap) {$W_{Z|X}^n$}
        ++(2,0)
        node      (eve)     {Eavesdropper}
        (channel)
        ++(-.75,0)
        coordinate (tap)
        (channel)
        ++(-1.5,0)
        node[box] (encoder) {Encoder}
        ++(-1,0)
        node      (message) {$m$}
        (channel)
        ++(1.5,0)
        node[box] (decoder) {Identifier}
        +(0,-.5)
        node      (test)    {Is $m' = m$?}
        ++(1,0)
        node      (rec_message) {$m'$}
    ;

    \draw[pfeil] (message) -> (encoder);
    \draw[pfeil] (encoder) -- node[above] {$X^n$} (tap) -> (channel);
    \draw[pfeil] (tap) |- (wiretap);
    \draw[pfeil] (channel) -> node[above] {$Y^n$} (decoder);     \draw[pfeil] (decoder) -> (test);
    \draw[pfeil] (rec_message) -> (decoder);
    \draw[pfeil] (wiretap) -> node[above] {$Z^n$} (eve);
    \end{tikzpicture}
	\caption{\label{fig:cr-scheme}
	  A general identification scheme is considered in the presence of an eavesdropper, where $(W_{Y|X}, W_{Z|X})$ is a discrete memoryless wiretap channel (DM-WTC). In contrast to conventional message transmission, the receiver does not decode the message $m$ from the channel output $Y^n$. Instead, the receiver selects a message $m'$ and performs a statistical hypothesis test to decide whether $m'$ equals $m$ or not. In the effective secrecy setting, the eavesdropper aims to determine whether unexpected communication is occurring compared to some expected default behavior, and to identify whether its own message $m'$ equals $m$ or not.
	}
\end{figure}

\section{Background and Preliminaries}
\label{sec:prelim}

\subsection{Notation}

The following notation is also summarized in \cref{tbl:notation}.
Calligraphic letters $\mathcal{X}, \mathcal{Y}, \mathcal{Z}, \dots$ denote finite sets.
Capital letters $X, Y, Z, \dots$ denote random variables (RVs),
while small letters $x, y, z, \dots$ denote realizations of random variables.
The set of all possible probability distributions on a finite alphabet $\mathcal{X}$
  is denoted by $\mathcal{P}(\mathcal{X})$.
A probability mass function (PMF) of a random variable $X$ is denoted by $P_X$,
where $P_X \in \mathcal{P}(\mathcal{X})$.
A joint PMF of a pair of RVs $X$ and $Y$ is denoted by $P_{X,Y}$,
and a conditional PMF by $P_{Y|X}(y|x) \coloneqq P_{X,Y}(x,y) / P_X(x)$.
We say that $X \sim P_X$ and $(X,Y) \sim P_{X,Y}$,
  if $\Pr(X=x) = P_X(x),\, \forall x\in\mathcal{X}$,
  and $\Pr(X=x, Y=y) = P_{X,Y}(x, y),\, \forall (x,y)\in\mathcal{X}\times \mathcal{Y}$,
  respectively.
The PMF $P_Y \coloneqq P_X \circ P_{Y|X}$ is a probability distribution on $\mathcal{Y}$ where
\begin{align}
(P_X \circ P_{Y|X})(y) \coloneqq \sum_{x \in \mathcal{X}} P_X(x) P_{Y|X}(y|x).
\end{align}
Extending this terminology, the PMF $P_Z \coloneqq P_X \circ P_{Y|X} \circ P_{Z|X,Y}$
is a probability distribution on $\mathcal{Z}$ where
\begin{align}
  (P_X \circ P_{Y|X} \circ P_{Z|X,Y})(z) \coloneqq \sum_{(x, y) \in (\mathcal{X}, \mathcal{Y} ) } P_X(x) P_{Y|X}(y|x) P_{Z|X,Y}(z|x,y).
\end{align}
$P_X \times W_{Y|X}$ refers to a joint PMF on $(X,Y)$ such that $P_X \times W_{Y|X}(x,y) = P_X(x) W_{Y|X}(y|x)$.

$X_{i}^{j}$ is used to refer to a sequence of random variables, i.e.,
$X_{i}^{j} = (X_i, \dots, X_j)$ for all $i \le j$. $X^n$ denotes a sequence of
RVs such that $X^n = (X_1, \dots, X_n)$. The sequence of RVs
$X^{n_1+n_2} = (X^{n_1}, X_{n_1+1}^{n_1+n_2})$ is created by appending those two
sequences of RVs.

A sequence of RVs, $X^n$, is said to be \textit{i.i.d.} (independent and identically distributed) if their joint PMF can be written as
 $P_{X^n}(x^n) \coloneqq P_X^n(x^n) = \prod_{i=1}^{n} P_X(x_i)$. If RVs $X$ and $Z$ are conditionally independent given $Y$, i.e., $P_{X,Z|Y} = P_{X|Y} P_{Z|Y}$, then $X, Y, Z$ are said to form a \textit{Markov chain} $X - Y - Z$. $P^{\mathbf{unif}}_{\mathcal{A}}$ refers to a uniform probability distribution over set $\mathcal{A}$.

Given sequences $X^n\in\mathcal{X}^n$ $P_X \in \mathcal{P}(\mathcal{X})$, define the $\epsilon$-typical set with respect to a $P_X \in \mathcal{\mathcal{X}}$
\begin{align}
\mathcal{T}_{\epsilon}^{n}(P_X) = \left\{ x^n \in \mathcal{X}^n : \left| \frac{1}{n} N(a|x^n) - P_X(a) \right| \leq \epsilon \, P_X(a), \, \forall a \in \mathcal{X} \right\},
\end{align}
where $N(a|x^n)$ is the number of occurrences of symbol $a$ within sequence $x^n$. The set of jointly $\epsilon$-typical sequences $\mathcal{T}_{\epsilon}^{n}(P_{X,Y})$ can be defined likewise. The indicator function $\ind{\cdot}$ evaluates to $1$ if its
argument is true, and to $0$ if it is false. All logarithms are in base $2$
unless explicitly stated otherwise.

\subsection{Information Measures and Divergences}
\label{sec.divergences}
In this subsection we define information measures and statistical divergences and the relationships and rules applying to them. These are widely used throughout the paper.
Shannon entropy is defined as $H(X)=-\expect_{P_X}\brack{\log{P_X(X)}}$.
The support of a PMF $P_X \in \cP(\cX)$ is defined as the set $\supp P_X = \set{x\in\cX|\,P_X(x) > 0}$.
\begin{definition}[KL-divergence]
For two PMFs $P_X,Q_X \in \cP(\cX)$, where  $\supp Q_X = \cX$,
the \defname[KL-divergence]{Kullback-Leibler divergence}
is defined by
\begin{gather}
 D(P_X \| Q_X) =
     \expect_{P_X}\brack{\log\frac{P_X(X)}{Q_X(X)}}= \sum\limits_{x\in\mathcal{X}} P_X(x) \log \frac{P_X(x)}{Q_X(x)} .
    \end{gather}
Note the restriction to $x$ for which the denominator $Q_X(x) > 0$.
\end{definition}
\begin{definition}[Conditional KL-divergence]
For two conditional PMFs $W_{Y|X}, V_{Y|X}$,
the conditional KL-divergence is defined as
\begin{align}
D(W_{Y|X} \| V_{Y|X} | P_X)
  &\coloneqq \expect_{P_{X}} \brack{ D(W_{Y|X} \| V_{Y|X})} \nonumber \\
  &= \sum\limits_{x\in\mathcal{X}}\sum\limits_{y\in\mathcal{Y}}P_{X}(x)W_{Y|X}(y|x)\log\frac{W_{Y|X}(y|x)}{V_{Y|X}(y|x)}
  \,,
\end{align}
where $D(W_{Y|X} \| V_{Y|X})=\sum_{y\in\mathcal{Y}} W_{Y|X}(y|X)\log\frac{W_{Y|X}(y|X)}{V_{Y|X}(y|X)}$.
\end{definition}
\begin{definition}[Mutual Information]
The mutual information is defined as follows:
\begin{align}
I(X;Y)&:=I(P_X;W_{Y|X}):=D(P_{X,Y}\|P_X P_Y) \nonumber \\
&=\expect_{P_{X,Y}}\brack{\log\frac{P_{X,Y}(X,Y)}{P_X(X)P_Y(Y)}} \nonumber \\
&=D(W_{Y|X}\|P_{Y}|P_X),
\end{align}
where $X\sim P_X$ and $(X,Y)\sim P_X\times W_{Y|X}$.
\end{definition}

A useful chain rule for the KL-divergence enables us to decompose the KL-divergence between two joint distributions $P_{X,Y}$ and $Q_{X,Y}$ where $P_{X,Y}= P_X \times W_{Y|X}$ and  $Q_{X,Y}= Q_X \times V_{Y|X}$
\begin{align}
  D(P_{X,Y}\|Q_{X,Y})
  &= \expect_{P_{X,Y}}\brack{\log\frac{P_{X,Y}(X,Y)}{Q_{X,Y}(X,Y)}} \nonumber  \\
  &= \sum\limits_{x\in\mathcal{X}}P_X(x)\log\frac{P_X(x)}{Q_X(x)}+\sum\limits_{x\in\mathcal{X}}\sum\limits_{y\in\mathcal{Y}}P_X(x)P_{Y|X}\log\frac{P_{Y|X}(y|x)}{Q_{Y|X}(y|x)} \nonumber \\
  &=D(P_X\|Q_X)+D(W_{Y|X}\|V_{Y|X}|P_X)
  \,\fxcolor{teal}.
  \label{KLchainrule}
\end{align}

Note that for $(X,Y) \sim P_X \times W_{Y|X}$ and any $Q_Y \in \cP(\cY)$
such that the following expression is defined, it holds that
\begin{align}
  &D(W_{Y|X} \| Q_Y | \, P_X) \nonumber \\
  &= \expect_{P_{X,Y}} \brack{ \log \frac {W_{Y|X}(Y|X)} {Q_Y(Y)} } \nonumber \\
  &= \expect_{P_{X,Y}} \brack{ \log \frac {W_{Y|X}(Y|X)} {(P_X\circ W_{Y|X})(Y)} + \log \frac {(P_{X}\circ W_{Y|X})(Y)} {Q_Y(Y)} } \nonumber \\
  &= D(W_{Y|X} \| P_X\circ W_{Y|X} | P_X) + D(P_{X}\circ W_{Y|X} \| \, Q_Y) \nonumber \\
  &= I(X;Y) + D(P_{X}\circ W_{Y|X} \| Q_Y)
  \,.
\end{align}

\begin{definition}[Binary entropy function and binary KL-divergence]
For a binary RV $X$,
where $\mathcal{X}=\{0,1\}$ and $P_X(1)=p$,
the binary entropy function is defined as
$H_b(p)=-p\log p-(1-p)\log(1-p)$.
The KL-divergence between two Bernoulli RVs with alphabet $\mathcal{X}=\{0,1\}$, where $P_X(X=1)=p$ and $Q_X(X=1)=q$, is $D_{b}(p \, \| \, q)=p\log\frac{p}{q}+(1-p)\log\frac{1-p}{1-q}$.
\end{definition}
The following lemma is well-known and used for our analysis.
\begin{lemma}[{A Chernoff bound \cite[Theorem 2.1.]{Mulzer18}{}}]
\label{chern}
Let $X_1,\dots, X_n$ be i.i.d  Bernoulli RVs with $\Pr(X_i=1)= p_i$, $0 \leq p_i \leq 1$. Set $X = \frac{1}{n}\sum_{i=1}^{n} X_i$, $p = \frac{1}{n}\sum_{i=1}^{n} p_i$ and $t\in (0, 1-p]$, then
\begin{align}
\text{Pr}(X \geq p+t) \leq 2^{- n \, D_b(p+t \, \| \, p)} .
\end{align}
\end{lemma}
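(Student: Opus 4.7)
The plan is to use the standard exponential-moment (Chernoff) technique: bound the tail probability by a moment generating function, exploit independence to factor it, reduce the non-identical case to the identical-mean case via concavity, and then optimize the free parameter to recognize the binary KL-divergence in the exponent.

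First, I would write, for any $s > 0$ and $S_n = \sum_{i=1}^n X_i = nX$,
\begin{align}
\Pr(X \geq p+t) = \Pr\bigl(e^{s S_n} \geq e^{s n (p+t)}\bigr) \leq e^{-s n (p+t)}\, \expect\bigl[e^{s S_n}\bigr],
\end{align}
by Markov's inequality. Independence of the $X_i$ then gives $\expect[e^{s S_n}] = \prod_{i=1}^n \bigl(1 - p_i + p_i e^{s}\bigr)$. The next step is to homogenize the per-index means: since $p \mapsto \log(1 - p + p e^s) = \log\bigl(1 + p(e^s-1)\bigr)$ is concave on $[0,1]$, Jensen's inequality applied to the empirical average of the $p_i$ yields
\begin{align}
\frac{1}{n} \sum_{i=1}^n \log\bigl(1 - p_i + p_i e^s\bigr) \leq \log\bigl(1 - p + p e^s\bigr),
\end{align}
so that $\expect[e^{s S_n}] \leq (1 - p + p e^s)^n$ and consequently
\begin{align}
\Pr(X \geq p+t) \leq \bigl(e^{-s(p+t)}(1 - p + p e^s)\bigr)^n.
\end{align}

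Finally, I would optimize over $s > 0$. Differentiating $-s(p+t) + \log(1 - p + p e^s)$ and setting the derivative to zero gives the stationary point $e^{s^\star} = \frac{(1-p)(p+t)}{p(1 - p - t)}$, which lies in $(1,\infty)$ precisely because $t \in (0, 1-p]$. Substituting $s^\star$ back and simplifying, the bracketed factor becomes $2^{-D_b(p+t \,\|\, p)}$, after converting the natural logarithm to base $2$ consistently with the paper's convention. Raising to the $n$-th power yields the claimed bound.

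I do not expect a real obstacle here; the argument is textbook and the cited reference already contains it. The only care-points are (i) verifying the concavity step so that non-identical $p_i$ can be replaced by their mean $p$, and (ii) carrying out the optimization cleanly so that the exponent is identified as the binary KL-divergence $D_b(p+t \,\|\, p)$ in the correct logarithm base; the boundary case $t = 1-p$ is handled by a standard limiting argument since $D_b(1 \,\|\, p) = -\log p$ remains finite.
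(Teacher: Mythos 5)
Your argument is the standard Chernoff--Hoeffding derivation and it is correct: Markov's inequality on the exponential moment, factorization by independence, Jensen's inequality via concavity of $p \mapsto \log(1-p+pe^s)$ to replace the $p_i$ by their mean, and optimization at $e^{s^\star} = \tfrac{(p+t)(1-p)}{p(1-p-t)}$, which indeed yields the exponent $D_b(p+t\,\|\,p)$ in base $2$; the boundary case $t=1-p$ is also handled soundly. The paper itself offers no proof of this lemma---it is imported verbatim from the cited reference---and your derivation is exactly the argument that reference contains, so there is nothing to reconcile.
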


\subsection{Channels}
\begin{definition}[Channel]
A \defname{channel} with domain $\cX$ and codomain $\cY$
is a conditional PMF $W_{Y|X}$ associated with an input alphabet $\mathcal{X}$ and an output alphabet $\mathcal{Y}$.
\end{definition}
\begin{definition}
A \defname[DMC]{discrete memoryless channel} $W_{Y|X}$ is a sequence
$(W_{Y|X}^n)_{n \in \bbN}$, where an input sequence $x^n$
of block length $n \in \bbN$
is mapped to an  output sequence $y^n$
with probability
$
  W_{Y|X}^n(y^n|x^n)\coloneqq  \prod\limits_{i=1}^n W_{Y|X}(y_i | x_i).
$
\end{definition}
The output distribution of sequence $Y^n$ resulting from sending an input symbol $X$ that is distributed according to $P_X$ over channel $W_{Y|X}$ where $ P_{X}\circ W_{Y|X}(y)\coloneqq P_Y(y)=\sum\limits_{x\in\mathcal{X}}P_X(x)W_{Y|X}(y|x)$.

One particularly important class of binary-input channels is the binary symmetric channel $\BSC(\epsilon)$,
with a binary input and output alphabet $\mathcal{X}=\mathcal{Y}=\{0,1\}$,
described by the channel probability transition matrix $
  \BSC(\epsilon)=
  \mleft(\begin{smallmatrix}
    1 - \epsilon \,  &\, \epsilon \\
     \epsilon \, & \, 1 - \epsilon
  \end{smallmatrix}\mright)
  $,
  where the entries of each row of the channel matrix are the probabilities of each possible output symbol given a fixed input symbol per row. Hence, each row in the channel matrix should sum to one.

Another important example of binary-input channels is the binary erasure channel
$\BEC(\epsilon)$ with a binary input alphabet $\mathcal{X}=\{0,1\}$ and a ternary
output alphabet $\mathcal{Y}=\{0,e,1\}$ where  $e$ represents a lost symbol through
the channel that can no longer identified as either being a $0$ or a $1$,
and the channel probability transition matrix
$
  \BEC(\epsilon)=
  \tup{\begin{smallmatrix}
    1 - \epsilon\,  &\, \epsilon \, & \, 0 \\
    0 \,  & \, \epsilon\, &\, 1 - \epsilon
  \end{smallmatrix}}
$.
\begin{definition}[Wiretap Channel]
A \defname{wiretap channel} (WTC) is a channel $W_{Y,Z|X} : \cX \to \cP(\cY \times \cZ)$,
where we assume that for an input $x$ at a sender ($\textit{Alice}$),
a legitimate receiver (\textit{Bob}) has access to the channel output $Y \sim W_{Y|X=x}$
and a passive adversary (\textit{Willie}\footnote{
We call the adversary Willie instead of Eve, following \cite{Hou14thesis,HouKramer14a,HouKramerBloch2017EffectiveSecrecyReliability},
because the focus of this paper is on a stealthy setting,
where the adversary is a warden checking if everything is in order.})
has access to an output $Z \sim W_{Z|X=x}$,
where the output distributions of the marginal channels $W_{Y|X=x}$ and $W_{Z|X=x}$
given input $x$ are marginalizations of the joint output distribution $W_{Y,Z|X=x}$.
\end{definition}
A \defname[DM-WTC]{discrete memoryless wiretap channel} $W_{Y,Z|X}$ is
a sequence $(W_{Y,Z|X}^n)_{n \in \bbN}$.

We usually refer to DM-WTC by an ordered pair of marginal channels
$(W_{Y|X}, W_{Z|X})$, where the first
is a point-to-point channel from Alice to Bob.
and the second is a point-to-point channel from Alice to Willie.

One way to characterize the statistical advantage of one user over another is to
use the notion of \textit{degradedness}.

\begin{definition}[Stochastically degraded DM-WTC]
A DMC-WTC $W_{Y,Z|X}$ with marginals $(W_{Y|X}, W_{Z|X})$ is \defname{stochastically degraded}
if there exists a conditional distribution $\tilde{W}_{Z|Y}$ such that
\begin{align}
W_{Z|X}(z|x)=\sum_{y\in\mathcal{Y}} W_{Y|X}(y|x) \tilde{W}_{Z|Y}(z|y),\,\,\forall \, x\in\mathcal{X},\,z\in\mathcal{Z}.
\end{align}
Given a pair $(X,Y)\sim P_X\times W_{Y|X}$, and another pair
 $(X,Z)\sim P_X\times W_{Z|X}$.
 \end{definition}
\begin{definition}[More capable DM-WTC]
 A DM-WTC  $(W_{Y|X},W_{Z|X})$
is said to be \textit{more capable} if
\begin{align}
I(X;Y)\geq I(X;Z) ,
\end{align}
for all $P_X\in\mathcal{P}(\mathcal{X})$.
\end{definition}
Finally, A DM-WTC $(W_{Y|X},W_{Z|X})$ is said to be \textit{less noisy} if
\begin{align}
I(U;Y)\geq I(U;Z) ,
\end{align}
for all $P_{U,X}\in\mathcal{P}(\mathcal{U}\times\mathcal{X})$ such that $U - X - (Y,Z)$. Note that every less noisy DM-WTCs is a more capable DMC-WTC as well, but the converse is not true. A DMC is said to be  \textit{degenerate} if the rows of the channel probability transition matrix are linearly dependent.

\subsection{Secrecy Notions}

Information-theoretic study of secrecy in communication problems was pioneered by Shannon in \cite{Shannon49a}. For a communication system where a sender \textit{Alice} sends a message $m \in \mathcal{M}$ through codeword $X^n$ to a receiver \textit{Bob} while the communication is being observed by an eavesdropper \textit{Willie} who receives $Z^n$.  The received sequence $Z^n$ at Willie and the message $M$ are required to be statistically independent. In such a system, the output distribution $P_Z^n$ takes the following form
\begin{align}
  P_{Z^n}(z^n)=\sum\limits_{m,x^n}P_{M}(m)P_{X^n|M}(x^n|\,m)W_{Z^n|\,X^n}(z^n|x^n).
\end{align}

\begin{definition}[Perfect secrecy]
In Shannon's model, a very strong notion of secrecy was studied which is
\textit{perfect secrecy}. A coding scheme is said to achieve perfect secrecy if
the \textit{information leakage} $L^{(n)}\coloneqq I(M; Z^n)$ at Willie satisfies
\begin{align}
L^{(n)} =0.
\end{align}
\end{definition}
The secrecy capacity of the degraded discrete memoryless wiretap channel
(DM-WTC) was derived by Wyner, who introduced the wiretap model in
\cite{Wyner1975wiretap} and studied it under a relaxed secrecy criterion, namely the \textit{weak secrecy} criterion. That model assumed a degraded $W_{Y,Z|X} \coloneqq
W_{Y|X} W_{Z|Y}$ to describe the statistical relation between input $X$ by
Alice, outputs $Y$, $Z$ at Bob and Willie, respectively.
\begin{definition}[Weak secrecy]
In a relaxation of the often very stringent perfect secrecy condition, weak
secrecy was used as a measure of secrecy. To achieve weak secrecy, the
normalized information leakage at Wille should asymptotically converge to zero
\begin{align}
\lim\limits_{n\to\infty}\frac{1}{n}L^{(n)}=0 .
\end{align}
\end{definition}

\begin{definition}[Strong secrecy]
Another asymptotic notion of secrecy that is more stringent than weak secrecy
but still less stringent than perfect secrecy is \textit{strong secrecy}. To
achieve strong secrecy, the unnormalized information leakage at Willie should
asymptotically converge to zero
\begin{align}
\lim\limits_{n\to\infty}L^{(n)}=0 .
\end{align}
\end{definition}

\begin{definition}[Semantic Secrecy]
\textit{Semantic secrecy} is yet another asymptotic secrecy criterion that is more
stringent than either weak secrecy or strong secrecy but less stringent than
perfect secrecy. To achieve semantic secrecy, the worst-case unnormalized
information leakage at Willie should asymptotically converge to zero
\begin{align}
  \lim\limits_{n\to\infty}\max\limits_{P_M \in \mathcal{P}(\mathcal{M})} L^{(n)} = 0
  \,,
\end{align}
where the worst case is considered with respect to all possible message distributions.
\end{definition}

\subsection{Stealth and Covertness}
\label{seq:securityMeasures.stealth}

The aforementioned secrecy measures quantify the statistical dependence
between a message and a received sequence at a potential adversary Willie.
On the other hand, a \emph{stealth} constraint should guarantee that the
adversarial warden Willie cannot detect if Alice sent a meaningful signal or not.
Hence, a good stealth measure should quantify the statistical
distinguishability between these two modes of communication:
The first mode corresponds to the probability distribution $P_Z^n$ at Willie's channel
output when useful information is being sent,
whereas the second mode corresponds to the \textit{reference distribution} $Q_{Z^n}$
that is expected by Willie at the channel output when no useful information is
being sent. This $Q_{Z^n}$ need not be a possible output distribution of the
channel; it can be freely chosen by Willie, but we assume it is known by Alice and Bob.

As a measure $\Delta^{(n)}$ for the statistical distinguishability of $P_{Z^n}$ and $Q_{Z^n}$, we use
\begin{align}
  \Delta^{(n)} \coloneqq D(P_{Z^n}\|\, Q_{Z^n})
  \,.
\end{align}
By combining the leakage and distinguishability, we define the \emph{effective-secrecy}
leakage $L_{\text{eff.}}^{(n)} \coloneqq L^{(n)}+\Delta^{(n)}$,
which can be expressed by a single KL-divergence as follows:
\begin{align}
L_{\text{eff.}}^{(n)}
&= L^{(n)}+\Delta^{(n)}\label{eq:effective secrecy} \nonumber \\
&= I(M; Z^n)+ D(P_{Z^n}\| \, Q_{Z^n}) \nonumber\\
&= D(P_{M,Z^n}\| \, P_{M}Q_{Z^n})
.
\end{align}
As we are interested in reliability for the worst-case message,
we consider \emph{semantic effective secrecy},
where the effective-secrecy leakage should asymptotically converge to zero
 \begin{align}
\lim_{n\to\infty} \max_{P_M \in \mathcal{P}(\mathcal{M})} L^{(n)}_{\text{eff.}}&=\lim_{n\to\infty} \max_{P_M \in \mathcal{P}(\mathcal{M})}  D(P_{M,Z^n}\| \, P_{M}Q_{Z^n}) \nonumber \\
&=\lim_{n\to\infty} \max_{P_M \in \mathcal{P}(\mathcal{M})}\expect_{P_M}\brack{D(P_{Z^n|M}\| \, Q_{Z^n})}\nonumber \\
&=\lim_{n\to\infty} \max_{m\in \mathcal{M} }D(P_{Z^n|M=m}\| \, Q_{Z^n}) \nonumber \\
&= 0.
\label{eq:effSecrecyLeakageBound}
\end{align}
In this paper, we focus on the case where this reference distribution is i.i.d., i.e. $Q_{Z^n} = Q_Z^n$.

\begin{remark}
  \label{remark:semanticStealthEquivSemanticEffectiveSecrecy}
  A pure semantic stealth constraint would be defined by
  \begin{equation}
    \lim_{n \to \infty} \max_{m \in \cM} \Delta^{(n)}_m = 0
    \,.
    \label{eq:semanticStealthBound}
  \end{equation}
  However, from \cref{eq:effSecrecyLeakageBound} one can see that
  \begin{equation}
    \max_{m \in \cM} \Delta^{(n)}_m
    = \max_{m \in \cM} D(P_{Z^n|M=m} \| Q_{Z^n})
    = \max_{P_M \in \cP(\cM)} L_{\text{eff.}}^{(n)}
    ,
  \end{equation}
  and hence semantic stealth is equivalent to semantic effective secrecy
  and implies semantic secrecy.
  This is why the definitions of
  effectively secret transmission (\cref{ESTcodes.defn})
  and identification codes (\cref{ESID.code.defn})
  contain actually the semantic stealth
  constraint~\cref{eq:semanticStealthBound}.
  This is in contrast to strong stealth and secrecy, where neither of them implies the
  other~\cite{Hou14thesis,HouKramer14a,HouKramerBloch2017EffectiveSecrecyReliability}.
\end{remark}

\begin{center}
\begin{DIFnomarkup} \begin{longtblr}[
  label = {tbl:notation},
  caption={General notation and acronyms},
]
{
    colspec={l X l}, hline{1,Z} = {solid},
    column{1}={$},
}
  X,Y,\dots
      && random variables (RVs)
      \\
  \cX,\cY, \dots
      && finite sets (alphabets)
      \\
  x,y,\dots
      && realizations of RVs
      \\
                    x^n = (x_1,x_2,\dots,x_n) \in \cX^n
      && sequence of length $n$
      \\
  X_k^n = (X_k,X_{k+1},\dots,X_n)       && sequence of RVs of length $(n-k+1)$
      \\
  P_X && probability mass function (PMF) of $X$
      \\
  \expect_{P_X}(f(X))       && expectation of an RV $f(X)$ with distribution $P_X$
      \\
                    \cP(\cX)
      && set of all PMFs with finite support over a set $\cX$
      \\
  P_{X}^n(X^n) = \prod_{i=1}^n P_X(X_i)
      && $n$-fold product distribution
      \\
        \text{SID} && secret identification
  \\
  \text{ESID} && effectively secret identification
  \\
  \text{ST} && secret transmission
  \\
  \text{EST} && effectively secret transmission
  \\
  \text{AOS} && approximation of output statistics
  \\
  \text{AOST} && approximation of output statistics and transmission
  \\
  \capT(W_{Y|X}) &&  transmission capacity of a point-to-point channel
  \\
  \capST(W_{Y|X},W_{Z|X}) && ST capacity of a DM-WTC
  \\
  \capID(W_{Y|X}) &&  ID capacity of the point-to-point channel
  \\
  \capSID(W_{Y|X},W_{Z|X}) &&  SID capacity
  \\
  \capESID(W_{Y|X},W_{Z|X},Q_Z) && ESID capacity with an i.i.d. reference distribution $Q_Z^n$
  \\
  \capEST(W_{Y|X},W_{Z|X},Q_Z) && EST capacity
  \\
  \capAOST(W_{Y|X},W_{Z|X},Q_Z) && AOST capacity
  \\
  \mathbf{R^{\star}_{AOS}}(W_{Y|X},Q_Y) && optimal AOS rate
                                \\
            
\end{longtblr}
\end{DIFnomarkup}
\end{center}

\section{Code Classes}
\label{sec.code.denfns}
In this section, we formally define several classes of codes that are relevant to our work here. These serve as building blocks to our investigations. We also aim to contextualize the novel problems of considered in this paper within the pre-existing literature.

\subsection{Transmission Codes, With and Without Secrecy Constraints}
We begin by a definition of transmission codes that will be used and built upon on in the rest of the paper.

\begin{definition}[Transmission Code]
  A $(|\mathcal{M}_{\text{T}}|, \, n \, |\,\lambda^{(n)})$-transmission code for a DMC $W_{Y|X}$
  consists of the following:
  \begin{enumerate}
    \item A message set $\mathcal{M}_{\text{T}}$,
    \item A stochastic encoder represented by the function $f_n \coloneqq P_{X^n|M}:\mathcal{M}_{\text{T}}\rightarrow \mathcal{P}(\mathcal{X}^n)$,

    \item a decoder function $\phi_n:\mathcal{Y}^n\rightarrow \mathcal{M}_{\text{T}}$.
  \end{enumerate}
  {$E_m$}
  is used a shorthand for $P_{X^n|M=m}$. It represents the conditional probability distribution of the channel input $X^n$ given a particular message $m$.
  To transmit a message $m\in \mathcal{M}_{T}$, the encoder picks $X^n(m) \sim E_m$ and sends it over the channel $W_{Y|X}$.
 $X^n(m)$ is the input
  sequence over $n$ uses of the DMC $W_{Y|X}$, while $y^n$ is the output
  sequence following the distribution
  $W_{Y|X}^n(Y^n|x^n(m))$. A decision (decoding) set for message $m$ is defined such that
  $\mathcal{D}_m=\{y^n: \phi(y^n)=m\}$.
  A useful notation that we use is to refer to transmission code is by the set of pairs
  \begin{equation}
  \big\{(E_m,\mathcal{D}_{m})\big\}_{m\in\mathcal{M}_{\text{T}}}.
  \end{equation}

  At the decoder, the following reliability condition should be satisfied
  \begin{align}
  \max_{m\in\mathcal{M}_{\text{T}}} E_{m} \circ W_{Y|X}^n(\mathcal{D}^{c}_{m})\leq \lambda^{(n)}.
  \end{align}

  A rate $R_{\text{T}}\coloneqq \frac{1}{n}\log{|\mathcal{M}_{\text{T}}}|$ is said to be
  \defname{achievable} if there exists a sequence
  of $(|\mathcal{M}_{\text{T}}|, \, n \, |\,\lambda^{(n)})$-transmission codes such that
  $\lim_{n \to \infty} \lambda^{(n)} = 0$.
  The transmission capacity $\capT(W_{Y|X})$ is the supremum of all achievable rates. It was found that $\capT$ is given by \cite{shannon}
  {
  \begin{equation}
    \capT(W_{Y|X})=\max_{P_X \in \cP(\cX)} I(X;Y)
    \,.
  \end{equation}
  }
\end{definition}

\begin{definition}[{Secret Transmission (ST) Code~\cite{Wyner1975wiretap,CsiszarKorner78}}]
  A $(|\mathcal{M}_{\mathsf{ST}}|, \, n \, | \, \lambda^{(n)},\delta^{(n)})$-ST code for a DM-WTC $(W_{Y|X},W_{Z|X})$
  is a $(|\mathcal{M}_{\text{T}}|, \, n \, | \, \lambda^{(n)})$-transmission code which satisfies
  the semantic secrecy criterion
  \begin{align}
    \label{ST.secrecy.requirements}
    \max_{P_M \in \mathcal{P}(\mathcal{M}_{\mathsf{ST}})}I(M;Z^n)
    \le \delta^{(n)}
    .
  \end{align}
  Similarly to transmission codes, a rate $R_{\text{T}}\coloneqq \frac{1}{n}\log{|\mathcal{M}_{\mathsf{ST}}}|$ is achievable
  if there exists a sequence of $(2^{n R_{\mathsf{ST}}}, n \, |\, \lambda^{(n)}, \delta^{(n)})$-ST codes
  such that
  \begin{equation}
  \lim_{n \to \infty} \lambda^{(n)} = \lim_{n \to \infty} \delta^{(n)} = 0.
  \end{equation}
  The ST capacity $\capST(W_{Y|X}, W_{Z|X})$ is the supremum of all achievable rates. It was found that $C_{\mathsf{ST}}$
   is ~\cite{Wyner1975wiretap,CsiszarKorner78}
  \begin{equation}
    \capST(W_{Y|X}, W_{Z|X})
    = \max_{P_{U,X} \in \Pst} I(U;Y) - I(U;Z)
    \,,
  \end{equation}
  where $\Pst = \set{
    P_{U,X}\in \cP(\cU \times \cX)
    : \card \cU \le \card \cX
    ,\, U - X - (Y, Z)
  }$.
\end{definition}

\begin{remark}
  To achieve positive secrect transmssion rates, it can be shown that stochastic encoding is essential for secret transmission codes \cite{Wyner1975wiretap, CsiszarKorner78}.
  But stochastic encoding does not increase the transmission capacity of a DMC~\cite[p.~73]{ElGamalKim2011NetworkInformationTheory}. Thus, stochastic encoding is not necessary for transmission codes without secrecy constraints. A deteminstic mapping $f_n: \mathcal{M_{T}}\rightarrow \mathcal{X}^n$ can be used for encoding while not reducing $\capT$.
\end{remark}

\subsection{Identification Codes, With and Without Secrecy Constraints}

In this subsection we define identification codes with and without secrecy constraints.

\begin{definition}[{Identification (ID) codes (see~\cite{AhlswedeDueck1989Identificationviachannels})}]

A $(|\mathcal{M}_{\text{ID}}|,n\,|\,\lambda_1^{(n)},\lambda_2^{(n)})$-ID code for a DMC $W_{Y|X}$ consists of
the following:
\begin{enumerate}
  \item A message set $\mathcal{M}_{\text{ID}}$,
 \item A stochastic encoder represented by the function $f_n:\mathcal{M}_{\text{ID}}\rightarrow \mathcal{P}(\mathcal{X}^n)$. $f_n$ is governed by a conditional
  probability distribution $P_{X^n|M}$,
  \item a deterministic identifier function
    $g_n : \cY^n \times \cM_{\text{ID}} \to \set{ 0,1 }$. That function defines for message $m$ a decision
    set $\mathcal{I}_{m}$ such that $\mathcal{I}_m = \set{ y^n: g(y^n,m)=1 }$.
\end{enumerate}
As in transmission, $E_m$ is used as a shorthand for $P_{X^n|M=m}$ and represents the conditional probability distribution of the channel input $X^n$ given a particular message $m$.
We usually refer to ID code by the encoder-identifier pairs
  \begin{equation}
  \Big\{\big(E_m,\mathcal{I}_{m}\big)\Big\}_{m_\in\mathcal{M}_{\text{ID}}}.
  \end{equation}
For reliability, the pairs $\set{ E_m, \cI_m }_{m \in \cM_{\text{ID}}}$ must satisfy
\begin{align}
\max\limits_{m\in \mathcal{M}_{\text{ID}}}E_{m}\circ W_{Y|X}^n(\mathcal{I}^c_m)\leq \lambda_1^{(n)},\\
\max\limits_{m\in\mathcal{M}_{\text{ID}},m':m\neq m'}E_{m}\circ W_{Y|X}^n(\mathcal{I}_{m'})\leq \lambda_2^{(n)}.
\end{align}
A rate $R_{\text{ID}}\coloneqq\frac{1}{n}\log\log{|\mathcal{M}_{\text{ID}}|}$ is said to be \textit{achievable} if
there exists a sequence of $(|\mathcal{M}_{\text{ID}}|,n\,
$ $|\,\lambda_1^{(n)},\lambda_2^{(n)})$-ID codes such that
\begin{equation}
\lim_{n\rightarrow\infty}\lambda_1^{(n)}=  \lim_{n\rightarrow\infty}\lambda_2^{(n)}=0 .
\end{equation}
The ID capacity $\capID(W_{Y|X})$ is defined as the supremum of all achievable rates. It was found that $\capID$
is given by~\cite{AhlswedeDueck1989Identificationviachannels,Ahlswede2008Generaltheoryinformation}
\begin{equation}
  \capID(W_{Y|X}) = \capT(W_{Y|X}) =  \max_{P_X \in \cP(\cX) } I(X;Y)
  \,.
\end{equation}
\end{definition}

\begin{definition}[{Secret identification (SID) code~\cite{AhlswedeZhang1995Newdirectionstheory}}]
  A $(|\mathcal{M}_{\text{SID}}|,n\,|\,\lambda_1^{(n)},\lambda_2^{(n)},\delta^{(n)})$-SID code
  for a WTC $(W_{Y|X},W_{Z|X})$
  is a $(|\cM_{\text{SID}}| , n \,|\, \lambda_1^{(n)}, \lambda_2^{(n)})$-ID code for $W_{Y|X}$
  that satisfies the additional semantic secrecy requirement
  \begin{align}
    \max_{P_M\in \mathcal{P}(\mathcal{M}_{\text{SID}})} I(M;Z^n) \leq \delta^{(n)}
    \,,
  \end{align}
  Similar to identification codes,
  a rate $R_{\text{SID}}\coloneqq\frac{1}{n}\log\log|\mathcal{M}_{\text{SID}}|$
  is achievable if there is a sequence of
  $(|\cM_{\text{SID}}|, n \,|\, \lambda_1^{(n)}, \lambda_2^{(n)}, \delta^{(n)})$-SID codes such that
  \begin{equation}
    \lim_{n\rightarrow\infty}\lambda_1^{(n)}
    = \lim_{n\rightarrow\infty}\lambda_2^{(n)}
    = \lim_{n\rightarrow\infty}\delta^{(n)}
    = 0
    .
  \end{equation}
  The secret identification capacity $\capSID(W_{Y|X},W_{Z|X})$
  is defined as the supremum of all achievable rates. It was found that $\capSID$ is given by~\cite[Thm. 1]{AhlswedeZhang1995Newdirectionstheory}
  \begin{align}
    \label{dichotomy}
    \capSID(W_{Y|X},W_{Z|X})=
    \begin{dcases*}
      \capID(W_{Y|X}) & if $\capST(W_{Y|X},W_{Z|X})> 0$, \\
      0               & otherwise.
    \end{dcases*}
  \end{align}
\end{definition}

\begin{remark}
  In a marked contrast to transmission, decision sets for different messages
  are allowed to overlap in identification codes. This is essential to
  achieve the doubly exponential growth of the message set size in the blocklength $n$.
\end{remark}

\subsection{Codes for Approximation of Output Statistics}

In this subsection, we define approximation-of-output statistics (AOS) codes
which serve as  important building blocks in our codes. Similar codes have been
considered in~\cite{Wyner1975commoninformationtwo,HanVerdu1993Approximationtheoryoutput,SudanTyagiWatanabe20,WatanabeHayashi14,
HouKramer2013divergence}, and in \cite[Section~V.A]{SudanTyagiWatanabe20}.

\begin{definition}[Approximation-of-Output Statistics (AOS) Code] \label{AOScodeDef}
  A $(|\cK|,n\,|\,\delta^{(n)})$-AOS code
  for a DMC $W_{Y|X}$ and an i.i.d reference output distribution $Q_Y^n$
  consists of the  following:
  \begin{enumerate}
    \item An index set $\cK$, a block length $n$ and a distortion $\delta^{(n)} > 0$,
    \item an encoder that selects an index $K$ uniformly at random from $\cK$
   , and sends $x^n(K)$, and sends this over $n$ uses of the channel $W_{Y|X}$. The codebook $C_o = \tup{ X^n(1), \dots, X^n(\card{\mathcal{K} })}$ is a collection of the codewords that corresponds to each index $k\in\mathcal{K}$ where each codeword is generated randomly and independently using the a PMF $P_X$.
  \end{enumerate}
   We define the output distribution  $P_{Y^n}$
   \begin{align}
    P_{Y^n}(y^n)= P_{\mathcal{K}}^{\textbf{unif}}\circ P_{X^n|K}\circ W^{n}_{Y|X}(y^n) \nonumber \\
    = \frac{1}{|\mathcal{K}|}\sum_{k} W^n_{Y|X}(y^n|x^n(k))
   \end{align}

  where $P_{X^n|K}(a^n|k) = \ind{a^n = x^n(k)}$, for $a^n \in \cX^n$.
  The KL-divergence between the distributions $P_{Y^n}$ and $Q_Y^n$
  must be bounded by  \begin{equation}
    D(P_{Y^n} \| Q_Y^n) \le \delta^{(n)}
    .
  \end{equation}
  A rate $R_{\text{AOS}} \coloneqq \frac{1}{n}\log \card{\cK_{\text{AOS}}}$
  is achievable if there is a sequence of $(\card{ \cK_{\text{AOS}} }, n \,|\, \delta^{(n)})$ codes
  such that $\lim_{n \to \infty} \delta^{(n)} = 0$.
  The infimum in $\bbR$ of all achievable AOS code rates is denoted by $\mathbf{R^{\star}_{AOS}}(W_{Y|X},Q_Y)$. It was found that $\mathbf{R^{\star}_{AOS}}$ is   given by \cite{WatanabeHayashi14},\cite[Thm. V.1.]{SudanTyagiWatanabe20}
  \begin{equation}
    \mathbf{R^{\star}_{AOS}}(W_{Y|X},Q_Y)=\min_{P_{X}\in \Paos} I(X;Y),
  \end{equation}
  where $\Paos = \set{ P_X \in \cP(\cX) : P_X\circ W_{Y|X} = Q_Y }$.
  Note that $\Paos$ is empty if no $P_X$ can simulate $Q_Y$ via $W_{Y|X}$,
  and $\mathbf{R^\star_{\text{AOS}}}$ is undefined, then, because the infimum is taken in $\bbR$.
  This will be discussed in \cref{remark:undefinedRates}, after the definitions
  of some related codes.
\end{definition}

\subsection{Combining Reliability, Secrecy, and Approximation of Output Statistics}

The object of study in this paper are effectively secret identification (ESID) codes, which combine identification,
secrecy and stealth, i.e., approximation of output statistics, over the attacker's channel.
In the following, we define transmission codes with and without secrecy or stealth,
which we will use to construct ESID codes. The latter are defined in the end of this section.

\begin{definition}[AOS-Transmission (AOST) code] \label{AOSTcodeDef}
  A $(|\cM_{\text{AOST}}|, n \,|\, \lambda^{(n)}, \delta^{(n)})$-AOST code
  for a pair of DMCs $(W_{Y|X}, W_{Z|X})$ and an i.i.d reference output distribution
  $Q_Z^n$ consists of the following:
  \begin{enumerate}
    \item a message set $\cM_{\text{AOST}}$,
    \item a determinstic encoder
      $f_{\text{AOST}} : \mathcal{M}_{\text{AOST}} \to \mathcal{X}^n$,
    \item a decoder $\phi: \cY^n \to \cM_{\text{AOST}}$.
      The decision set for message $m$ is $\mathcal{G}_m=\{y^n: \phi(y^n)=m\}$.
  \end{enumerate}
  We define the output distribution  $P_{Z^n}$
   \begin{align}
    P_{Z^n}(z^n)= P_{\mathcal{M}}^{\textbf{unif}}\circ P_{X^n|M}\circ W^{n}_{Z|X}(z^n) \nonumber \\
    = \frac{1}{|\mathcal{M}|}\sum_{m} W^n_{Z|X}(z^n|x^n(m))
   \end{align}

  where $P_{X^n|M}(a^n|m) = \ind{a^n = x^n(m)}$, for $a^n \in \cX^n$.
  An AOST code must satisfy the following conditions:
  \begin{align}
      \max_{m\in\mathcal{M}_{\text{AOST}}}W_{Y|X}^n(\mathcal{G}^{c}_{m}|x^{n}(m))&\leq \lambda^{(n)}\\
     D(P_{Z^n} \,\|\, Q_Z^n) &\le \delta^{(n)}.
  \end{align}
  A rate $R_{\text{AOST}} \coloneqq \frac{1}{n}\log{|\mathcal{M}_{\text{AOST}}|}$ is
  called achievable if there is
  a sequence of $(|\mathcal{M}_{\text{AOST}}|, n\, |$ $\, \lambda^{(n)}, \delta^{(n)})$ codes such that
  \begin{equation}
    \lim_{n\rightarrow\infty}\lambda^{(n)}
    = \lim_{n\rightarrow\infty}\delta^{(n)}
    = 0
    .
  \end{equation}
  The AOST capacity $\capAOST(W_{Y|X},W_{Z|X},Q_Z)$ is the supremum in $\bbR$ of all achievable rates.
  It is subsequently given in \cref{AOST.Theorem}.
\end{definition}

\begin{definition}
  [{Effectively secret transmission (EST) code \cite{HouKramer14a,HouKramerBloch2017EffectiveSecrecyReliability}}]
  \label{ESTcodes.defn}
  A $(|\mathcal{M}_{\mathsf{EST}}|,n\,|\,\lambda^{(n)}, {\delta}^{(n)})$-EST code
  for a DM-WTC $(W_{Y|X},W_{Z|X})$,
  and an i.i.d reference distribution $Q_Z^n$ is a $(|\mathcal{M}_{T}|, n\,|$ $\,\lambda^{(n)})$-transmission code
  which additionally satisfies the condition\footnote{
    This semantic stealth constraint is equivalent to semantic effective
    secrecy, as explained in~\cref{remark:semanticStealthEquivSemanticEffectiveSecrecy}.
  }
  \begin{align}
   \max\limits_{m}D( E_m \circ W_{Y|X}^n\| \, Q_Z^n)\leq \delta^{(n)}.
  \end{align}
  The capacity $\capEST (W_{Y|X},W_{Z|X},Q_Z)$ is defined as the supremum in $\bbR$ of all achievable rates.
  It was found that $\capEST$ is given by~\cite{HouKramer14a,HouKramerBloch2017EffectiveSecrecyReliability}
  \begin{equation}
    \capEST(W_{Y|X},W_{Z|X},Q_Z)=\max_{P_{U,X}\in\Pest} I(U;Y)-I(U;Z) ,
  \end{equation}
  where $\Pest = \set{
    P_{U,X} \in \cP(\cU, \cX)
    : \card \cU \le \card \cX
    ,\, U - X - (Y, Z)
    ,\, P_X \circ W_{Y|X} = Q_Z
  }$.
\end{definition}

\begin{definition}[Effectively secret identification (ESID) code] \label{ESID.code.defn}
  A $(|\mathcal{M}_{\text{ESID}}|, n \,|\, \lambda_1^{(n)}, \lambda_2^{(n)}, \delta^{(n)})$-ESID code
  for a DM-WTC $(W_{Y|X}, W_{Z|X})$ and a reference output distribution $Q_Z^n$ is a
  $(|\mathcal{M}_{\text{ID}}|, n \,|\, \lambda_1^{(n)}, \lambda_2^{(n)})$-SID code
  that satisfies the condition\footnotemark[\thefootnote]
  \begin{align}
    \max\limits_{m}D(E_m\circ W_{Y|X}^n\| \, Q_Z^n)\leq \delta^{(n)}.
  \end{align}
  A rate $R_{\text{ESID}}\coloneqq\frac{1}{n}
  {\log\log|\mathcal{M}_{\text{ESID}}|}$ is
  called achievable if there exists an infinite sequence of
  $(|\mathcal{M}_{\text{ESID}}|, n\,| \lambda_1^{(n)},\lambda_2^{(n)},\delta^{(n)})$
  ESID codes that satisfies
  \begin{align}
    \lim_{n\rightarrow\infty}\lambda_1^{(n)}= \lim_{n\rightarrow\infty}\lambda_2^{(n)}=
    \lim_{n\rightarrow\infty}\delta^{(n)}=0.
  \end{align}
  The effectively-secrect-identification capacity
  $\capESID(W_{Y|X},W_{Z|X},Q_Z)$ is the supremum in $\bbR$ of all achievable rates.
\end{definition}

\begin{remark}
  \label{remark:undefinedRates}
  If $\Paos$, $\Paost$, or $\Pest$ are empty, i.e.
  no $P_X$ can simulate $Q_Z$ via the channel $W_{Z|X}$, then, equivalently, no rate is achievable,
  because for sufficiently small $\delta^{(n)}$, the KL-divergence of the output
  distribution $P_Z$ and $Q_Z$ is too large.
  This is why we explicitly emphasize that the parent set of all rates is assumed to be $\bbR$,
  which is our implicit assumption for all rates in this paper: it makes suprema and infima undefined
  if the set of achievable rates is empty, because every element of the parent set is an upper and lower bound
  on the empty set and the infimum (resp. supremum) of the empty set is the
  largest (resp. smallest) element of the parent set, in our case $\inf \emptyset = \max \bbR$
  and $\sup \emptyset = \min \bbR$, which both do not exist.
  Thus, the results mirror the intuition: if no rate can be achieved, not even zero, then
  the optimal rate is undefined.
  Thus, the capacity is only zero if the trivial code with only one message can be used to simulate the
  reference distribution.
  In contrast, for transmission and identification, with or without secrecy, rate zero is always achievable:
  For the trivial code, the error probability is always zero, as is the mutual information in the secrecy criterion.
  It seems more conventional to always allow rate zero in the definition of capacities,
  even for AOST, EST (as in \cite{HouKramer14a,HouKramerBloch2017EffectiveSecrecyReliability}) and ESID,
  by implicitly considering the supremal rates in the parent set of all non-negative reals, where
  $\sup \emptyset = \min [0, +\infty) = 0$.
  However, if the capacity were zero in both cases, when the set of achievable rates or only its interior is empty,
  the results could not express the difference between the possibility of a rate-zero transmission/identification
  which still seems innocent to the attacker, and the impossibility of seeming innocent even
  with a useless code without any actual communication.
\end{remark}

\section{Results}
\label{sec:results}

In this section, we present the main results of the paper.
Proofs are omitted here, and they are presented in \cref{sec:esid.dm.proof}.
We begin with a capacity theorem for simultaneous approximation of output
statistics over one branch of a wiretap channel and transmission over the other
(\cref{AOST.Theorem}).
This is a new result in its own right, and used here to prove the
subsequent lower bound for identification with effective secrecy in
\cref{prop:achiev.X}.

\begin{theorem}
\label{AOST.Theorem}
For any $Q_Z \in \cP(\cZ)$ and a pair of channels $(W_{Y|X},W_{Z|X})$
\begin{equation}
\capAOST(W_{Y|X},W_{Z|X},Q_Z)= \max_{P_{X}\in \Paost
 } I(X;Y) ,
\end{equation}
where $\Paost\coloneqq\{P_{X} \in \mathcal{P}_{X}:\, I(X;Y)\geq I(X;Z),\, P_X \circ W_{Z|X}=Q_Z \}$.
\end{theorem}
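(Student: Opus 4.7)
The plan is to prove $\capAOST = \max_{P_X \in \Paost} I(X;Y)$ in two directions: achievability via random coding together with soft-covering at Willie, and a converse that combines Fano's inequality with a continuity argument forcing the time-averaged input distribution into $\Paost$.

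For achievability, I fix $P_X \in \Paost$ with the strict inequality $I(X;Y) > I(X;Z)$ and pick any rate $R \in (I(X;Z), I(X;Y))$. Generating $2^{nR}$ codewords i.i.d.\ from $P_X^n$ and decoding at Bob by joint typicality yields vanishing error because $R < I(X;Y)$. At Willie, since $P_X \circ W_{Z|X} = Q_Z$ and $R > I(X;Z)$, the soft-covering / channel-resolvability lemma (e.g., Cuff, Han--Verd\'u) implies that the expected KL divergence $\expect_{C}[D(P_{Z^n} \| Q_Z^n)]$ over random codebooks vanishes. A standard derandomization via the probabilistic method produces one codebook with both small error at Bob and small $D(P_{Z^n}\|Q_Z^n)$. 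Taking the supremum over $P_X \in \Paost$, and handling the boundary case $I(X;Y)=I(X;Z)$ by a continuity argument on the simplex, gives $\capAOST \geq \max_{P_X \in \Paost} I(X;Y)$.

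For the converse, consider any sequence of AOST codes with $R_n \to R$. Fano's inequality and the memoryless structure give $nR_n \leq \sum_i I(X_i; Y_i) + n\epsilon_n$. Because the encoder is deterministic, $I(X^n; Z^n) \leq H(X^n) \leq nR_n$. The KL chain rule decomposes the stealth constraint as $D(P_{Z^n} \| Q_Z^n) = D(P_{Z^n} \| \prod_i P_{Z_i}) + \sum_i D(P_{Z_i} \| Q_Z) \leq \delta^{(n)}$; the first summand equals $\sum_i H(P_{Z_i}) - H(Z^n)$, which combined with the memoryless decomposition $I(X^n;Z^n) = H(Z^n) - \sum_i H(Z_i|X_i)$ yields
\begin{equation*}
  \sum_i I(X_i; Z_i) \leq I(X^n; Z^n) + \delta^{(n)} \leq nR_n + \delta^{(n)},
\end{equation*}
while the second summand forces each marginal $P_{Z_i}$ to be close to $Q_Z$ on average.

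Setting $\bar{P}_X = \frac{1}{n}\sum_i P_{X_i}$, concavity of mutual information in the input distribution gives $R_n - \epsilon_n \leq I(\bar{P}_X; W_{Y|X})$, and $\bar{P}_Z = \bar{P}_X \circ W_{Z|X}$ converges to $Q_Z$ in KL. The main obstacle, which I expect to be the subtle step, is to show that $\bar{P}_X$ is asymptotically feasible, i.e., $I(\bar{P}_X; W_{Y|X}) \geq I(\bar{P}_X; W_{Z|X}) - o(1)$. For this I would exploit the mixture-entropy identity $I(\bar{P}_X; W_{Z|X}) - \frac{1}{n}\sum_i I(X_i; Z_i) = H(\bar{P}_Z) - \frac{1}{n}\sum_i H(P_{Z_i})$ and argue that both terms on the right converge to $H(Q_Z)$, by applying Pinsker's inequality to each $D(P_{Z_i}\|Q_Z)$ and to $D(\bar{P}_Z\|Q_Z)$ and then aggregating via Cauchy--Schwarz and Fannes-type continuity of entropy on the support of $Q_Z$. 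This yields $I(\bar{P}_X; W_{Z|X}) \leq R_n + o(1)$, so $\bar{P}_X$ asymptotically satisfies both defining conditions of $\Paost$. By compactness of $\mathcal{P}(\mathcal{X})$ a subsequence $\bar{P}_X^{(n_k)}$ converges to some $P_X^* \in \Paost$, giving $R \leq I(P_X^*; W_{Y|X}) \leq \max_{P_X \in \Paost} I(X;Y)$ and completing the converse.
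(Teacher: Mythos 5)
Your proposal is correct, and the achievability part coincides with the paper's proof: i.i.d.\ codebook from $P_X \in \Paost$, joint-typicality decoding at Bob for $R < I(X;Y)$, the soft-covering/resolvability bound for $R > I(X;Z)$ at Willie, and derandomization by Markov's inequality and the union bound. (Both you and the paper leave the boundary case $I(X;Y)=I(X;Z)$ somewhat informal, so no penalty there.) The converse follows the same skeleton — Fano plus concavity gives $R_n \lesssim I(\bar P_X; W_{Y|X})$, the deterministic encoder gives $nR_n \ge I(X^n;Z^n)$, and single-letterizing $D(P_{Z^n}\|Q_Z^n)$ forces $\bar P_Z \to Q_Z$ — but the step establishing $R_n \gtrsim I(\bar P_X; W_{Z|X})$ is executed differently. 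You pass through $\sum_i I(X_i;Z_i)$ and then close the gap $I(\bar P_X;W_{Z|X}) - \tfrac1n\sum_i I(X_i;Z_i) = H(\bar P_Z) - \tfrac1n\sum_i H(P_{Z_i})$ with Pinsker and Fannes-type entropy continuity; this works, but the paper avoids all continuity estimates by using the identity $D(W_{Z|X}\|Q_Z\,|\,P_X) = I(X;Z) + D(P_Z\|Q_Z)$ together with the linearity of $P_X \mapsto D(W_{Z|X}\|Q_Z\,|\,P_X)$, which yields $nR_n \ge I(X^n;Z^n) + D(P_{Z^n}\|Q_Z^n) - \delta^{(n)} \ge n\,I(\bar P_X;W_{Z|X}) + n\,D(\bar P_Z\|Q_Z) - \delta^{(n)}$ in one stroke. (Your own gap term also closes without Fannes, since $H(\bar P_Z) - \tfrac1n\sum_i H(P_{Z_i}) = \tfrac1n\sum_i D(P_{Z_i}\|\bar P_Z) \le \tfrac1n\sum_i D(P_{Z_i}\|Q_Z) \le \delta^{(n)}/n$.) Your explicit compactness/subsequence argument for placing the limit point in $\Paost$ is actually more careful than the paper's "taking the limit."
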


\begin{theorem}
  \label{prop:achiev.X}
  For any $Q_Z \in \cP(\cZ)$,
   $\capESID(W_{Y|X},W_{Z|X},Q_Z)$ satisfies
  \begin{gather}
  \label{eq.ESID.LowerBound}
    \capESID(W_{Y|X},W_{Z|X},Q_Z)
    \ge
      \max_{P_{X}\in \Paost
 }
      I(X; Y),
  \end{gather}
  where $\Paost$ is defined as in \cref{AOST.Theorem}.
\end{theorem}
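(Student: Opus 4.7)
The plan is to combine \cref{AOST.Theorem} with a two-layer Ahlswede--Dueck construction, using the AOST code as the transmission layer and building an identification code on top of it. First, I fix $P_X \in \Paost$ achieving (within $\epsilon > 0$) the maximum in \cref{eq.ESID.LowerBound}, and apply \cref{AOST.Theorem} to obtain a sequence of $(N, n \,|\, \lambda^{(n)}, \delta^{(n)})$-AOST codes with $N = 2^{n(I(X;Y) - \epsilon)}$ codewords $x^n(1), \dots, x^n(N)$, vanishing transmission error $\lambda^{(n)}$ at Bob, and uniform-mixture output at Willie that is $\delta^{(n)}$-close to $Q_Z^n$ in KL-divergence.

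Next, I lift this transmission layer to an identification code in the classical Ahlswede--Dueck fashion. For each ESID message $m \in \mathcal{M}_{\text{ESID}}$, I associate a subset $T_m \subseteq [N]$ of size $|T_m| = 2^{n(I(X;Z) + \epsilon')}$ for some small $\epsilon' > 0$, and define the stochastic encoder $E_m$ as the uniform distribution over $\{x^n(k) : k \in T_m\}$. Bob's identifier first applies the AOST decoder $\phi$ to obtain $\hat K$, then accepts guess $m'$ iff $\hat K \in T_{m'}$, i.e., $\mathcal{I}_{m'} = \phi^{-1}(T_{m'})$. Picking the subsets $T_m$ by a randomized construction as in \cite{AhlswedeDueck1989Identificationviachannels} allows $\frac{1}{n}\log\log|\mathcal{M}_{\text{ESID}}|$ to approach $\frac{1}{n}\log N = I(X;Y) - \epsilon$, matching the claimed lower bound up to vanishing $\epsilon$-terms.

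The verification splits into three parts. Identification reliability (type-I and type-II errors) follows from the standard Ahlswede--Dueck analysis applied to the induced ``almost-noiseless'' channel $K \mapsto \hat K$: type-I error is controlled by $\lambda^{(n)}$ (Bob decodes $K$ correctly whp and $K \in T_m$ by construction), and type-II error by the almost-disjointness of the random subsets $T_m, T_{m'}$. Semantic stealth, the new ingredient, requires that for \emph{every} ID message $m$, the induced distribution $E_m \circ W_{Z|X}^n = \frac{1}{|T_m|}\sum_{k \in T_m} W_{Z|X}^n(\cdot|x^n(k))$ is close to $Q_Z^n$ in KL-divergence. Since $|T_m| = 2^{n(I(X;Z)+\epsilon')}$ exceeds the AOS threshold $I(X;Z)$, each subset by itself acts as an AOS code for $Q_Z^n$ via Willie's channel, so invoking the resolvability view of \cref{AOScodeDef} its output distribution converges to $Q_Z^n$ with high probability over the codebook draw.

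The main obstacle will be enforcing semantic stealth for \emph{all} doubly-exponentially many ID messages simultaneously: the standard AOS guarantee $D(P_{Z^n} \| Q_Z^n) \to 0$ in expectation is too weak, and even an exponentially small failure probability does not survive a union bound over $|\mathcal{M}_{\text{ESID}}|$. What is needed is a soft-covering/channel-resolvability concentration lemma with doubly-exponentially small exceptional probability, obtainable via Chernoff bounds of the kind recalled in \cref{chern} applied to the per-letter output probabilities of $W_{Z|X}^n$ on typical sequences. The condition $I(X;Y) \ge I(X;Z)$ defining $\Paost$ is precisely the budget that makes the subset size $|T_m|$ large enough for stealth to concentrate, while leaving $N/|T_m|$ large enough to carry a doubly-exponential number of ID messages at rate $I(X;Y)$.
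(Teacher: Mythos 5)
Your construction is genuinely different from the paper's: you build the identification layer directly as nearly-disjoint subsets $T_m$ of a single AOST codebook (the original Ahlswede--Dueck recipe), whereas the paper uses the seeded-hashing construction of Ahlswede--Dueck/Ahlswede--Zhang, sending a message-independent seed $M_1$ with an AOST code over $n$ symbols and a short hash $t_m(M_1)$ with an EST code over only $k=\lceil\sqrt n\,\rceil$ further symbols. Your version, however, contains a concrete parameter error that already breaks the rate claim. With $|T_m| = 2^{n(I(X;Z)+\epsilon')}$ and $N = 2^{n(I(X;Y)-\epsilon)}$, the number of size-$|T_m|$ subsets of $[N]$ is at most $\binom{N}{|T_m|} \le 2^{|T_m|\log N}$, so $\log\log|\mathcal{M}_{\text{ESID}}| \le \log|T_m| + \log\log N$, and the identification rate is capped at $I(X;Z)+\epsilon'$, not $I(X;Y)$. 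In the Ahlswede--Dueck counting, the exponent of the double exponential is proportional to the \emph{subset size}, so to reach rate $I(X;Y)$ you must take $|T_m| = \lambda N$ for a constant (or slowly vanishing) $\lambda$; the constraint $I(X;Y)\ge I(X;Z)$ defining $\Paost$ then still places $|T_m|$ above the resolvability threshold $2^{nI(X;Z)}$, except at the boundary $I(X;Y)=I(X;Z)$, which needs separate care.

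The second and more serious gap is the one you flag but do not close: stealth must hold simultaneously for all $|\mathcal{M}_{\text{ESID}}| \approx 2^{2^{nR}}$ subsets. A per-subset soft-covering concentration of the form $\Pr[D(\cdot\|Q_Z^n)>\delta] \le \exp(-2^{n\beta})$ survives the union bound only if $\beta$ exceeds the identification rate $R \approx I(X;Y)$ (and is compatible with the counting constant of the subset family); establishing this is the actual technical core of your route, and \cref{chern} applied to per-letter probabilities does not deliver it off the shelf. The paper's hashing construction is designed precisely to avoid this: the first $n$ symbols have the \emph{same} distribution for every identification message, so stealth there is a single event handled by \cref{AOST.Theorem}, and only the last $\sqrt n$ symbols depend on $m$, drawn from an EST codebook of singly-exponential size, so the stealth and secrecy bounds decompose via the KL chain rule without any doubly-exponential union bound. (Your silence on secrecy is acceptable, since semantic stealth implies semantic secrecy here.) As written, the proposal is therefore incomplete; it could plausibly be repaired along the lines of the covert-identification analysis in \cite{ZhangTan2021CovertIdentificationBinary}, but that requires the missing concentration lemma, not merely \cref{chern}.
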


In \cref{prop:achiev.X}, the constraint $I(X; Y) \ge I(X;Z)$
can be relaxed to $I(U; Y) \ge I(U; Z)$,
by applying~\cref{prop:achiev.X} to the virtual channel
$P_{YZ|U} = P_{X|U}\circ W_{YZ|X}$,
where $U$ can be any auxiliary random variable that
has finite support and satisfies the Markov condition $U - X - (Y,Z)$.
\begin{corollary}
  \label{corollary:achiev.U}
  $\capESID(W_{Y|X},W_{Z|X},Q_Z)$ satisfies:
  \begin{gather}
  \label{eq:ESID.LowerBound.U}
    \capESID(W_{Y|X},W_{Z|X},Q_Z)
    \ge
      \max_{
      P_{U, X} \in\Pesid}
      I(U; Y),
  \end{gather}
 where $\Pesid\coloneqq\{P_{U,X} \in \cP(\cU \times \cX):U - X - (Y,Z), \, I(U;Y)\geq I(U;Z),\, P_X \circ W_{Z|X}=Q_Z \}$,
 and $\cU$ is any finite set.
\end{corollary}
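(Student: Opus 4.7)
The plan is to apply \cref{prop:achiev.X} to the \emph{virtual channel} obtained by prepending an auxiliary stochastic processor with transition law $P_{X|U}$ to the physical channel $W_{YZ|X}$, regarding $U$ as the new input of a fictitious discrete memoryless channel with input alphabet $\mathcal{U}$ and outputs $(Y,Z)$. Fix any $P_{U,X} \in \Pesid$ and define
\begin{align}
  \tilde W_{Y|U}(y|u) &\coloneqq \sum_{x \in \mathcal{X}} P_{X|U}(x|u)\, W_{Y|X}(y|x), \\
  \tilde W_{Z|U}(z|u) &\coloneqq \sum_{x \in \mathcal{X}} P_{X|U}(x|u)\, W_{Z|X}(z|x).
\end{align}
Because the Markov condition $U - X - (Y,Z)$ holds by hypothesis, the mutual informations $I(U;Y)$ and $I(U;Z)$ computed from the joint distribution $P_{U,X} \times W_{YZ|X}$ coincide with those obtained by feeding $P_U$ into the virtual channel. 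The first step is thus to verify that $P_U$ lies in the set $\Paost$ associated with the virtual wiretap channel $(\tilde W_{Y|U}, \tilde W_{Z|U})$ and reference $Q_Z$: the inequality $I(U;Y) \ge I(U;Z)$ is built into the definition of $\Pesid$, and the output-simulation condition $P_U \circ \tilde W_{Z|U} = Q_Z$ is immediate from $P_X(x) = \sum_u P_U(u) P_{X|U}(x|u)$ together with the hypothesis $P_X \circ W_{Z|X} = Q_Z$.

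Next, I would invoke \cref{prop:achiev.X} on the virtual wiretap channel (which is discrete memoryless with finite input alphabet $\mathcal{U}$) and reference $Q_Z$, obtaining a sequence of ESID codes $\{(\tilde E_m, \mathcal{I}_m)\}_{m \in \mathcal{M}_n}$ whose rate $\tfrac{1}{n}\log\log|\mathcal{M}_n|$ approaches $I(U;Y)$ and whose two identification errors and effective-secrecy leakage all vanish. Here $\tilde E_m \in \mathcal{P}(\mathcal{U}^n)$ and $\mathcal{I}_m \subseteq \mathcal{Y}^n$. I would then lift this code to the physical channel by composing the virtual encoder with the memoryless prepended processor: for each $m$, set the new stochastic encoder to $E_m \coloneqq \tilde E_m \circ P_{X|U}^n \in \mathcal{P}(\mathcal{X}^n)$, while keeping the identifier sets $\mathcal{I}_m$ unchanged at Bob.

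The key identity to invoke is $P_{X|U}^n \circ W_{Y|X}^n = \tilde W_{Y|U}^n$, and analogously for $Z$; both follow letter-by-letter from the memorylessness of the physical channel and of the prepended processor. As a consequence, for every $m$ the distribution of $(Y^n, Z^n)$ conditioned on $M=m$ in the lifted physical system coincides exactly with the corresponding distribution in the virtual system. Hence the missed-identification probability $E_m \circ W_{Y|X}^n(\mathcal{I}_m^c)$, the false-identification probability $E_m \circ W_{Y|X}^n(\mathcal{I}_{m'})$ for $m' \ne m$, and the stealth quantity $D(E_m \circ W_{Z|X}^n \,\|\, Q_Z^n)$ are identically equal to their virtual-system counterparts, all of which vanish by construction. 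Taking the supremum over $P_{U,X} \in \Pesid$ yields the lower bound \cref{eq:ESID.LowerBound.U}.

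The argument is essentially bookkeeping; the main thing to be careful about is keeping track of the two probability spaces and verifying that the semantic, worst-case-over-messages nature of the error and stealth criteria survives the lift without degradation. No new probabilistic estimate beyond those already used for \cref{AOST.Theorem,prop:achiev.X} is required, and the role of the finite alphabet $\mathcal{U}$ is merely to guarantee that the virtual channel is a valid DMC to which \cref{prop:achiev.X} may be applied directly.
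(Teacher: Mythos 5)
Your proposal is correct and follows essentially the same route as the paper, which justifies the corollary in one sentence by applying \cref{prop:achiev.X} to the virtual channel $P_{X|U} \circ W_{YZ|X}$; your version merely spells out the bookkeeping (membership of $P_U$ in the virtual channel's feasible set, and the letter-by-letter lifting of the encoder via $P_{X|U}^n$ preserving all error and stealth quantities) that the paper leaves implicit. No discrepancy to report.
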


This lower bound may well also be an upper bound.
However, the proof of this remains elusive, as our following upper bound
has a different form.
\begin{theorem}
  \label{thm:esid.dm}
   $\capESID(W_{Y|X},W_{Z|X},Q_Z)$
  is $0$ if $I(U; Y) < I(U; Z)$ or $P_X \circ W_{Z|X} \ne Q_Z$, for all $P_{U,X}$ such that
  $U - X - (Y,Z)$. Otherwise, it
  satisfies:
  \begin{gather}
  \label{eq:esid.UpperBound}
    \capESID(W_{Y|X},W_{Z|X}, Q_Z) \le
    \max_{
      P_{U, X} \in\Pesid}
    I(X; Y),
\end{gather}
  where $\Pesid$ is defined as in \cref{corollary:achiev.U},
  and $\card\cU \le \card\cX$.
\end{theorem}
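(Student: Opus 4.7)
The plan is to address \Cref{thm:esid.dm} in two separate steps: the degenerate zero-capacity case and the non-trivial upper bound. First, the hypothesis ``$I(U;Y)<I(U;Z)$ or $P_X\circ W_{Z|X}\ne Q_Z$ for every $P_{U,X}$ with $U-X-(Y,Z)$'' is equivalent to ``no $P_X$ satisfies $P_X\circ W_{Z|X}=Q_Z$'': choosing $U$ constant forces $I(U;Y)=I(U;Z)=0$, which falsifies the first alternative and hence compels the second to hold for every $P_X$. Under this hypothesis, compactness of $\mathcal{P}(\mathcal{X})$ and continuity of $P_X\mapsto D(P_X\circ W_{Z|X}\|Q_Z)$ imply that $c\coloneqq\inf_{P_X} D(P_X\circ W_{Z|X}\|Q_Z)>0$. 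For any codeword distribution $E_m$ on $\mathcal{X}^n$ with single-letter input marginals $P_{X_i}$, the KL chain rule \eqref{KLchainrule} together with convexity yields
\begin{equation}
D\bigl(E_m\circ W_{Z|X}^n\,\big\|\,Q_Z^n\bigr)
\;\ge\; \sum_{i=1}^n D\bigl(P_{X_i}\circ W_{Z|X}\,\big\|\,Q_Z\bigr)
\;\ge\; nc,
\end{equation}
contradicting $\delta^{(n)}\to 0$; hence no rate is achievable and $\capESID=0$.

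For the non-trivial upper bound, fix a sequence of ESID codes with rate $R_n=\frac{1}{n}\log\log|\mathcal{M}|\to R$ and vanishing error parameters. The strategy is to extract a single-letter distribution $\bar P_X$ which (i) asymptotically meets the simulation constraint $\bar P_X\circ W_{Z|X}\to Q_Z$, and (ii) bounds the identification rate via $R\le I(\bar X;\bar Y)+o(1)$. For (i), set $\bar P_{Z^n}\coloneqq P^{\mathbf{unif}}_{\mathcal{M}}\circ P_{X^n|M}\circ W_{Z|X}^n$; convexity of KL in its first argument gives $D(\bar P_{Z^n}\|Q_Z^n)\le\max_m D(E_m\circ W_{Z|X}^n\|Q_Z^n)\le\delta^{(n)}$. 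Introducing a time-sharing random variable $T$ uniform on $\{1,\ldots,n\}$ and defining $\bar X\coloneqq X_T$, $\bar Y\coloneqq Y_T$, $\bar Z\coloneqq Z_T$, a standard chain-rule and convexity argument yields $D(P_{\bar X}\circ W_{Z|X}\|Q_Z)\le\delta^{(n)}/n\to 0$. Passing to a subsequence in the compact space $\mathcal{P}(\mathcal{X})$, a limit $P_X^{\star}$ satisfies $P_X^{\star}\circ W_{Z|X}=Q_Z$.

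For (ii), I would apply a resolvability-based identification converse---the Han-Verdu/Hayashi argument or Watanabe's minimax formulation---to bound $R_n\le I(\bar X;\bar Y)+o(1)$ in terms of the same single-letter marginal $\bar P_X$ produced by (i). Combining (i) and (ii) on the subsequential limit yields $R\le I(X^{\star};Y^{\star})$ with $P_X^{\star}\circ W_{Z|X}=Q_Z$; choosing $U$ constant satisfies $I(U;Y)\ge I(U;Z)=0$ trivially, so $(U,X^{\star})\in\Pesid$ with $|\mathcal{U}|=1\le|\mathcal{X}|$, giving $R\le\max_{P_{U,X}\in\Pesid}I(X;Y)$. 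The main obstacle is binding the identification converse to the specific single-letter distribution $\bar P_X$ that witnesses the stealth constraint, rather than an unrelated capacity-achieving distribution: the resolvability/minimax machinery is standard in the unconstrained setting, but the single-letter marginal arising from the stealth chain-rule bound must be tracked coherently through the converse so that the same $\bar P_X$ governs both the stealth constraint and the rate bound.
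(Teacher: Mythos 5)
Your degenerate-case argument is sound (and your observation that the hypothesis, tested with constant $U$, collapses to $\Paos=\emptyset$ is correct; the paper itself only defers this part to Ahlswede--Zhang in a remark, and note that by \cref{remark:undefinedRates} the capacity is then strictly speaking undefined rather than $0$). The single-letterization of the stealth constraint in your step (i) likewise matches \cref{ConverseStealth3}. The gap is step (ii): what you flag as ``the main obstacle'' is the entire content of the converse, and your proposal does not resolve it. The off-the-shelf resolvability/minimax converses you invoke bound $\log\log\card\cM$ by a quantity maximized over \emph{all} of $\cP(\cX^n)$; applied here they give only
\begin{equation}
  R \;\le\; \max_{P_X\in\cP(\cX)} I(X;Y) \;=\; \capT(W_{Y|X}),
\end{equation}
in which the stealth constraint has vanished, because the extremal $P_{X^n}$ produced by that converse has no reason to coincide with the uniform codeword mixture $\bar P_X$ from your step (i). There is no ``tracking'' argument that forces this coincidence.

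The paper's resolution is \cref{lemma:idConverse.oneShot}: it reopens Watanabe's proof and notes that the outer maximization can be restricted to the convex hull $\cE=\set{P_M\circ P_{X^n|M}:P_M\in\cP(\cM)}$ of the code's own encoding distributions, since the pivotal inequality there only evaluates $P_{X^n}\times W_{Y|X}^n$ at mixtures $\frac12(E_m+E_{m'})$, and the minimax exchange only requires the feasible set to be compact and convex. Every element of $\cE$ inherits $D(P_{Z^n}\|Q_Z^n)\le\delta^{(n)}$ by convexity of the divergence in its first argument, so the single-letter constraint $D(P_X\circ W_{Z|X}\|Q_Z)\le\delta^{(n)}/n$ survives into the rate bound and yields $P_X\circ W_{Z|X}=Q_Z$ in the limit. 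Without this sharpening (or an equivalent device), step (ii) fails and the theorem is not proved. Your shortcut of appending a constant $U$ at the end is harmless --- $I(X;Y)$ over $\Pesid$ depends only on the $X$-marginal and constant $U$ trivially satisfies $I(U;Y)\ge I(U;Z)$ --- so the paper's auxiliary-variable machinery (Ahlswede--Zhang plus the Csisz\'ar--K\"orner support lemma) is needed only to phrase the bound over $\Pesid$ with $\card\cU\le\card\cX$, not to obtain the numerical value of the bound.
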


On one hand, this upper bound differs from the lower bound in
\cref{corollary:achiev.U} in the first argument of the mutual information:
the lower bound has $I(U; Y)$ and the upper bound $I(X; Y)$.
One the other hand, while these bounds do not coincide in general, they do so in special cases,
the most interesting of which is, perhaps, the class of more capable channels.

\begin{corollary}
  \label{corollary:esid.dm.moreCapable}
  If $W_{Y|X}$ is more capable than $W_{Z|X}$,
  \begin{gather}
    \capESID(W_{Y|X},W_{Z|X}, Q_Z) =
    \max_{
      P_X\in \Paos}
    I(X; Y),
  \end{gather}
  where $\Paos = \Paost = \set{P_X \in \cP(\cX) : P_X \circ W_{Z|X} = Q_Z }$ refers to an AOS code for the channel $W_{Z|X}$
  (see \cref{AOScodeDef}).
\end{corollary}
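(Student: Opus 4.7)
The plan is to sandwich $\capESID$ between matching upper and lower bounds by invoking \cref{prop:achiev.X} and \cref{thm:esid.dm}, exploiting two simplifications afforded by the more-capable hypothesis.

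For the lower bound, I would apply \cref{prop:achiev.X} directly. Under the more-capable assumption, $I(X;Y) \ge I(X;Z)$ holds for every $P_X \in \cP(\cX)$, so the inequality $I(X;Y) \ge I(X;Z)$ in the definition of $\Paost$ is automatically satisfied, and therefore $\Paost = \set{P_X : P_X\circ W_{Z|X} = Q_Z} = \Paos$. Consequently,
\begin{equation}
\capESID(W_{Y|X},W_{Z|X},Q_Z) \ge \max_{P_X \in \Paost} I(X;Y) = \max_{P_X \in \Paos} I(X;Y).
\end{equation}

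For the upper bound, I would invoke \cref{thm:esid.dm}, which yields $\capESID \le \max_{P_{U,X}\in\Pesid} I(X;Y)$. The key observation is that the objective $I(X;Y)$ depends only on the marginal $P_X$, while the auxiliary constraint $I(U;Y) \ge I(U;Z)$ can always be satisfied by choosing $U$ to be a constant random variable (so that $I(U;Y) = I(U;Z) = 0$), with the Markov condition $U - X - (Y,Z)$ then being trivial. Hence for every $P_X \in \Paos$ there is a corresponding $P_{U,X} \in \Pesid$ with the same marginal, and conversely any $P_{U,X}\in\Pesid$ has marginal $P_X$ in $\Paos$ by the condition $P_X\circ W_{Z|X} = Q_Z$. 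Therefore
\begin{equation}
\max_{P_{U,X}\in\Pesid} I(X;Y) = \max_{P_X \in \Paos} I(X;Y),
\end{equation}
which matches the lower bound, and the desired equality follows.

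There is no substantive obstacle in this argument: the content of the corollary is entirely captured by two structural observations---that more-capable-ness makes the achievability side-condition $I(X;Y) \ge I(X;Z)$ vacuous, and that the auxiliary $U$ in the upper bound of \cref{thm:esid.dm} can be taken trivial because the bound's objective $I(X;Y)$ does not depend on $U$. The corollary thus follows immediately from sandwiching, with no further channel-specific analysis needed. Notably, the second observation makes the upper bound of \cref{thm:esid.dm} always reduce to $\max_{P_X\in\Paos} I(X;Y)$, so the more-capable hypothesis is used only to close the gap on the lower-bound side.
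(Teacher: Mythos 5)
Your proposal is correct and follows essentially the same route as the paper: the lower bound comes from \cref{prop:achiev.X} with $\Paost=\Paos$ under the more-capable hypothesis, and the upper bound from \cref{thm:esid.dm} via the inclusion $\set{P_X : \exists\, P_{U,X}\in\Pesid}\subseteq\Paos$. Your additional observation that a constant $U$ yields the reverse inclusion (so the upper bound of \cref{thm:esid.dm} always equals $\max_{P_X\in\Paos}I(X;Y)$) is correct but not needed for the sandwich; the paper uses only the forward inclusion.
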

\begin{proof}
For more capable channels, $I(X; Y) \ge I(X; Z)$, for all $P_X \in \cP(\cX)$
and hence, $\Paost = \Paos$.
Since $\set{ P_X : \exists\, P_{U,X} \in \Pesid} \subseteq \Paos$, the lower bound in \cref{prop:achiev.X} is at
least as large as the upper bound in \cref{thm:esid.dm},
and hence the bounds are equal.
\end{proof}

\begin{remark}
  The lower bound in \cref{corollary:achiev.U}
  and the upper bound in \cref{thm:esid.dm} coincide only
  for channels where the capacity is achieved
  with $U = X$.
  In \cref{sec:example}, we discuss several examples of more capable
  channels, as well as a pair of reversely degraded channels,
  which is not more capable,
  and for certain channel parameters there is a gap between the
  upper and lower capacity bounds.
  On the other hand, there are parameters for which
  they coincide indeed; thus, the capacity is determined exactly.
\end{remark}
\begin{remark}
  If $\capESID(W_{Y|X},W_{Z|X},Q_Z)$ is zero,
  then effectively secret communication with any positive rate is
  impossible.
  This does not necessarily imply that communication is
  impossible. It can simply mean that the code size grows slower than
  doubly-exponentially in the block length, since we defined the
  rate as $R_{\text{ESID}} = \frac{1}{n} \log\log |\mathcal{M}_{\text{ESID}}|$.
    For example, for covert codes $O(\sqrt n)$ bits can be sent in $n$ channel uses.  However,
  Ahlswede~\cite[Lemmas 4, 5 and Remark 4]{AhlswedeZhang1995Newdirectionstheory}
  proved that if for sufficiently small $\lambda_1^{(n)},\lambda_2^{(n)},\delta^{(n)}$
  and sufficiently large $n$,
  the secrecy condition $I(U; Y) \geq I(U; Z)$ is
  violated for all $U$, then secret communication is impossible,
  hence effectively secret communication as well.
\end{remark}
\begin{remark}
\label{BoundsDefn}
For the rest of the paper
the lower bound in \cref{eq.ESID.LowerBound} is denoted by
$R_{\text{L}}$, the lower bound in \cref{eq:ESID.LowerBound.U} is denoted by $R_{\text{L}, U= X}$, and the upper bound in \cref{eq:esid.UpperBound} is denoted by  $R_{\text{U}}$.
\end{remark}

\section{Examples}
\label{sec:example}

In the following examples, consider $W_{Y|X}$ to be the marginal point-to-point channel
between Alice and Bob, and $W_{Z|X}$ to be the marginal point-to-point channel between
Alice and Willie. The reference distribution at Willie is an i.i.d. distribution $Q_Z^n$.
Whenever capacities in this section refer to this assignment,
we drop the arguments and simply write $\capSID, \capST, \capESID, \dots$.

\subsection{Degenerate Channels}

\begin{figure}
  \centering
  \includegraphics[scale=0.6]{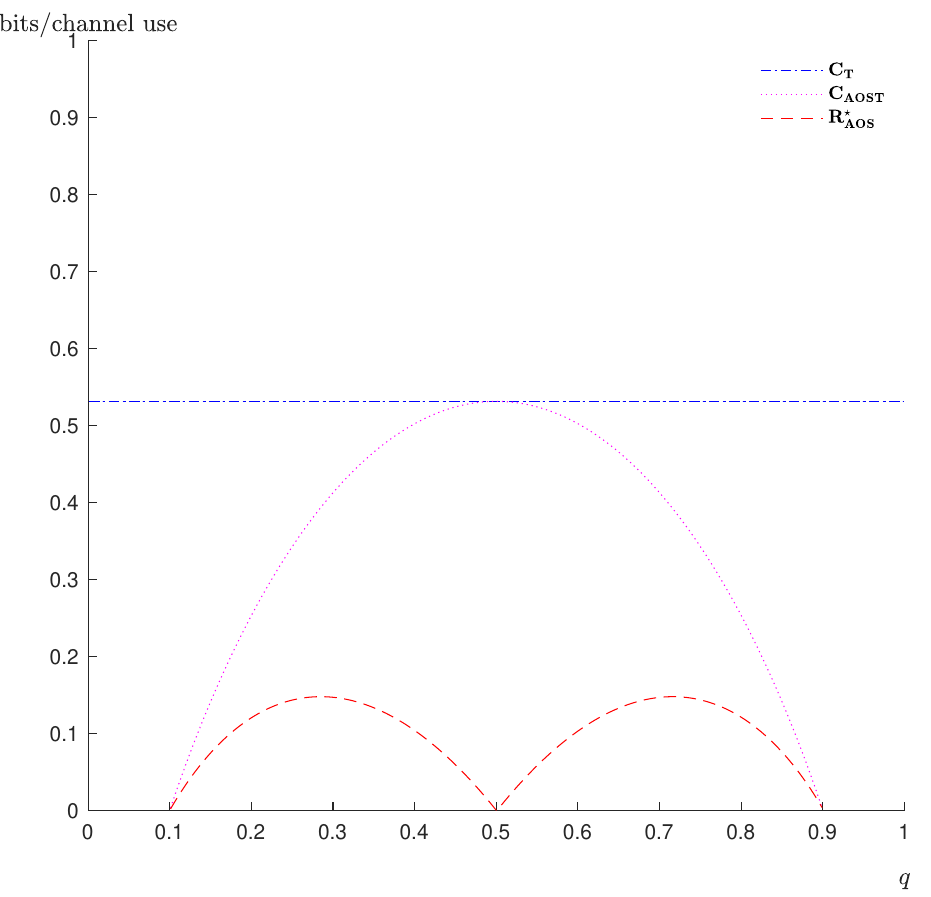}
  \caption{\label{fig:NonFullRank}
  $\capT$, $\capAOST$ and $\textbf{R}^{\star}_{\text{AOS}}$
  of the DM-WTC in \cref{NonFullRankExample} with $\epsilon=0.1$,
  for different values of $q$.
  Note that both $\textbf{R}^{\star}_{\text{AOS}}$
  and $\capAOST$ are undefined for
  $q \in [0,0.1) \cup (0.9,1]$.}
\end{figure}

In this subsection, we consider an example of channels where the consideration of simulating a uniform reference output channel behavior with the least amount of a uniform input randomness can be achieved by sending a particular symbol over the channel, i.e., there is no need to randomize over the channel. While the need to maximize the number of information bits over the channel would entail using a uniform input distribution over a subset of the input symbols which effectively render the channel as a BSC. If the reference output distribution is not uniform, simulating the reference output distribution begins to shift away from using only one fixed symbol and other symbols of the input alphabet have to be used in order to be able to simulate the reference output distribution. While the input distribution needed to both maximize the number of information bits received by Bob and simulate the reference output distribution at Willie has to shift away from the unifrom distribution.

Consider the input alphabet $\cX = \set{ 0, e , 1 }$, output alphabets
$\cY = \cZ = \set{ 0,1 }$, and the channel transition matrices
\begin{gather}
  \label{NonFullRankExample}
  W_{Y|X}= W_{Z|X} =
  \tup{
    \begin{smallmatrix}
      1 - \epsilon \,  &\, \epsilon \\
      1/2 \,  &\, 1/2 \\
      \epsilon \, & \, 1 - \epsilon
    \end{smallmatrix}
  }
  \,.
\end{gather}
We denote $q \coloneqq Q_Z(1)$ and $p \coloneqq P_X(1)$.
It can be noted that although the symbol $e$ is useless for communication,
for certain values of $q$, it can be used to simulate the reference output distribution, by inputting $e$ for a fraction of input symbols instead of communicating,
and relying on the channel to randomize.
Since for pure communication, the channels $W_{Y|X}$ and $W_{Z|X}$
are effectively $\BSC(\epsilon)$, their transmission capacities are given by
\begin{align}
  \capT(W_{Y|X}) = \capT(W_{Z|X}) = 1 - H_b(\epsilon)
  \,.
\end{align}
Furthermore, for $q = \frac 1 2$, the reference distribution can be simulated
by constantly inputting $e$. This strategy is optimal, i.e. it minimizes the randomness in the channel input used to simulate $Q_Z$,
and thus
\begin{align}
  \textbf{R}^{\star}_{\text{AOS}}(W_{Z|X},Q_Z)
  &= \min_{P_X\in \Paos} I(X;Z) \nonumber \\
  &= \min_{P_X\in \Paos} H(Z) - H(Z|X) = 1 - 1 = 0
  \,.
\end{align}
In general, $\textbf{R}^{\star}_{\text{AOS}}(W_{Z|X}, Q_Z)$ is a function of $q$,
and $P_X$ exactly simulates $Q_Z$ over $W_{Z|X}$ if and only if
\begin{align}
  q
  &= P_Z(1) \nonumber \\
  &= P_X(1) W_{Z|X}(1|1) + P_X(e) W_{Z|X}(1|e) + P_X(0) W_{Z|X}(1|0) \nonumber \\
  &= p\, (1 - \epsilon) + \frac{p_e}{2} + (1 - p - p_e)  \epsilon
  \,.
\end{align}
Equivalently, we can write
\begin{equation}
2 p + p_e = \frac{q-\epsilon}{\frac{1}{2}-\epsilon}
\,.
\label{SimulationCondition1}
\end{equation}
Since $0 \le \min \set{ p, p_e } \le p + p_e \le 1$ and thus
$0 \le \frac{q-\epsilon}{1/2 - \epsilon} \le 2$,
it follows that
\begin{gather}
  \min\set{ \epsilon, 1 - \epsilon } \le q \le \max\set{ \epsilon, 1 - \epsilon }
  \,.
  \label{SimulationConditionExistential}
\end{gather}
This is a necessary and sufficient condition for the existence of a $P_X$ that simulates $Q_Z$ over $W_{Z|X}$.
If the condition is violated, e.g. if $q < \epsilon < \frac{1}{2}$,
then $\textbf{R}^{\star}_{\text{AOS}}(W_{Z|X})$ is undefined,
as simulation of $Q_Z$ is impossible with any rate $0 \le R \le \infty$.
Otherwise, for any $q$ satisfying \cref{SimulationConditionExistential},
there exist $p, p_e$ satsifying

\begin{align}
  \textbf{R}^{\star}_{\text{AOS}}(W_{Z|X}, Q_Z)
  &= \min_{P_X \in \Paos} I(X;Z) \nonumber \\
  &= \min_{P_X \in \Paos} H(Z) - H(Z|X) \nonumber \\
  &= H_b(q) - p H(Z|X=1) - p_e H(Z|X=e)
  \twocol{\nonumber \\ &\quad}
     - (1 - p - p_e) H(Z|X=0) \nonumber \\
  &= H_b(q) - (1 - p_e) H_b(\epsilon) - p_e H_b(\frac{1}{2}) \nonumber \\
  &= H_b(q) - H_b(\epsilon) - p_e(1 - H_b(\epsilon)) \nonumber \\
  &= H_b(q) - H_b(\epsilon)
    - \tup{ \frac{q - \epsilon}{\frac{1}{2} - \epsilon} - 2p }
      (1 - H_b(\epsilon))
    \nonumber \\
  &= H_b(q) - H_b(\epsilon) - \frac{q - \epsilon}{\frac{1}{2} - \epsilon}(1 - H_b(\epsilon))
  \,,
\end{align}
where the last equality holds since $p$ can be chosen to minimize the mutual information, and the left-hand side expression
is minimized by $p = 0$.

For AOST, we can derive an expression for the capacity in a similar way,
but maximizing the mutual information.
Hence, for every $q$ satisfying \cref{SimulationConditionExistential},
there exists $\tilde p \in [0,1]$ such that
\begin{align}
  \capAOST
  &= \max_{P_X \in \Paost} I(X; Y) \nonumber \\
  &= \max_{P_X \in \Paost} H(Y) - H(Y|X) \nonumber \\
  &= H_b(q) - \tilde p H(Z|X=1) - \tilde p_e H(Z|X=e)
  \twocol{\nonumber \\ &\quad}
     - (1 - \tilde p - \tilde p_e) H(Z|X=0) \nonumber \\
  & = H_b (q) - H_b(\epsilon)
  \,,
\end{align}
where $\tilde p_e = 0$.
Since AOST involves AOS, if \cref{SimulationConditionExistential}
is violated, $\capAOST$ is also undefined.
In \cref{fig:NonFullRank}, $\textbf{R}^{\star}_{\text{AOS}}(W_{Z|X})$ is visualized for $\epsilon=0.1$ and different values of $q$. It is contrasted with $\capAOST(W_{Y|X},W_{Z|X})$ as well as $\capT(W_{Y|X})$. I We note it is almost trivial to simulate the output reference around $q=\frac{1}{2}$, and that $\textbf{R}^{\star}$ is undefined as $q\in[0,\epsilon]\cup[1-\epsilon, 1]$.

\subsection{More Capable Channels}

Here, we present three examples of more capable DM-WTCs, where $\capESID$ is fully determined. We contrast the behavior of $\capESID$ with the channel capacities of the same DM-WTC for several communications tasks.

\subsubsection{Two Binary Symmetric Channels}
\label{TwoBSC}

\begin{figure}
  \centering
  \includegraphics[scale=0.6]{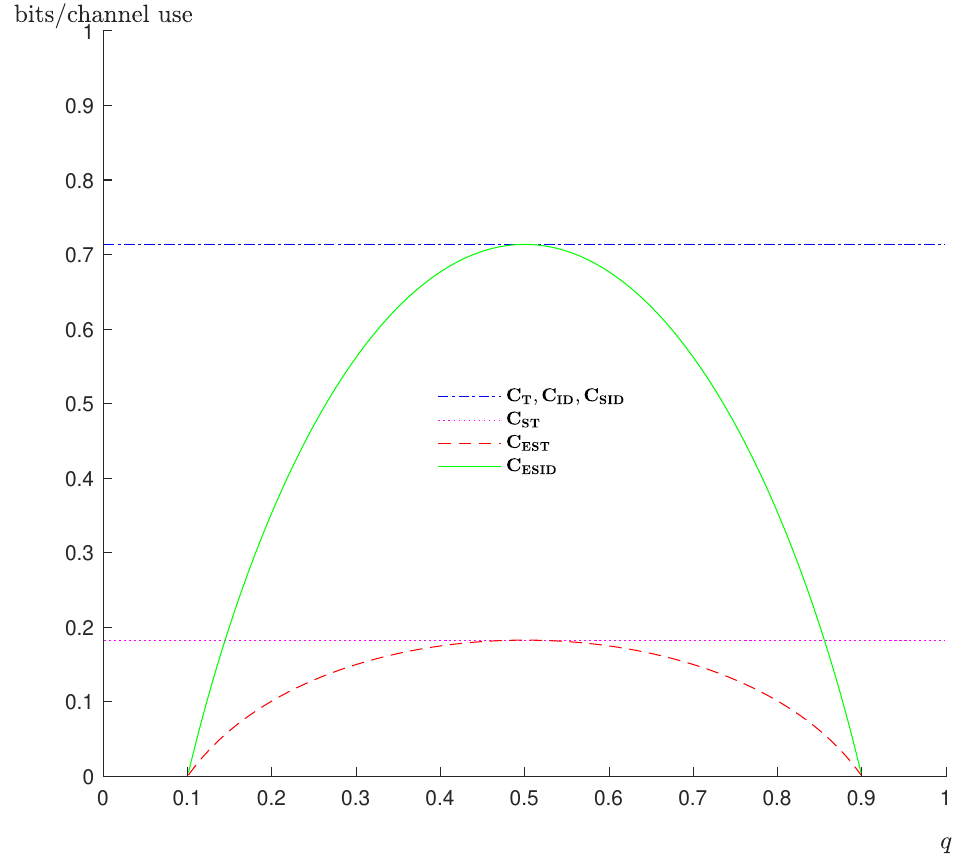}
  \caption{\label{fig:BSCs}
    A comparison between different capacities behavior with changing $q$ for DM-WTC in \cref{TwoBSC} with $\epsilon_1=0.05$ and $\epsilon_2=0.1$.}
\end{figure}

For arbitrary $\epsilon_1, \epsilon_2 \in [0,\frac 1 2]$,
consider a DM-WTC $(W_{Y|X}, W_{Z|X}) = (\BSC(\epsilon_1), \BSC(\epsilon_2))$.
Hence, the input and output alphabets are $\cX = \cY = \cZ = \set{0,1}$.
As above, let $q \coloneqq Q_Z(1)$ and $p \coloneqq P_X(1)$.
We restrict our attention to the case where $\epsilon_1 \leq \epsilon_2$,
i.e., more capable DM-WTCs.
Without secrecy constraints, the identification capacity is given by
\begin{align}
  \capID(W_{Y|X})=\capT(W_{Y|X})
  &= \max_{P_X \in \cP(\cX)} I(X;Y)
   \twocol{\nonumber \\ &}
    = 1-H_b(\epsilon_1)
  \,.
\end{align}
In comparison, the secret-transmission capacity is given by
\begin{align}
  \capST
  &= \max_{P_X \in \cP(\cX)} I(X; Y) - I(X; Z) \nonumber \\
  &=\max_p [H_b(p \star \epsilon_1) - H_b(\epsilon_1)] - [H_b(p \star \epsilon_2) - H_b(\epsilon_2)] \nonumber \\
  & = [1 - H_b(\epsilon_1)] - [1 - H_b(\epsilon_2)] \nonumber \\
  &= H_b(\epsilon_2) - H_b(\epsilon_1)
  \,,
\shortintertext{and the secret identification capacity is}
  \capSID
  &= \max_{P_X \in \mathcal{P}(\mathcal{X}) } I(X;Y)
  \twocol{\nonumber \\ &}
   = 1 - H_b(\epsilon_1)
  \,.
\end{align}

Regarding $\capAOST, \capEST$ and $\capESID$, it is essential to distinguish between two cases.
First, if $q < \epsilon_2$ or $q > 1 - \epsilon_2$,
then the reference distribution $Q_Z$ cannot be simulated, as
\begin{align}
  \set{ p : p \in [0,1],\, q = p \star \epsilon_2 } = \emptyset
  \,,
\end{align}
where we use the notation $p \star \epsilon_2 \coloneqq p (1-\epsilon_2) + \epsilon_2 (1-p)$.
Accordingly,
$\mathbf{R^\star_{\text{AOS}}} (W_{Z|X})$, $\capAOST$, $\capEST$ and $\capESID$ are undefined.
Second, if $\epsilon_2 < q < 1 - \epsilon_2$,
the reference distribution $Q_Z$ can be simulated,
using some $p$ that is uniquely determined by $q$.
Therefore,

\begin{align}
  \capEST
  &= \max_{ P_{X}\in\Pest} I(X ; Y) - I(X; Z) \nonumber \\
  &= \brack{ H_b(p\star\epsilon_2) - H(\epsilon_1) } - \brack{ H_b(q)-H_b(\epsilon_2) }.
\end{align}
Furthermore,
\begin{align}
  \capAOST = \capESID
  &= \max_{P_X \in \Pest} I(X;Y) \nonumber \\
  &= H_b(p\star\epsilon_2)-H(\epsilon_1)
  \,.
\end{align}
In contrast, it holds that $\mathbf{R^\star_{AOS}} (W_{Z|X}) = H_b(q) - H_b(\epsilon_2)$.
Notice that if $W_{Z|X}$ is strictly more capable than $W_{Y|X}$,
  i.e., if $\epsilon_1 > \epsilon_2$,
  then $\capESID$, $\capEST$ and $\capAOST$ do not exist,
  since $\Paost$
\begin{align}
  \capESID = \capEST =\capAOST = \capSID = \capST = 0 \,.
\end{align}

\subsubsection{Two Binary Erasure Channels}
\label{TwoBEC}

\begin{figure}
  \centering
  \includegraphics[scale=0.6]{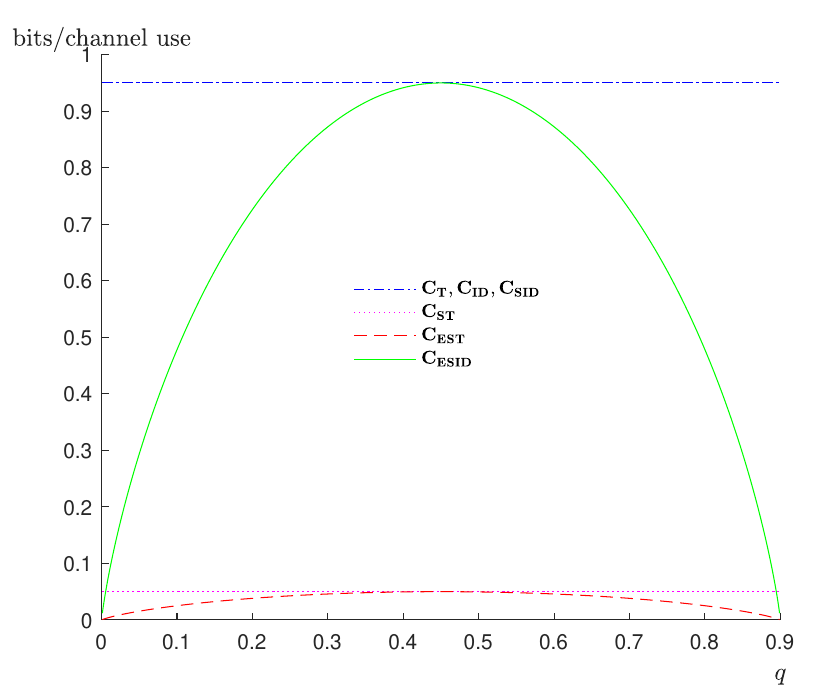}
  \caption{\label{fig:BECs}
    A comparison between different capacity functions with changing $q$ for DM-WTC in \cref{TwoBEC} with $\epsilon_1=0.05$ and $\epsilon_2=0.1$.}
\end{figure}

Consider a DM-WTC $(W_{Y|X}, W_{Z|X})$ where $W_{Y|X}=\BSC(\epsilon_1)$ and $W_{Z|X}=\BEC(\epsilon_2)$. The input alphabet is $\mathcal{X}=\{0,1\}$, and the output alphabets $\mathcal{Y}=\mathcal{Z}=\{0,1,e\}$. $W_{Y|X}=\BEC(\epsilon_1)$. Furthermore, $W_{Z|X}=\text{BEC}(\epsilon_2)$, where $\epsilon_1, \epsilon_2\in[0,1]$.
We denote $Q_Z(1)=q,\, P_X(1)=p$, where $p\in[0,1]$ and $q\in[0,1-\epsilon_2]$
as entailed by the channel. Without secrecy constraints, transmission and
identification capacities are given by
\begin{align}
\capID(W_{Y|X})=\capT(W_{Y|X}) &=\max_{P_X \in \mathcal{P}_\mathcal{X}}I(X;Y) \nonumber \\
&=1-\epsilon_1,
\end{align}
for all possible combinations of $\epsilon_1,\epsilon_2$ and $q$.
With secrecy constraints, the secret transmission capacity is calculated as follows:
\begin{align}
\capST & = \max_{P_X \in \mathcal{P} (\cX)} I(X;Y) - I(X;Z) \nonumber \\
 &=\epsilon_2 - \epsilon_1 .
\end{align}
Calculating $\capSID$, we notice that it is positive as $\capST$ is positive, and
\begin{align}
\capSID &= \max_{P_X \in \mathcal{P}(\mathcal{X})}\, I(X;Y) \nonumber \\
&=1-\epsilon_1.
\end{align}

Regarding $\mathbf{R}_{\text{AOS}}^{\star},  \capAOST, \capEST$ and $\capESID$, in contrast to the previous example of two BSCs, all possible output distributions that correspond with
$q\in[0:1-\epsilon_2]$ could be simulated here. Now,
\begin{align}
\mathbf{R}_{\text{AOS}}^{\star}(W_{Z|X}) & = \min_{P_X\in \Paos} I(X ; Z) \nonumber \\
& = H(Z) - H(Z|X) \nonumber \\
& = H( p (1 - \epsilon_2) , (1-p) (1 - \epsilon_2), \epsilon_2) - H_b (\epsilon_2) \nonumber \\
& = [H_b \big(\frac{q}{1 - \epsilon_2} \big)(1 - \epsilon_2 ) + H_b(\epsilon_2)]
- H_b(\epsilon_2) \nonumber \\
& = H_b \big(\frac{q}{1 - \epsilon_2} \big) (1 - \epsilon_2 ).
\end{align}
For $\capEST$ it holds
\begin{align}
\capEST &=\max_{P_{X}\in\Paos}{I(X;Y)-I(X;Z)} \nonumber \\
&= H_b(\frac{q}{1-\epsilon_2})(1-\epsilon_1)-H_b(\frac{q}{1-\epsilon_2})(1-\epsilon_2) \nonumber \\
&= H_b(\frac{q}{1-\epsilon_2})(\epsilon_2-\epsilon_1).
\end{align}
Furthermore, for $\capAOST$ and $\capESID$ we have
\begin{align}
\capAOST= \capESID&=\max_{P_{X}\in\Paos}I(X;Y) \nonumber \\
&=H_b\big(\frac{q}{1-\epsilon_2}\big)(1-\epsilon_1).
\end{align}

Notice that if $(W_{Y|X}, W_{Z|X})$ is not more capable DM-WTC, i.e., $\epsilon_2 \geq \epsilon_1$, then
\begin{align}
\capESID = \capEST =\capAOST =\capSID = \capST =  0 .
\end{align}

\subsubsection{A Binary Erasure Channel and a Binary Symmetric One}
\label{MixedExample}

\begin{figure}
  \centering
  \includegraphics[scale=0.6]{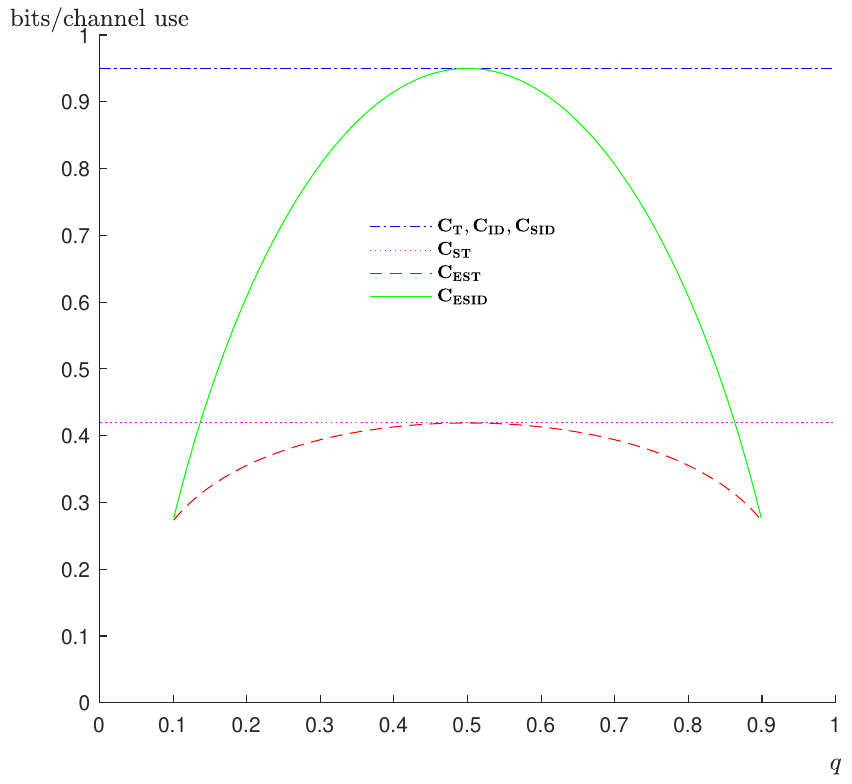}
  \caption{\label{fig:BEC+BSC}
    A comparison between different capacities behavior with changing $q$ for DM-WTC in \cref{MixedExample} with $\epsilon_1=0.05$ and $\epsilon_2=0.1$. Note that $\capEST(q)$ and $\capESID(q)$ are undefined over the intervals $q\in [0,0.1)$ and $q\in(0.9,1]$. }
\end{figure}

For $\epsilon_1\in[0,1]$ and $\epsilon_2\in[0,\frac 1 2]$,
consider a DM-WTC $(W_{Y|X}, W_{Z|X})$,
where $W_{Y|X}=\text{BEC}(\epsilon_1)$ and $W_{Z|X}=\text{BSC}(\epsilon_2)$.
The input alphabet is $\mathcal{X}=\{0,1\}$, and the output alphabets are
$\mathcal{Y}=\{0,1,e\}$ and $\mathcal{Z}=\{0,1\}$.
We denote $P_X(1)=p,\, Q_Z(1)=q$,\, where $q\in[0,1]$ as entailed by the channel structure.
We restrict our attention to the case $H_b(\epsilon_2) > \epsilon_1$,
where the DM-WTC is strictly more capable \cite[p. 87]{BlochBarros2011phys_sec_book}.
It holds that
\begin{align}
  \capID(W_{Y|X})=\capT(W_{Y|X})&= \max_{P_X \in \cP(\cX)} I(X; Y) \nonumber \\
  &=1-\epsilon_1.
\end{align}
Considering the secrecy capacities we have
\begin{align}
  \capST & = \max_{P_X \in \cP(\cX)} I(X; Y) - I(X ; Z) \nonumber \\
  & = \max_p \brack{ H_b(p)(1-\epsilon_1)-(H_b(p\star\epsilon_2)-H_b(\epsilon_2)) } \nonumber \\
  & = H_b(\epsilon_2) - \epsilon_1
  \,,
\end{align}
where $\capST$ is strictly positive for more capable DM-WTC. Furthermore,
\begin{align}
  \capSID & = \max_{P_X \in \mathcal{P}(\cX)} I(X ; Y) \nonumber \\
  & = 1-\epsilon_1 .
\end{align}
Calculating $\capEST$ we get,
\begin{align}
  \capEST &=\max_{P_{X}\in\Paos}{I(X;Y)-I(X;Z)} \nonumber \\
  &= H_b\big(\frac{q}{1-\epsilon_2}\big)\big(1-\epsilon_1\big) - \big(H_b(q) - H(\epsilon_2)\big).
\end{align}
Furthermore, we get for $\capAOST$ and $\capESID$,
\begin{align}
  \capAOST= \capESID&=\max_{P_{X}\in\Paos}I(X;Y) \nonumber \\
  &=H_b\big(\frac{q}{1-\epsilon_2}\big)\big(1-\epsilon_1\big).
\end{align}
\begin{remark}
  \cref{fig:BEC+BSC} shows that at the cutoff values of
  $q=\{\epsilon_1,1-\epsilon_1\}$, the capacities
  $\capESID$ and $\capEST$ are strictly positive.
  This is in contrast to \cref{TwoBEC}, where \cref{fig:BSCs} shows
  that $\capESID=\capEST=0$ at the cutoff values of
  $q=\{\epsilon_1,1-\epsilon_1\}$.
\end{remark}

\subsection{Two Parallel, Reversely Degraded Subchannels}
\label{ReverseDegExamp}

\begin{figure}
    \centering
    \exampleChannelESID[xscale=1.1,yscale=0.5][1]
    \caption{\label{fig:revDegraded}
      Structure of DM-WTC in \cref{ReverseDegExamp} example.
      }
\end{figure}

\begin{figure}
  \centering
  \includegraphics[scale=0.6]{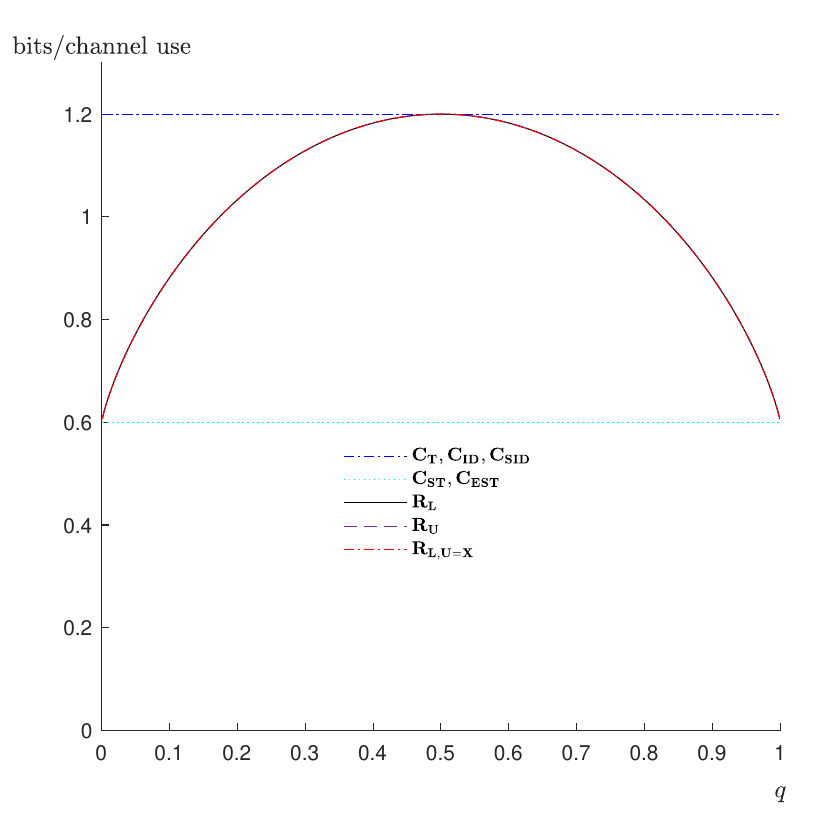}
  \caption{\label{fig:ReverseA}
    A comparison between different transmission and identification capacities as well as lower and upper bounds on $\capESID$  for $\epsilon=0.4$ for \cref{ReverseDegExamp}}
\end{figure}
\begin{figure}
  \centering
  \includegraphics[scale=0.6]{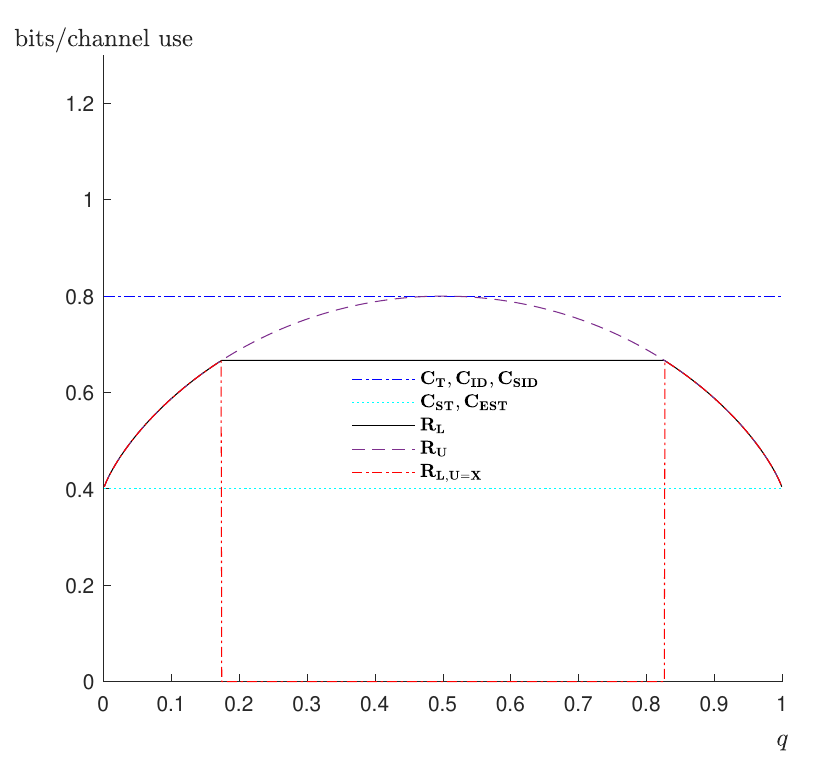}
  \caption{\label{fig:ReverseB}
    A comparison between different transmission and identification Capacities as well as lower and upper bounds on $\capESID$  for $\epsilon=0.6$ for \cref{ReverseDegExamp}. Note that for $\{q: H_b(q)\leq \frac{1}{\epsilon}\}$ randomization at the encoder is needed for attaining a positive effectively secure identification rate. And for our scheme, even with randomization, the achievable rate saturates as the value of $q$ changes.}
\end{figure}
\begin{figure}
  \centering
  \includegraphics[scale=0.6]{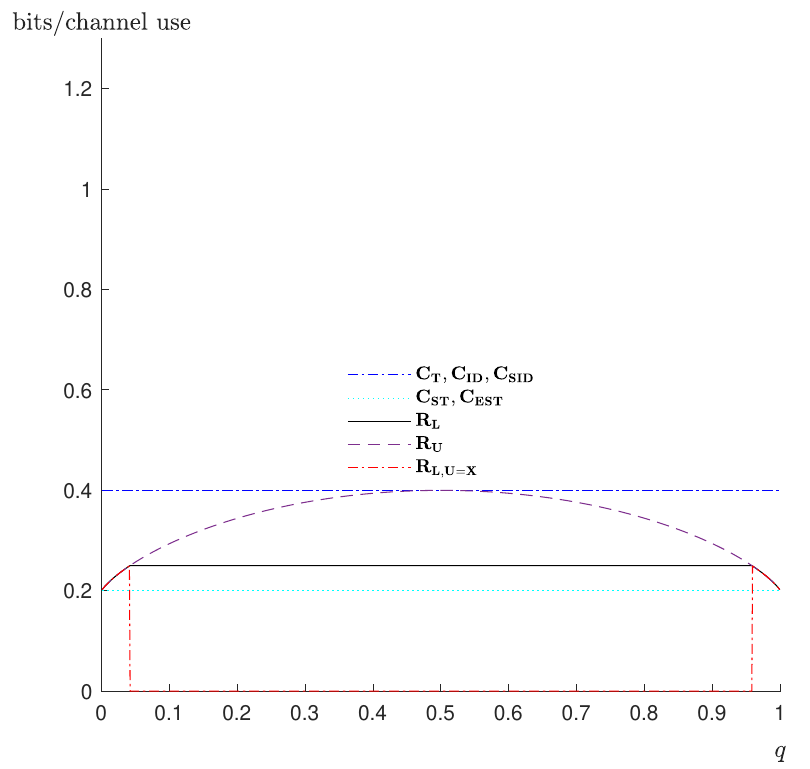}
  \caption{\label{fig:ReverseC}
    A comparison between different transmission and identification Capacities as well as lower and upper bounds on $\capESID$  for $\epsilon=0.8$ for \cref{ReverseDegExamp}. The same note regarding the necessity of randomization applies as in \cref{fig:ReverseB}. }
\end{figure}

In this example, we demonstrate the gap between the lower and upper
bounds on $\capESID$ as stated in
\cref{prop:achiev.X}, \cref{corollary:achiev.U} and \cref{thm:esid.dm}.
Furthermore, we contrast the behavior of these bounds with the channel
capacities of the same DM-WTC for several communication tasks.
Consider the following DM-WTC with two parallel subchannels,
$W_{Y_1|X_1} = \BEC(\epsilon)$ and $W_{Y_2|X_2} = \BEC(\epsilon)$,
as shown in \cref{fig:revDegraded}.
The input alphabet at Alice is
$\cX = \cX_1 \times \cX_2 = \set{ 0,1 } \times \set{ 0,1 }$,
the output alphabet at Bob is
$\cY = \cY_1 \times \cY_2 = \set{ 0,e,1 } \times \set{ 0,e,1 }$,
and the binary output alphabet at Willie is $\cZ = \set{ 0,1 }$.
Alice sends $X = (X_1, X_2)$, where each of $X_1$ and $X_2$ is sent over the
respective subchannel,
and Bob observes both channel outputs, i.e., he receives $Y=(Y_1,Y_2)$.
On the other hand, Willie receives $Z = X_2$, i.e.,
his observation of $X_2$ is noiseless but he has no information about $X_1$.
Equivalently, one could set $Z = (Z_e, X_2)$,
where $Z_e$ is a completely noisy version of $Y_1$ as sent over
$W_{Z_e|Y_1}=\BEC(1)$.
In this model, it is clear that $W_{Y_1,Z_e| X_1}$ is a stochastically degraded channel,
where $Z_e$ is a degraded version of $Y_1$.
Likewise, $W_{Y_2,Z| X_2}$ is a stochastically degraded channel as well,
but here $Y_2$ is a degraded version of $Z = X_2$.
Hence, it the two subchannels are reversely degraded with respect to each other.

In the following, we denote $P_{X_1}(1)=p_1,\, P_{X_2}(1)=p_2\, , Q_Z(1)=q$. Calculating $\capID (W_{Y|X})$,
$\capT(W_{Y|X})$, we get
\begin{align}
  \capID=\capT
  &=\max_{ P_X \in \cP(\cX)} I(X;Y) \nonumber \\
    & = \max_{P_{X_1}\in \mathcal{P}(\cX_1)} I(X_1;Y_1) + \max_{P_{X_2} \in \cP(\cX_2)} I(X_2;Y_2) \nonumber \\
  & = \max_{p_1}\, H_b(p_1)(1-\epsilon)+\max_{p_2}H_b(p_2)(1-\epsilon) \nonumber
  \\& = 2\, (1-\epsilon).
\end{align}

Computing $\capST$ we get
\begin{align}
\capST=&\max_{P_{U,X}\in\Pst} I(U;Y)- I(U;Z) \nonumber \\
\overset{(a)}{=}&\max_{P_{X_1} \in \cP(\cX_1)} I(X_1;Y_1)+\max_{P_{U_2,X_2} \in \mathcal{P}_{\mathsf{ST},2}}\brack{I(U_2;Y_2)-I(U_2;Z)} \nonumber  \\
\overset{(b)}{=}& \, (1-\epsilon)+ \max_{p_2, \delta} \big[ ( H_b(p_2\star\delta)-H_b(\delta))(1-\epsilon)-(H_b(p_2\star\delta)-H_b(\delta) \big] \nonumber  \\
\overset{(c)}{=} & \, 1-\epsilon,
\end{align}
where (a) holds by defining $\mathcal{P}_{\mathsf{ST}, 2} \coloneqq\{P_{U_2,X_2}\in\mathcal{P}_{U_2, X_2}: U - X -(Y, Z)\}$; (b) holds by introducing the possibility of randomization through preprocessing via
 a $\BSC(\delta)$, for $\delta \in [0,1]$; and (c) holds as there is no choice of $p_2, \delta \in [0,1]$ that renders the difference between the two terms in the maximization problem to be positive. Thus, $p_2 = \delta = 0$ could be chosen.

We compute $\capSID$ using \ref{dichotomy},
\begin{align}
\capSID=\capID &= \max_{P_X \in \mathcal{P}_{\text{SID}}} I(X; Y) \nonumber \\
&=2\,(1-\epsilon) .
\end{align}

Computing $\capEST$ we get
\begin{align}
\capEST & =\max_{P_{U,X}\in\Pest} I(U;Y)- I(U;Z) \nonumber \\
& \overset{(a)} {=}\max_{P_{X_1} \in \mathcal{P}(\cX_1)} I(X_1;Y_1)+\max_{P_{U_2,X_2 \in \mathcal{P}_{\mathsf{EST},2}}}\brack{I(U_2;Y_2)-I(U_2;Z)} \nonumber  \\
  & \overset{(b)} {=} (1-\epsilon)+ \max_{\delta} \, \large[(H_b(q)-H_b(\delta))(1-\epsilon)- ( H_b(q)-H_b(\delta))\large]
  \nonumber \\
  & \overset{(c)}{=} 1-\epsilon \,,
\end{align}
where (a) holds by defining
\[\cP_{\mathsf{EST}, 2} \coloneqq \set{
  P_{U_2,X_2} \in \cP(\cU_2, \cX_2) :
  U - X - (Y, Z),
  P_{X_2} \circ W_{Y_2|X_2} = Q_Z
},\]
as $q$ is entailed by the output reference distribution; (b) holds as the
choices of $\delta=q$ and $U\sim \text{Bern}(0)$ achieve capacity;
and (c) follows by choosing $\delta = q$.

Tackling the lower and upper bounds, we have on $\capESID$. We have to check the implications of the condition $I(U;Y)\geq I(U;Z)$ for a probability distribution $P_{U,X}$ to be within the feasible set $\Pesid$:
\begin{align}
 \Pesid & \coloneqq\{P_{U,X} \in \cP(\cU \times \cX):U - X - (Y,Z), \, I(U;Y)\geq I(U;Z),\, P_X \circ W_{Z|X}=Q_Z \} \nonumber \\
& = \{P_{U,X} \in \cP(\cU \times \cX):U - X - (Y,Z), \, I(U_1;Y_1) +I(U_2 ; Y_2)\geq I(U_2 ;Z), \nonumber \\
& \, \, \  \, \, \, P_X \circ W_{Z|X}=Q_Z \}.
\label{simul.cond}
\end{align}

Therefore, we have
\begin{align}
  (1-\epsilon)+(H_b(q)-H_b(\delta))(1-\epsilon)&\geq H_b(q)-H_b(\delta)
  \,,
\end{align}
and it follows that
\begin{equation}
  H_b(q)-H_b(\delta)\leq \frac{1}{\epsilon}-1
  \,.
  \label{ConditionReverse}
\end{equation}
If there is no preprocessing and in the absence of additional randomization, i.e., for $\delta=0$ and $U=X$,
the condition reduces to
\begin{equation}
 H_b(q)\leq \frac{1}{\epsilon}-1.
\end{equation}
We note that \cref{ConditionReverse} is satisfied if $\epsilon \le
\frac 1 2$, and also if $q$ satisfies the last inequality.
In this case, one can omit any preprocessing by setting $\delta = 0$.
In the case when $\epsilon > \frac 1 2$ and $H_b(q) > \frac 1 \epsilon - 1$,
preprocessing is needed to satisfy \cref{ConditionReverse}, i.e., $\delta$ must
be strictly larger than $0$.
Hence, without preprocessing, simulating $Q_Z$ is impossible for certain combinations of $\epsilon$ and $q$.

Computing $R_{\text{L}}$, as defined in \cref{BoundsDefn}, we get the lower bound on $\capESID$ by:
\begin{align}
\label{RL.ineq} \capESID &\geq R_L = \max_{P_{U,X}\in\Pesid} I(U;Y)  \\
&\overset{(a)}{=}\max_{P_{X_1}\in\cP(\cX_1)} I(X_1;Y_1)+\max_{P_{U_2,X_2}\in\Pesid} I(U_2;Y_2)  \nonumber \\
&\overset{(b)}{=} (1-\epsilon)+ \max_{P_{U_2,X_2}\in\mathcal{P}_{21}} I(U_2;X_2) (1-\epsilon) \label{optimfirst} \nonumber \\
&\overset{(c)}{=} (1-\epsilon)+\max_{\delta\in \mathcal{D}}\brack{H_b(q)-H_b(\delta))}(1-\epsilon) \nonumber \\
&\overset{(d)}{=}(1-\epsilon)+\brack{H_b(q)-\max\set{ \textstyle H_b(q)+1-\frac 1 \epsilon,\,0 }}(1-\epsilon) \nonumber \\
&\overset{}{=}(1-\epsilon)+\min \set{ H_b(q),\,1 - \textstyle \frac 1 \epsilon }(1-\epsilon),
\end{align}
where \cref{RL.ineq} is given by \cref{corollary:achiev.U}; (a) is due to the
independent and parallel structure of the channel, (b) is justified by viewing
the second subchannel as a concatentation of a preprocessing part and a BEC,
(c) holds by considering the set \cref{simul.cond} and defining
$\cD \coloneqq \{\delta: \delta\in[0, \frac 1 2],\,(1-\epsilon)>[H_b(q)-H_b(\delta)]-(1-\epsilon)[H_b(q)-H_b(\delta)]\}$,
and (d) is due to the condition \cref{ConditionReverse}.

Computing $R_{\text{L}, U=X}$, as defined in \cref{BoundsDefn}, the lower bound on $\capESID$ with no preprocessing:
\begin{align}
  \label{R.LUX}\capESID
  &\overset{(a)} \geq R_{\text{L}, U=X}
                 = \max_{P_{X}\in\Paost} I(X;Y) \nonumber  \\
  &\overset{(b)} = \max_{P_{X_1}\in\cP(\cX_1)} I(X_1;Y_1)+\max_{P_{X_2}\in\Paost} I(X_2;Y_2)  \nonumber \\
  &\overset{(c)} = \big[(1 - \epsilon) + H_b(q) (1 - \epsilon)\big]\ind{H_b(q) \leq \textstyle \frac 1 \epsilon - 1},
\end{align}
where (a) holds by \cref{corollary:achiev.U},
(b) holds due to the independence of the two subchannels, and (c) is true by \cref{ConditionReverse}.
The condition \cref{ConditionReverse} implies that $R_{L,U=X}$ is zero for some
range of $q$ if no preprocessing is used, for channels with $\epsilon > \frac 1 2$.
Hence, we have a case where randomized preprocessing via the auxiliary channel $P_{X|U}$
not only improves the achievable rate, but it also essential to allow any
possibility of any positive-rate ESID code.

Computing $R_U$,  as defined in \cref{BoundsDefn}, we obtain the upper bound
\begin{align}
\label{R.U}\capESID &\leq R_U = \max_{P_{X}\in\Pesid} I(X;Y) \nonumber \\
&=\max_{P_{X_1}\in\cP(\cX_1)} I(X_1;Y_1)+\max_{P_{X_2}\in\mathcal{P}_{\text{ESID},2}} I(X_2;Y_2)  \nonumber \\
& = \tup{ 1 + H_b(q) } \tup{ 1 - \epsilon }.
\end{align}
The ESID capacity $\capESID(W_{Y|X},W_{Z|X})$ is visualized in
\cref{fig:ReverseA,fig:ReverseB,fig:ReverseC} and contrasted to different
identification and transmission capacities, with and without secrecy and stealth
constraints. In \cref{fig:ReverseA} where $\epsilon=0.4$, the lower and upper
bounds on $\capESID$ coincide and thus $\capESID$ is exactly specified. In
\cref{fig:ReverseB,fig:ReverseC}, $\capESID$ is not known as the lower and upper
bounds do not match.
\begin{remark}
  Notice that, in this example,
  the upper and lower bounds to $\capESID$ coincide when $\epsilon \le \frac 1 2$,
  even if the channel is not more capable as in \cref{corollary:esid.dm.moreCapable}.
  This highlights that the condition in \cref{corollary:esid.dm.moreCapable} is sufficient but not necessary,
  and there is actually a larger class of channels for which \cref{corollary:achiev.U,thm:esid.dm}
  together specify the exact capacity.
  Actually, it is sufficient that $I(X; Y) \ge I(X; Z)$ for the input distribution $P_X$
  that maximizes $I(X; Y)$.
\end{remark}

\section{Proofs}

In this section, we give the proofs of the main results of our paper.
\label{sec:esid.dm.proof}

\subsection{Proof of \cref{AOST.Theorem}}

\subsubsection{Direct Part}
First, we prove that the rate is achievable, by randomly constructing a code and
analyzing its error probabilities.

\paragraph{Codebook generation}
Fix a $P_X \in \Paost \subseteq \cP(\cX)$,
\begin{gather}
  \label{SimulationCondition2}
  P_X\circ W_{Z|X} = Q_Z
  \,,
\shortintertext{and for an any $0 < \epsilon \le (I(X; Y) - I(X; Z))/(2 H(Z))$, let}
  I(X; Z) + 2 \epsilon H(Z) \le R_{\text{AOST}} \le I(X; Y)
  .
\end{gather}
Randomly and independently, generate a codebook $C_o = \tup{ X^n(1), \dots, X^n(\card{\cM_{\text{AOST}}}) }$
of $|\mathcal{M}_{\text{AOST}}|=2^{n R_{\text{AOST}}}$ codewords,
where $X^n(m) \sim P_X^n$.

\paragraph{Encoding and decoding}
Alice encodes a message $m \in \cM_{\text{AOST}}$
as $X^n(m)$ and sends this into the channel $W_{Y,Z|X}^n$.
Bob receives $Y^n$ and tries to find a unique $\hat{m}$ such
that
\begin{equation}
  (X^n(\hat{m}), Y^n) \in \cT_\epsilon^n (P_{X,Y})
  .
\end{equation}
If there is such an $\hat{m}$, he declares that $m = \hat{m}$, i.e.,
he sets $\hat M = \hat m$.
Otherwise, he declares that $m = 1$, i.e., he sets $\hat M = 1$.

\paragraph{Error analysis}

\begin{align}
  E_1 &= \Pr( M \ne \hat M | {C_o}) \,, \\
  E_2 &=  D(P_{Z^n|C_o}\|\,Q_Z^n)  \,,
\end{align}
where by the union bound and Markov's inequality,
\begin{align}
  P_{C_o}\tup{ E_1 \ge \epsilon^{(n)} \cup E_2 \ge \delta^{(n)} }
     &\le P_{C_o}\tup{ E_1 \ge \epsilon^{(n)} }
        + P_{C_o}\tup{ E_2 \ge \delta^{(n)} }
  \\ &\le \frac 1 {\epsilon^{(n)}} \expect[E_1]
        + \frac 1 {\delta^{(n)}}   \expect[E_2]
        \label{AchievabilityStealth1}
  .
\end{align}
Using standard typicality arguments~\cite[Chap.~3]{ElGamalKim2011NetworkInformationTheory},
one can show that
\begin{equation}
  \expect[E_1] \to 0
  \text{ as }
  n \to \infty
  ,
\end{equation}
since $R_{\text{AOST}} \le I(X;Y)$,
and hence $\expect[E_1]/\epsilon^{(n)} \to 0$ in \cref{AchievabilityStealth1},
for some $\epsilon^{(n)}$ where $\epsilon^{(n)} \to 0$.

Convergence of $\expect[E_2]$ follows from~\cite{HouKramer2013divergence}:
By~\cite[Eqs. 10]{HouKramer2013divergence},
\begin{align}
  \expect[E_2]
     &= \expect_{P_{C_o}} \brack{ D( P_{Z^n|C_o}\| Q_Z^n ) }
  \\ &\le
    \expect_{P_{X^n, Z^n}} \brack{
        \log\tup{ \frac { W_{Z|X}^n (Z^n|X^n(M)) } { |\mathcal{M}|\, Q_Z^n(Z^n) } + 1 }
      }
  \label{eq:AOSbound}
  .
\end{align}
As the derivation in~\cite{HouKramer2013divergence}
is quite complicated, we present one in \cref{appendix:AOSbound}
that is adapted to our notation.
Since $R_{\text{AOST}} \ge I(X;Z) + 2\epsilon H(Z)$,
it follows from typicality arguments~\cite[Eqs. 11-17]{HouKramer2013divergence}
for some $\delta^{(n)}$ where $\delta^{(n)} \to 0$,
\begin{equation}
  \frac {\expect[E_2]} {\delta^{(n)}}
  = \expect_{P_{C_o}} \brack{D( P_{Z^n|C_o}\| Q_Z^n )} / \delta^{(n)}
  \to 0
  .
\end{equation}
Since both expected errors converge, it follows from \cref{AchievabilityStealth1}
that there exists for every $n$ a realization $c_o$ of $C_o$ that satisfies
\begin{align}
  \Pr( M \ne \hat M | C_o = c_o) &\le \epsilon^{(n)} \to 0, \\
  D(P_{Z^n|c_o} \| Q_Z^n) &\le \delta^{(n)} \to 0,
\end{align}
i.e., our coding scheme is for $C_o = c_o$ a $(\card{\cM_{\text{AOST}}}, n |
\epsilon^{(n)}, \delta^{(n)})$ code for $(W_{Y|X}, W_{Z|X})$ and reference
distribution $Q_Z^n$,
where both error bounds converge. Hence
all rates
\begin{equation}
I(X; Z) + 2 \epsilon H(Z) \le R_{\text{AOST}} \le I(X; Y)
\label{thm1.achievability}
\end{equation}
are achievable, for any $P_X \in \Paost$.

\subsubsection{Converse Part}

\begingroup
\allowdisplaybreaks

To prove that the lower bound from the previous part is
also an upper bound, we start with a standard transmission converse based
on Fano's inequality and single-letterization.
We assume that $(f_{\mathsf{AOST}}, \phi)$
is a $(2^{n R_{\text{AOST}}}, n | \lambda^{(n)}, \delta^{(n)})$ code,
where $\lambda^{(n)}, \delta^{(n)} > 0$
and $\lim_{n \to \infty} \max\set{\lambda^{(n)}, \delta^{(n)}} = 0$.
Thus, when $M$ is the randomly selected message,
$X^n = f_{\text{AOST}}(M)$, and there exists $\zeta_n > 0$
such that $\zeta_n \to 0$ as $n \to \infty$ and
\begin{align}
  n R_{\text{AOST}}
     &= H(M) = I(M;Y^n)+ H(M|Y^n)
  \\ &\overset{(a)} \le I(M; Y^n) + n\zeta_n
  \nonumber
  \\ &\overset{(b)} \le I(X^n; Y^n) + n\zeta_n
  \\ &= H(Y^n) - H(Y^n|X^n) + n\zeta_n
  \nonumber
  \\ &\overset{(c)} = \sum_{i=1}^n H(Y_i|Y^{i-1}) - \sum_{i=1}^n H(Y_i|X^n, Y^{i-1}) + n\zeta_n
  \nonumber
  \\ &\overset{(d)} \le \sum_{i=1}^n [H(Y_i) - H(Y_i|X_i)] + n\zeta_n
  \nonumber
  \\ &= n \sum_{i=1}^n \frac 1 n I(X_i; Y_i) + n\zeta_n
  \nonumber
  \\ &\overset{(e)} \le n (I(X; Y) + \zeta_n)
  \label{ConverseStealth1}
  ,
\shortintertext{where}
  P_X(x)
     &= \frac 1 n \sum_{i=1}^n P_{X_i}(x)
  \\
  P_{Y}(y)
     &= (P_X \circ W_{Y|X}) (y)
      = \frac 1 n \sum_{i=1}^n \sum_x P_{X_i}(x) W_{Y|X}(y | x)
  \label{single.letterization}
\end{align}
where (a) follows from Fano's inequality;
(b) follows from the data processing inequality;
(c) is the chain rule for the (conditional) entropy;
(d) holds since the function $y^{i-1} \mapsto H(Y_i|Y^{i-1} = y^{i-1})$ is concave in $P_{Y^{i-1}}$,
    and since $(X^{ i - 1}, X_{i+1}^{n}, Y^{i-1} ) - X_i- Y_i $ is a Markov chain;
and (e) follows from the concavity of the mutual information in the channel input distribution.

Now, we have shown that
\begin{equation}
R_{\text{AOST}} \le I(X; Y) + \zeta_n \to I(X; Y).
\end{equation}
What remains to be proven is that
$P_X \circ W_{Y|X} = Q_Z$ and $I(X; Y) \ge I(X;Z)$ are necessarily satisfied.
To this end, we first show that $R_{\text{AOST}} \ge I(X; Z) + \delta^{(n)}/n$
using arguments similar to \cite[Section 5.2.3]{Hou14thesis}.
Since an AOST-sender selects $M$ with uniform distribution,
and $X^n$ is a function of $M$,
\begin{align}
  n R_{\text{AOST}}
     &= H(M)
  \nonumber
  \\ &\ge H(X^n)
\ge I(X^n;Z^n)
  \nonumber
  \\ &\overset{(a)} \ge I(X^n;Z^n)+D(P_{Z^n}\|Q_Z^n)-\delta^{(n)}
  \nonumber
  \\ & = H(Z^n)-H(Z^n|X^n) - \expect_{P_{Z^n}}\big(\log{Q^{n}_{Z}(Z^n)}\big) -  H(Z^n) -\delta^{(n)}
  \nonumber
  \\ &\overset{(b)}  = - \sum_{i=1}^n H(Z_i| Z^{i-1}, X^n) -  \sum_{i=1}^n \expect_{P_{Z_i}}\big(\log{Q_{Z}(Z_i)}\big) -\delta^{(n)}
  \nonumber
  \\ & \overset{(c)} = - \sum_{i=1}^n H(Z_i| X_i) - \sum_{i=1}^n \expect_{P_{Z_i}}\big(\log{Q_{Z}(Z_i)}\big) -\delta^{(n)}
  \nonumber
  \\ & = \sum_{i=1}^n \sum_{x, z \::\: Q_Z(z) > 0} P_{X_i}(x)W_{Z|X}(z |x) \bigg(\log\frac{W_{Z|X}(z|x)}{Q_Z(z)}\bigg) -\delta^{(n)}
  \nonumber
  \\ & \ge n \sum_{x, z \::\: P_Z(z) \cdot Q_Z(z) > 0} P_X(x) W_{Z|X}(z |x)
       \bigg(\log\frac{W_{Z|X}(z|x)}{P_Z(z)}\bigg)
     \nonumber\\
     &\quad + n \sum_{x, z \::\: Q_Z(z) > 0} P_X(x) W_{Z|X}(z |x) \log\frac {P_Z(z)} {Q_Z(z)}
            - \delta^{(n)}
  \nonumber
  \\ & = n D( P_{X, Z} \| P_X P_Z ) + n D(P_Z \| Q_Z) - \delta^{(n)}
  \nonumber
  \\ &\overset{(d)} \ge n I(X ; Z) - \delta^{(n)}, \label{ConverseStealth2}
\end{align}
where
$P_Z = P_X \circ W_{Z|X}$,
(a) holds as $D(P_{Z^n} \| Q_Z^n) \le \delta^{(n)}$
    by the definition of an AOST code (\cref{AOSTcodeDef});
(b) follows from the chain rule and since $Q_Z^n$ is i.i.d;
(c) holds since $(X^{ i - 1}, X_{i+1}^n, Z^{i-1} ) - X_i - Z_i $ is a Markov chain;
and (d) follows from the definition of the mutual information, and because the KL-divergence is positive.
Thus, by \cref{ConverseStealth1,ConverseStealth2} and considering the limit for $n \to \infty$,
we get that if $R_{\text{AOST}}$ is an achievable rate, it satisfies that
\begin{equation}
  \label{eq:ConverseStealth.mutinfOrder}
  I(X; Y) \ge R_{\text{AOST}} \ge I(X; Z)
.
\end{equation}

Finally, we show that $P_Z = Q_Z$ must be satisfied.
Since $\lim_{n\to\infty} \delta^{(n)} = 0$,
this follows from the single-letterization in \cite[Eq. 5.29]{Hou14thesis},
which holds actually for all $P_{Z^n} \in \cP(\cZ^n)$, namely,
\begin{align}
  \delta^{(n)} &\geq D(P_{Z^n} \| Q_Z^{n}) \nonumber \\
  & \overset{(a)}{=}  -\expect_{P_{Z^n}}\log \big({Q_Z^n}\big) - H(Z^n) \nonumber \\
  & \overset{(b)}{\geq} -\sum_{i=1}^n \expect_{P_{Z_i}} \big(\log{Q_Z(Z)}\big) - \sum_{i=1}^n H(Z_i) \nonumber \\
  & \overset{(c)}{=}  n \sum_{i=1}^n \frac 1 n D(P_{Z_i} \| Q_Z) \nonumber\\
  & \overset{(d)}{\geq} n \, D(P_{Z} \| Q_Z),
  \label{ConverseStealth3}
\end{align}
where (a) and (c) hold by the definition of KL-divergence;
(b) stems from the fact that $Q_Z^n$ is i.i.d., the chain rule of the entropy,
    and the fact that conditioning reduces the entropy;
and (d) follows from the joint convexity of the KL-divergence.
Therefore, any achievable rate satisfies that
$D (P_Z\|Q_Z) \le \lim_{n \to \infty} \delta^{(n)}/n = 0$, and together
with \cref{eq:ConverseStealth.mutinfOrder}, this implies that
every achievable rate $R_{\text{AOST}}$ satisfies
\begin{equation}
  R_{\text{AOST}} \le I(X; Y),
\end{equation}
where $I(X; Y) \ge I(X; Z)$ and $P_X \circ W_{Z|X} \eqqcolon P_Z = Q_Z$.
Together with the direct part as concluded in \cref{thm1.achievability}, this proves \cref{AOST.Theorem}.
\qed

\endgroup

\subsection{Proof of \cref{prop:achiev.X} }

The main idea is very similar to the achievability proofs in
\cite{AhlswedeDueck1989Identificationpresencefeedback,AhlswedeZhang1995Newdirectionstheory},
where an identification code with message set $\cM$ is composed of two transmission codes,
one of them a $(|\cM_1|, n \,| \, \lambda^{(n)}, \delta^{(n)})$ code and
the other a $(|\cM_2|, k \, |\, \lambda^{(k)}, \delta^{(k)})$ code,
and additionally a $\epsilon$-almost universal hash function,
i.e., a function $t : \cM \times \cM_1 \to \cM_2$ such that
for $M_1 \sim P^{\mathbf{unif}}_{\cM_1}$ and every distinct pair $m,m' \in \cM$, $m \ne m'$,
\begin{equation}
  \Pr\set{
    t_m(M_1) = t_{m'}(M_1)
  }
  \le \epsilon
  .
\end{equation}
Hence, for $k$ much smaller than $n$, e.g., $k = \ceil{\sqrt{n}\,}$,
Alice samples the seed $M_1$ and encodes the identification message $m$
into the pair $(M_1, M_2)$ by hashing $M_2 = t_m(M_1)$.
This pair is already an identification codeword for a noiseless channel.
Then, she transmits $M_1$ with the first transmission code and $M_2$ with the second one.
Bob, after selecting a message $m'$ and decoding the pair $(M_1, M_2)$ of
transmission messages, hashes his message with the seed $M_1$, i.e., he computes $M'_2 = t_{m'}(M_1)$.
Then he tests if $M_2 = M'_2$ and declares that $m = m'$ if and only if $M_2 = M_2'$.
The probability of second-type error of this code is
the probability of a hash collision, i.e., of $M_2 = M'_2$ in case $m \ne m'$.
This is upper-bounded by $\epsilon$, according to $\epsilon$-almost universality.

The key to applying this in our ESID setting is to use two transmission codes
with different security properties: in \cite{AhlswedeZhang1995Newdirectionstheory},
$M_1$ is transmitted with a normal transmission code,
and $M_2$ is transmitted with a secret transmission code. Since $M_1$ carries no
information about $m$, the latter is protected against an eavesdropper.
Here, in ESID, we additionally require both codes to be stealthy,
and thus $M_1$ is transmitted with an AOST code, and $M_2$ with an
effectively secret transmission code, thus maintaining stealth on the whole
codeword, and secrecy only for the hash $M_2$.

In the following, we formally construct such an identification code and prove
that it is an ESID code
as long as the rate satisfies the bound in \cref{prop:achiev.X}.

\subsubsection{ESID Code Construction}
In order to streamline the proof, we formalize a lemma implicitly introduced in~\cite[Sec.~III]{AD89feedback} and used in  various achievability proofs of identification problems (e.g.~\cite[Prop.~16]{BocheDeppe2018SecureIdentificationWiretap}, \cite[Thm.~5.6]{BDW19}, \cite[Thm.~17]{LBDW21}).

\begin{lemma}
  \label{lemma:uhf.exists}
  There exists a $\epsilon$-almost-universal hash function $t : \cM \times \cM_1 \to \cM_2$ as defined above,
  provided that $\epsilon > \frac 1 {\cM_2}$ and
  \begin{equation}
    \label{eq:uhf.exists.condition}
    \card\cM^2 < 2^{\card{\cM_1} (\epsilon \log \card{\cM_2} - H_b(\epsilon))}
    .
  \end{equation}
\end{lemma}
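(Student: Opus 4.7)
The plan is to prove the lemma by the probabilistic method, constructing a random hash function and showing that with positive probability it has the required $\epsilon$-almost-universal property. Specifically, I would consider a random function $T : \cM \times \cM_1 \to \cM_2$ where for each pair $(m, m_1)$, the value $T_m(m_1)$ is drawn independently and uniformly at random from $\cM_2$. For any fixed distinct pair $m \ne m' \in \cM$ and any fixed $m_1 \in \cM_1$, the indicator $B_{m_1} = \ind{T_m(m_1) = T_{m'}(m_1)}$ is a Bernoulli random variable with $\Pr(B_{m_1} = 1) = 1/\card{\cM_2}$, and the $B_{m_1}$ are i.i.d.\ across $m_1 \in \cM_1$.

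Next I would apply \cref{chern} to the empirical collision rate $\frac{1}{\card{\cM_1}} \sum_{m_1 \in \cM_1} B_{m_1}$, whose mean is $p = 1/\card{\cM_2} < \epsilon$ by assumption. With $t = \epsilon - p \in (0, 1-p]$, the Chernoff bound yields
\begin{equation}
  \Pr\mleft( \tfrac{1}{\card{\cM_1}} \sum_{m_1} B_{m_1} \ge \epsilon \mright)
  \le 2^{-\card{\cM_1}\, D_b(\epsilon \,\|\, 1/\card{\cM_2})}.
\end{equation}
The key analytic step is to bound the binary divergence from below as $D_b(\epsilon \,\|\, 1/\card{\cM_2}) \ge \epsilon \log\card{\cM_2} - H_b(\epsilon)$, which follows by writing
\begin{equation}
  D_b(\epsilon \,\|\, 1/\card{\cM_2}) = \epsilon \log\card{\cM_2} - H_b(\epsilon) - (1-\epsilon)\log(1 - 1/\card{\cM_2}),
\end{equation}
and noting that the last term is non-negative since $1 - 1/\card{\cM_2} \le 1$.

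Finally, I would apply a union bound over all distinct ordered pairs $(m, m') \in \cM \times \cM$ (of which there are at most $\card{\cM}^2$) to conclude that the probability that \emph{some} pair has collision rate exceeding $\epsilon$ is at most $\card{\cM}^2 \cdot 2^{-\card{\cM_1}(\epsilon \log\card{\cM_2} - H_b(\epsilon))}$. By hypothesis \cref{eq:uhf.exists.condition}, this quantity is strictly less than $1$, so there must exist at least one realization $t$ of $T$ such that for every distinct pair $m \ne m'$, the collision rate over a uniform $M_1$ is at most $\epsilon$, as required. The only delicate point is the divergence inequality; everything else is routine probabilistic-method reasoning, and the statement of the lemma is essentially tailored so that the exponent from the Chernoff bound, after the divergence simplification, matches the exponent in \cref{eq:uhf.exists.condition}.
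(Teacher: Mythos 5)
Your proposal is correct and follows essentially the same route as the paper's proof: a uniformly random hash table, the Chernoff bound of \cref{chern} applied to the i.i.d.\ collision indicators, the lower bound $D_b(\epsilon \,\|\, 1/\card{\cM_2}) \ge \epsilon \log\card{\cM_2} - H_b(\epsilon)$ obtained by discarding the non-negative term $-(1-\epsilon)\log(1 - 1/\card{\cM_2})$, and a union bound over the at most $\card{\cM}^2$ ordered pairs. No gaps; nothing further to add.
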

\begin{proof}
  We randomly construct a random version, $T$, of $t$ by sampling every
  $T_m(m_1)$, $m \in \cM$, $m_1 \in \cM_1$,
  from the uniform distribution over $\cM_2$.
  Thus, for every distinct pair $m, m' \in \cM$ (where $m \ne m'$) and every $m_1 \in \cM_1$,
  \begin{align}
    \expect \ind{ T_m(m_1) = T_{m'}(m_1) }
       &= \sum_{m_2, m_2' \in \cM_2} \frac 1 {\card{\cM_2}^2} \ind{ m_2 = m_2'}
    \\ &= \sum_{m_2 \in \cM_2} \frac 1 {\card{\cM_2}^2}
        = \frac 1 {\card{\cM_2}}
    \,.
  \end{align}
  Therefore, by a Chernoff bound (\cref{chern}),
  \begin{align}
    &\Pr \brack{
      \exists\, m, m' \in \cM, m \ne m' : P_{M_1} \brack{ T_m(M_1) = T_{m'}(M_1) } \ge \epsilon
    }
    \nonumber
    \\ &= \sum_m \sum_{m' \ne m} \Pr \brack[\Big]{
            \sum_{m_1} \ind{ T_m(m_1) = T_{m'}(m_1) } \ge \card{\cM_1} \epsilon
          }
    \\ &\le \card \cM^2 \cdot 2^{ -\card{\cM_1} D_b\tup{ \epsilon }[\| \frac 1 {\card{\cM_2}}] }
    \\ &\le \card \cM^2 \cdot 2^{ -\card{\cM_1} (\epsilon \log_2 \card{\cM_2} - H_b(\epsilon)) }
    \\ &< 1
    ,
  \end{align}
  where the last inequality holds by~\cref{eq:uhf.exists.condition},
  and the second last one since $\log \tup{ 1 - \textstyle \frac 1 a } \le 0$ and
  \begin{align}
    D_b(\epsilon \, \| \, a^{-1})
       &= \epsilon \log (\epsilon a) + (1-\epsilon) \log \frac {1 - \epsilon} {1 - \frac 1 a}
    \\ &= \epsilon \log a
          + \epsilon \log \epsilon
          + (1-\epsilon) \log (1 - \epsilon) -  (1-\epsilon) \log \tup{1 - \textstyle \frac 1 a}
    \\ &= \epsilon \log a - H_b(\epsilon)
          -  (1-\epsilon) \log \tup{ 1 - \textstyle \frac 1 a }
    \\ &\ge \epsilon \log a - H_b(\epsilon)
    \,.
  \end{align}
  Hence, there exists a realization $t$ of $T$ such that
  \begin{equation}
    P_{M_1} \brack{
      t_m(M_1) = t_{m'}(M_1)
    }
    \le \epsilon
    ,
  \end{equation}
  i.e., $t$ is a $\epsilon$-universal hash function.
\end{proof}

Now, suppose that both $\capEST(W_{Y|X},W_{Z|X},Q_Z)$ and $\capAOST(W_{Y|X},W_{Z|X},Q_Z)$ are positive.
Then, for all
\begin{align}
  R &< \capAOST(W_{Y|X},W_{Z|X},Q_Z),
  \label{eq:esid.direct.Raost}
  \\
  R_{\mathsf{EST}} &< \capEST(W_{Y|X},W_{Z|X},Q_Z),
\end{align}
there exist a $(2^{nR}, n \,| \, \lambda^{(n)}, \delta^{(n)})$-AOST code
with encoder $f_{\text{AOST}}$ and decoder $\phi_{\text{AOST}}$,
and a $(2^{k R_{\mathsf{EST}}}, k \, |\, \lambda^{(k)}, \delta^{(k)})$-EST code
with encoding distributions $F^{\mathsf{EST}}_{m_2}$, $m_2 \in \cM_2$, and decoder $\psi_{\mathsf{EST}})$,
where $k = \ceil{\sqrt {n}}$.
Further, by \cref{lemma:uhf.exists}, there exists
a $\epsilon$-almost universal hash function $t : \cM \times [2^{nR}] \to [2^{k R_{\mathsf{EST}}}]$
which satisfies that $\epsilon = 2^{-kR_{\mathsf{EST}} + 1}$
and that
\begin{align}
  \card\cM^2
     &\ge 2^{2^{nR} (2^{-k R_{\mathsf{EST}} + 1} k R_{\mathsf{EST}} - 2)}
  \\ &\ge 2^{2^{nR} \cdot 2 (k R_{\mathsf{EST}} 2^{-k R_{\mathsf{EST}}} - 1)}
  \,,
\shortintertext{and hence that}
  \card\cM
     &\ge 2^{2^{nR} (k R_{\mathsf{EST}} 2^{-k R_{\mathsf{EST}}} k R_{\mathsf{EST}} - 1)}
     \to 2^{2^{nR}}
  \label{eq:esid.direct.Mlb}
\end{align}
as $n \to \infty$.

For the ESID code,
consider the decoding sets
$\cG_{m_1} = \set{ y^n : \phi_{\text{AOST}}(y^n) = m_1 }$
and $\cD_{m_2} = \set{ y^k : \psi_{\mathsf{EST}}(y^k) = m_2 }$,
and let the encoder and decoder be given by
\begin{align}
  E_m(x^n, x_{n+1}^{n+k})
     &= \sum_{m_1} \frac 1 {\card{\cM_1}} \ind{x^n = f_{\text{AOST}}(m_1) } F^{\mathsf{EST}}_{t_m(m_1)}(x_{n+1}^{n+k})
     \,,
  \\
  \cI_{m'}
     &= \set{ (y^n, y_{n+1}^{n+k}) :
       t_{m'}(\phi_{\text{AOST}}(y^n)) = \psi_{\mathsf{EST}}(y_{n+1}^{n+k})
     }
  \\ &= \bigcup_{m_1 \in \cM_1} \cG_{m_1} \times \cD_{t_{m'}(m_1)}
  \,.
\end{align}

\subsubsection{Missed-Identification Error}

For all $m \in \cM_{\text{ESID}}$, the missed-identification error is bounded by
\begin{align}
     &E_m \circ W_{Y|X}^{n+k}(\cI_m^c)
  \\ &= \frac 1 {|\cM_1|} \sum_{m_1 \in \cM_1} \sum_{x^{n+k}_{n+1} \in \cX^k}
        F_{t_m(m_1)}^{\mathsf{EST}}(x_{n+1}^{n+k})
        W_{Y|X}^{n+k}(\cI_m^c|f_{\text{AOST}}(m_1),x_{n+1}^{n+k})
  \\ &\le \frac 1 {|\cM_1|} \sum_{m_1 \in \cM_1}
        \brack{
          W_{Y|X}^n(\cG_{m_1}^c | f_{\text{AOST}}(m_1))
          + \tup{ F^{\mathsf{EST}}_{t_m(m_1)} \circ W_{Y|X}^k }
            \tup{ \cD_{t_{m}(m_1)}^c }
        }
  \\ &\le \lambda^{(n)} + \lambda^{(k)} \eqqcolon \lambda_1^{(n+k)},
  \label{eq:esid.direct.typeI}
\end{align}
where the inequality comes from the error bounds of both the AOST and EST codes, respectively.

\subsubsection{False-Identification Error}

In order to bound the false-identification error,
notice that for all $m, \tilde m \in \cM$ with $m \ne \tilde m$,
since $\cG_{m_1} \cap \cG_{m_1'} = \emptyset$ whenever $m_1 \ne m_1'$,
and $\cD_{m_2} \cap \cD_{m_2'} = \emptyset$ whenever $m_2 \ne m_2'$,
the false-identification error is bounded by
\begin{align}
  &E_m \circ W_{Y|X}^{n+k} (\cI_{m'})
  \nonumber\\
    &= \frac 1 {\card{\cM_1}} \sum_{\substack{m_1 \in \cM_1, \\ x_{n+1}^{n+k} \in \cX^k}}
        F^{\mathsf{EST}}_{t_m(m_1)}(x_{n+1}^{n+k})
        W_{Y|X}^{n+k} \tup{
          \bigcup_{m_1' \in \cM_1} \cG_{m_1'} \times \cD_{t_{m'}(m'_1)}
        }[| f_{\text{AOST}}(m_1), x_{n+1}^{n+k}]
  \\ &\le \frac 1 {\card{\cM_1}} \sum_{\substack{m_1 \in \cM_1, \\ x^{n+k}_{n+1} \in \cX^k}}
        F^{\mathsf{EST}}_{t_m(m_1)}(x_{n+1}^{n+k})
        W_{Y|X}^{n+k} \tup{
          \cG_{m_1} \times \cD_{t_{m'}(m_1)}
        }[| f_{\text{AOST}}(m_1), x_{n+1}^{n+k}]
  \nonumber\\ &\quad + \frac 1 {\card{\cM_1}} \sum_{\substack{m_1 \in \cM_1, \\ x_{n+1}^{n+k} \in \cX^k}}
        F^{\mathsf{EST}}_{t_m(m_1)}(x^{n+k}_{n+1})
        W_{Y|X}^{n+k} \tup{
          \cG_{m_1}^c \times \cY^k
        }[| f_{\text{AOST}}(m_1), x_{n+1}^{n+k}]
  \\ &\le \frac 1 {\card{\cM_1}} \sum_{m_1 \in \cM_1}
        F^{\mathsf{EST}}_{t_m(m_1)} \circ
        W_{Y|X}^k \tup{ \cD_{t_{m'}(m_1)} }
        + \lambda^{(n)}
  \\ &= \lambda^{(n)}
      + \frac 1 {\card{\cM_1}} \sum_{\substack{m_1 \in \cM_1 : \\ t_m(m_1) \ne t_{m'}(m_1)}}
        F^{\mathsf{EST}}_{t_m(m_1)} \circ W_{Y|X}^k \tup{ \cD_{t_{m'}(m_1)} }
  \nonumber\\ &\hspace{3.2em} + \frac 1 {\card{\cM_1}} \sum_{\substack{m_1 \in \cM_1 : \\ t_m(m_1) = t_{m'}(m_1)}}
        F^{\mathsf{EST}}_{t_m(m_1)} \circ W_{Y|X}^k \tup{ \cD_{t_{m'}(m_1)} }
  \\ &\le \lambda^{(n)}
      + \frac 1 {\card{\cM_1}} \sum_{\substack{m_1 \in \cM_1 : \\ t_m(m_1) \ne t_{m'}(m_1)}}
        F^{\mathsf{EST}}_{t_m(m_1)} \circ W_{Y|X}^k \tup{ \cD_{t_m(m_1)}^c }
  \nonumber\\ &\hspace{3.2em} + \frac 1 {\card{\cM_1}} \sum_{m_1 \in \cM_1}
      \ind{t_m(m_1) = t_{m'}(m_1)}
  \\ &\le \lambda^{(n)} + \lambda^{(k)} + \epsilon
       = \lambda^{(n+k)}_1 + \epsilon
       \eqqcolon \lambda_2^{(n+k)}
  \label{eq:esid.direct.typeII}
  \,,
\end{align}
where the last inequality holds by the definition of an EST code with
error probability $\lambda^{(k)}$, and from the definition of
an $\epsilon$-almost universal hash function.

\subsubsection{Leakage}

To bound the leakage, we define
\begin{align}
  P_{Z^{n+k}|m} &= E_m \circ W_{Z|X}^{n+k},
  \\
  P_{Z^n|m}(z^n) &= P_{Z^n}(z^n) = \sum_{m_1} P^{\mathbf{unif}}_{\cM_1} (m_1)
       W^n_{Z|X}(\cdot|f_{\text{AOST}}(m_1)),
  \\
  P_{Z^{n+k}_{n+1}|M_1, m}(\cdot|m_1) &= F^{\mathsf{EST}}_{t_m(m_1)} \circ W_{Z|X}^k
  \,.
\end{align}
Thus, we have
\begin{align}
     & D(E_m \circ W_{Z|X}^{n+k} \| Q_Z^{n+k})
      = D(P_{Z^{n+k}|m} \| Q_Z^{n+k})
  \nonumber
  \\ &\overset{(a)} = D(P_{Z^n|m} \| Q_Z^n) + D(P_{Z^{n+k}_{n+1}|Z^n,m} \| Q_Z^k \,|\, P_{Z^n|m})
  \nonumber
  \\ &\overset{(b)} =
      D (P_{Z^n} \| Q_Z^n)
      + D (P_{M_1|Z^n} P_{Z^{n+k}_{n+1}|M_1, m} \| Q_Z^k \,|\, P_{Z^n})
  \\ &\le D (P_{Z^n} \| Q_Z^n)
      + D (P_{Z^{n+k}_{n+1}|M_1, m} \| Q_Z^k \,|\, P_{M_1})
  \\ &\le D (P_{Z^n} \| Q_Z^n)
      + \sum_{m_1 \in \cM_1} \frac 1 {\card{\cM_1}} D (F^{\mathsf{EST}}_{t_m(m_1)} \circ W_{Z|X}^k \| Q_Z^k)
  \\ &\le D (P_{Z^n} \| Q_Z^n)
      + \max_{m_1 \in \cM_1} D (F^{\mathsf{EST}}_{t_m(m_1)} \circ W_{Z|X}^k \| Q_Z^k)
  \\ &\le \delta^{(n)} + \delta^{(k)} \coloneqq \delta^{(n+k)}
  \label{eq:esid.direct.leakage}
  ,
\end{align}
where (a) follows from the chain rule of KL-divergence \cref{KLchainrule},
while (b) holds because $Q_Z^n$ is an i.i.d. distribution
and since $Z^{n+k} = (Z^n, Z^{n+k}_{n+1})$ and $Z^n - M_1 - Z^{n+k}_{n+1}$.
Thus, since $k = \sqrt n$, by
\cref{eq:esid.direct.Raost,eq:esid.direct.Mlb,eq:esid.direct.typeI,eq:esid.direct.typeII,eq:esid.direct.leakage},
for every $R < \capAOST(W_{Y|X},W_{Z|X},Q_Z)$
there exists a
$(2^{2^{n R}}, n \,|\, \lambda^{(n+\sqrt n)}_1, \lambda^{(n+\sqrt n)}_2, \delta^{(n+\sqrt n)})$-ESID code
$\set{ (E_m, I_m) : m \in \cM }$,
where
\begin{equation}
  \lim_{n\to \infty} \max{\lambda^{(n+\sqrt n)}_1, \lambda^{(n+\sqrt n)}_2, \delta^{(n)})} = 0
  .
\end{equation}
This concludes the proof of \cref{prop:achiev.X}.
\qed

\subsection{A Sharpened Identification Converse}
\label{sec:auxResults}

To prove an upper rate bound for ESID in \cref{sec:esid.dm.proof},
we will use the following converse proof for ID codes.
It is a sharpened version of Watanabe's bound (see
Corollary~2 and Lemma~1 in \cite{Watanabe2022MinimaxConverseIdentification}).
The difference is that in Watanabe's bound, the
maximization is over the whole set $\cP(\cX^n)$
of all possible input distributions to the channel,
while ours is optimized only over the convex hull $\cE$
of all encoding distributions of the code.
Since our ID codes use stochastic encoding,
there are deterministic encoding distributions, which assign weight one to
particular code words, that are no encoding distributions. Therefore,
$\cE$ is a proper subset of $\cP(\cX^n)$.
Optimizing over $\cP(\cX^n)$ would drop the effective-secrecy constraint from
the upper rate bound, whereas $\cE$ includes the constraint,
because it is satisfied by every encoding distribution.
This is crucial in the proof of \cref{thm:esid.dm} in \cref{sec:esid.dm.proof}.

As the bounds in \cite{Watanabe2022MinimaxConverseIdentification},
our bounds are non-asymptotic and do not depend on the block length $n$.
Therefore, we drop all $n$'s and consider $(|\cM|, \lambda_1, \lambda_2)$-ID codes
for channels with arbitrary memory, as defined in
\cite[Section~II]{Watanabe2022MinimaxConverseIdentification}.
Thus, when applying \cref{lemma:idConverse.oneShot} in \cref{eq:esid.converse.oneShot.rate}
to memoryless channels and sets of code words with block length $n$,
the one-shot set $\cX$ of possible channel inputs
is substituted by the set $\cX^n$ of possible channel input blocks of length $n$.

Consider the hypothesis testing divergence
\begin{gather}
  D_\alpha(P_Y \| Q_Y)
  := \sup\set{ \gamma : \text{Pr}\tup{ \log \frac{P_Y(Y)}{Q_Y(Y)} \le \gamma } \le \alpha },
\end{gather}
where $Y \sim P_Y$, and denote the joint distributions
$P_{X,Y}(x,y) \coloneqq P_X(x) W_{Y|X}(y|x)$ and $P_X Q_Y(x,y) \coloneqq P_X(x) Q_Y(y)$.

\begin{lemma}
  \label{lemma:idConverse.oneShot}
  Let $\lambda_1, \lambda_2, \eta > 0$ and $\alpha := \lambda_1 + \lambda_2 + 2\eta < 1$.
  The size of every $(|\mathcal{M}|, \lambda_1, \lambda_2)$-ID code $\set{(E_m, \cD_m)}_{m \in \mathcal{M}}$
  for a channel $W_{Y|X} : \cX \to \cP(\cY)$
  is bounded by
  \begin{align}
    \log\log |\mathcal{M}|
    &
    \label{eq:idConverse.oneShot.divergence.PX}
    \le \max_{P_X \in \cE} \min_{Q_Y \in \cP(\cY)}
      D_\alpha (P_{X,Y} \| P_X  Q)
      + \epsilon
    \\&
    \label{eq:idConverse.oneShot.mutinf}
    \le \max_{P_X \in \cE}
      \frac{1}{1-\alpha}
      I(X; Y)
      + \epsilon,
  \end{align}
  where $\epsilon = \log\log \card\cX + 3 \log(1/\eta) + 2$,
  and $\cE = \set{ P_M \circ P_{X|M} : P_M \in \cP(\mathcal{M})}$
  is the convex hull of all encoding distributions $P_{X|M=m} = E_m$.
\end{lemma}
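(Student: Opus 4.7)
The plan is to adapt Watanabe's proof \cite[Cor.~2, Lem.~1]{Watanabe2022MinimaxConverseIdentification} of the analogous identification converse, tracking carefully where the ``test'' input distribution $P_X$ enters so that it can be forced to lie in $\cE$ rather than in all of $\cP(\cX)$. Throughout, fix an arbitrary $(|\cM|, \lambda_1, \lambda_2)$-ID code $\set{(E_m, \cD_m)}_{m \in \cM}$, choose any $P_M \in \cP(\cM)$, and let $P_X := P_M \circ E_M$, so that by construction $P_X \in \cE$; let $Q_Y \in \cP(\cY)$ be any reference output distribution, and set $\gamma := D_\alpha(P_{X,Y} \| P_X Q_Y)$.

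First I would reformulate the two identification-error conditions as a binary hypothesis test with carefully controlled errors. For each message $m$, the decoding set $\cD_m$ already discriminates $E_m \circ W_{Y|X}$ from the averaged output $P_X \circ W_{Y|X}$ up to errors of order $\lambda_1 + \lambda_2$; combining $\cD_m$ with a thresholded information-density test $\set{y : \log W_{Y|X}(y|x)/Q_Y(y) > \gamma}$ on the input side yields a joint test of $P_X \times W_{Y|X}$ against $P_X Q_Y$ whose type-I error is at most $\alpha = \lambda_1 + \lambda_2 + 2\eta$ by the very definition of $\gamma$, with the $2\eta$ absorbing tail slack.

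Next I would execute the quantization and pigeonhole step: partition the information density $\log W_{Y|X}(y|x)/Q_Y(y)$ into $O(\log|\cX|/\eta)$ bins of width $\eta$. This, together with the $\eta$-slack on the hypothesis test, produces the additive overhead $\log\log|\cX| + 3\log(1/\eta) + 2$ appearing in $\epsilon$. Within each bin the likelihood ratio is essentially constant, and pairwise distinguishability of the $\cD_m$'s (implied by $\lambda_2 < 1-\lambda_1$) limits the number of encoding distributions that can be packed there to be doubly exponential in $\gamma$. Summing over bins and then taking the minimum over $Q_Y$ and maximum over $P_M$ (equivalently, over $P_X \in \cE$) yields \cref{eq:idConverse.oneShot.divergence.PX}. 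The mutual-information upper bound \cref{eq:idConverse.oneShot.mutinf} follows by specializing $Q_Y := P_X \circ W_{Y|X}$, for which $D(P_{X,Y} \| P_X Q_Y) = I(X;Y)$, and invoking the Markov-type inequality $D_\alpha(P \| Q) \le D(P \| Q)/(1-\alpha)$, which is immediate from $D(P\|Q) = \expect_P \log(P/Q) \ge \gamma \cdot \Pr_P(\log(P/Q) \ge \gamma) \ge \gamma(1-\alpha)$ whenever $\gamma < D_\alpha(P\|Q)$.

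The main obstacle, relative to replaying Watanabe's original argument, is bookkeeping: verifying that the restriction $P_X \in \cE$ really costs nothing. In Watanabe's bound $P_X$ is an arbitrary free parameter, whereas in our sharpened version we must check that $P_X$ enters only as the distribution of a generic input used to calibrate the hypothesis test, and that this distribution can always be taken to be $P_M \circ E_M$ for some $P_M$. In particular, no step of the argument may secretly appeal to a ``deterministic'' $P_X$ (a delta on a single $x$), which would fail to lie in $\cE$ when the encoders are stochastic; all expectations that arise must admit this mixture interpretation. Once this auditing is done, every remaining step is essentially Watanabe's and the sharpening comes at no asymptotic cost.
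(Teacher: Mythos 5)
Your overall strategy --- take Watanabe's converse as given and audit where the test distribution $P_X$ enters, so that the maximization can be restricted to $\cE$ --- is exactly the paper's. The problem is that you explicitly defer the audit (``once this auditing is done, every remaining step is essentially Watanabe's''), and that audit \emph{is} the proof; everything else in the paper's argument is a citation. Concretely, the paper observes that the maximization over $\cP(\cX)$ appears in Watanabe's proof at exactly one point: to upper-bound $\tfrac{1}{2}\bigl[(E_m \times W_{Y|X})(\cS) + (E_{m'} \times W_{Y|X})(\cS)\bigr]$ by $\sup_{P_X \in \cP(\cX)} (P_X \times W_{Y|X})(\cS)$ for a certain set $\cS \subseteq \cX \times \cY$. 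Since the left-hand side equals $(\bar P \times W_{Y|X})(\cS)$ for the mixture $\bar P = \tfrac{1}{2}(E_m + E_{m'})$, which lies in $\cE$, the supremum over $\cE$ suffices. Note that this is not a distribution you get to \emph{choose} (your framing ``fix any $P_M$ and set $P_X := P_M \circ E_M$'' suggests a free parameter); it is a specific mixture forced by the worst message pair, and the point is merely that it happens to lie in $\cE$.

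The second item the audit must cover --- which your proposal does not mention --- is the minimax swap in Watanabe's Corollary~2: the hypothesis-testing argument naturally yields a bound of the form $\min_{Q_Y}\max_{P_X}$, whereas the lemma asserts the stronger $\max_{P_X}\min_{Q_Y}$, and the equality of the two relies on the supremizing set being compact and convex. One must therefore check that $\cE$, unlike $\cP(\cX)$, still has these properties; it does, being the convex hull of the finitely many encoding distributions. Without these two observations the proposal does not establish the lemma; with them it coincides with the paper's proof. Your final step --- $D_\alpha(P\|Q) \le \frac{1}{1-\alpha} D(P\|Q)$ via a Markov-type inequality, specialized to $Q_Y = P_X \circ W_{Y|X}$ to obtain $I(X;Y)$ --- matches the paper.
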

\begin{proof}
  By   \cite[Corollary~2 and~Lemma~1]{Watanabe2022MinimaxConverseIdentification},
  we have that
  \begin{gather}
    \label{eq:idConverse.oneShot.divergence.PX.orig}
    \log\log |\mathcal{M}|
    \le \max_{P_X \in \cP(\cX)} \min_{Q_Y \in \cP(\cY)}
      D_\alpha (P_{XY} \| P_X Q_Y)
      + \epsilon.
  \end{gather}
  The maximization over $\cP(\cX)$ is introduced in
  the proof of \cite[Theorem~1]{Watanabe2022MinimaxConverseIdentification}
  to establish the upper bound
  \begin{align}
    \label{eq:watanabe.bound}
    \frac{1}{2} &\brack{ (E_m \times W_{Y|X}) (\cS) + (E_{m'} \times W_{Y|X}) (\cS) }
    \nonumber
    \\
    &\le
    \sup_{P_X \in \cP(\cX)} P_X \times W_{Y|X} (\cS),
  \end{align}
  for some set $\cS \subseteq \cX \times \cY$,
  which ultimately leads to the maximization in
  \cref{eq:idConverse.oneShot.divergence.PX.orig}.
  Clearly, a maximization over all $E_m$, $m \in \mathcal{M}$
  would suffice in \cref{eq:watanabe.bound}.
    To
  establish the minimax equality in \cite[Corollary~2]{Watanabe2022MinimaxConverseIdentification},
  Watanabe used the fact that the supremum is taken over
  a compact and convex set. Since no   other
  properties
  of $\cP(\cX)$ are used in~\cite{Watanabe2022MinimaxConverseIdentification},
  it suffices to maximize over the convex hull $\cE$
  of all encoding distributions,     as in~\cref{eq:idConverse.oneShot.divergence.PX}.
  By Markov's inequality,
  \begin{align}
    D_\alpha(P_Y \| Q_Y) \nonumber
    &
      = \sup\set{ \gamma : \text{Pr}\tup{ \log \frac{P_Y(Y)}{Q_Y(Y)} > \gamma } \ge 1 - \alpha } \nonumber
    \\&
      \le \inf\set{ \gamma : \text{Pr}\tup{ \log \frac{P_Y(Y)}{Q_Y(Y)} \ge \gamma } \le 1 - \alpha } \nonumber
    \\&
      \le \frac{1}{1-\alpha} \expect_{P_Y}\intv{ \log\frac{P_Y(Y)}{Q_Y(Y)} } \nonumber
    \\&
      = \frac{1}{1-\alpha} D(P_Y\|Q_Y),
  \end{align}
  where the first inequality holds since $p(\gamma) = \text{Pr}\tup{\log\frac{P(Y)}{Q(Y)} \ge \gamma}$ is a decreasing function.
  Thus, for every $P_X \in \cP(\cX)$,
  \begin{align}
    \min_{Q \in \cP(\cY)}
    \nonumber
      & D_\alpha(P_X  W_{Y|X} \| P_X\,  Q_Y)
    \\& \overset{(a)}{\le} \frac{1}{1-\alpha} D(P_X  W_{Y|X} \| P_X  P_Y)
    \\& =   \frac{1}{1-\alpha} I(X; Y),
  \end{align}
  where (a) follows as $P_Y= P_X \circ W_{Y|X}$. Then,  \cref{lemma:idConverse.oneShot} follows.
\end{proof}

\subsection{Proof of \cref{thm:esid.dm}}

Consider an arbitrary
$(|\mathcal{M}_{\text{ESID}}|, n \, | \, \lambda_1^{(n)},\lambda_2^{(n)},\delta^{(n)})$-ESID code
$\set{ \tup{E_m, \cD_m} }_{m\in\mathcal{M}_{\text{ESID}}}$
for a DM-WTC $(W_{Y|X},W_{Z|X})$ and a reference output distribution $Q_Z^n$,
where $\lambda_1^{(n)}, \lambda_2^{(n)}, \eta^{(n)} > 0$
satisfy $\alpha^{(n)} \coloneqq \lambda_1^{(n)} + \lambda_2^{(n)} + 2\eta^{(n)} < 1$,
and assume that $\lim_{n\to\infty} \max_{k=1,2} \lambda_k^{(n)} = 0$.
By \cref{lemma:idConverse.oneShot}, the rate is upper-bounded by
\begin{gather}
  \label{eq:esid.converse.oneShot.rate}
  R =   \frac{1}{n} \log\log \card{\cM_{\text{ESID}}}
    \le \max_{P_{X^n} \in \cE}
        \frac{1}{n(1-\alpha^{(n)})} I(X^n; Y^n) + \frac {\epsilon^{(n)}} n,
\end{gather}
where $\epsilon^{(n)} = \log\log \card{\cX^n} + 3\log(1/\eta^{(n)}) + 2$,
and $\cE$ is the convex hull of all encoding distributions.
Similarly to \cref{ConverseStealth1}, we have the single-letterization
\begin{equation}
  \frac 1 n I(X^n; Y^n)
    \le I(X; Y),
  \label{eq:singleLetter.mutInfXY}
\end{equation}

In the following, we single-letterize the constraints on $\cE$.
By \cref{ConverseStealth3},
it holds for every $P_{X^n} \in \cE$ and $P_{Z^n} = P_{X^n} \circ W_{Z|X}^n$
that
\begin{equation}
  D(P_Z \| Q_Z)
  \le \frac 1 n D(P_{Z^n} \| Q_Z^n)
  \le \frac {\delta^{(n)}} n
  ,
  \label{eq:esid.converse.singleLetter.divergence}
\end{equation}
where $P_X = \sum_{i=1}^n P_{X_i}$ and $P_Z = P_X \circ W_{Z|X}$.
Note that since $P_{X^n}$ is in the convex hull $\cE$ of all encoding
distributions $P_{X^n|M}(x^n|m) = E_m(x^n)$, $m \in \cM_{\text{ESID}}$,
we can decompose it as $P_X^n = P_M \circ P_{X^n|M}$ for and some $P_M$.
Thus, similarly to \cite[Eq. (1.49)]{HouKramerBloch2017EffectiveSecrecyReliability},
for sufficiently large $n$,
i.e., sufficiently small
$\lambda_{1}^{(n)}, \lambda_{2}^{(n)}, \delta^{(n)} > 0$,
it holds that
\begin{align}
  \max_{P_M \in \cP(\mathcal{M})} I(M; Y^n)
    & \overset{(a)} \ge \delta^{(n)}
  \nonumber \\
    & \overset{(b)} \ge \max_{P_M \in \cP(\mathcal{M})} D(P_M \circ P_{X^n|M} \circ W_{Z|X}^n \| P_M Q_Z^n )
  \nonumber \\
    & \overset{(c)} \ge \max_{P_M \in \cP(\mathcal{M})} I(M ; Z^n),
\end{align}
where (a) holds by~\cite[Lemma~4]{AhlswedeZhang1995Newdirectionstheory},
(b) is from the definition of the semantic effective secrecy constraint,
and (c) follows similarly to the last two inequalities in \cref{ConverseStealth2}.
Thus, there exists a $P_M \in \cP(\mathcal{M})$ such that
\begin{align}
  0 & \le \frac{1}{n} [ I(M; Y^n) - I(M; Z^n) ] \nonumber \\
& \overset{(a)}{=} I(M, V; Y|V) - I(M, V; Z|V) \nonumber \\
& \le \max_v \max_{P_{B, X|V=v}} [ I(B; Y|V=v) - I(B; Z|V=v) ] \nonumber \\
  & \le \max_{P_{B, X}}  [ I(B; Y) - I(B ; Z) ],
\end{align}
where
$V_i = (Y^{i-1}, Z_{i+1}^n)$, $P_V(v)=\frac{1}{n}\sum_{i=1}^{n}P_{V_i}(v)$, and $B = (M,V)$.
Here,  (a) follows
from~\cite[Lemma 17.12]{CsiszarKoerner2011InformationTheoryCoding},
and the maximizations are with respect to the constraint
in \cref{eq:esid.converse.singleLetter.divergence},
$D(P_X \circ W_{Z|X}\| Q_Z) \le \frac{\delta^{(n)}}{n}$.

So far, the support of $P_B$ is unbounded.
By~\cite[Lemma~15.4]{CsiszarKoerner2011InformationTheoryCoding},
we can replace $P_B$ by some $P_U \in \cP(\cU)$ such that
$\card\cU \le \card\cX$,
$I(U; Y) - I(U; Z) = I(B; Y) - I(B; Z)$,
and $P_U \circ P_{X|B}(x) = P_B \circ P_{X|B}(x)$,
for every $x \in \cX$:
Similarly to \cite[Lemma~15.5]{CsiszarKoerner2011InformationTheoryCoding},
we arbitrarily select $x_0 \in \cX$ and define the $\card\cX$ many functions
\begin{align}
  f_0(P_X) &= H(Z) - H(Y) = H(P_X \circ W_{Z|X}) - H(P_X \circ W_{Y|X}), \\
  f_x(P)   &= P(x) \text{ for all $x \in \cX \setminus \set{ x_0 }$}.
\end{align}
Thus, by~\cite[Lemma~15.4]{CsiszarKoerner2011InformationTheoryCoding},
there exists a subset $\cU \subseteq \cB$
where $\card\cU \le \card\cX$, and a PMF $P_U \in \cP(\cU)$, that satisfies
\begin{align}
  H(Z|B) - H(Y|B)
     &= \sum_b P_B(b) f_0(P_{X|B=b})
  \\ &= \sum_{u \in \cU} P_U(u) f_0(P_{X|B=u})
  \\ &= H(Z|U) - H(Y|U),
\shortintertext{and for all $x \in \cX \setminus \set{ x_0 }$,}
  P_B \circ P_{X|B}(x)
     &= \sum_b P_B(b) f_x(P_{X|B=b})
  \\ &= \sum_{u \in \cU} P_U(u) f_x(P_{X|B=u})
  \\&= P_U \circ P_{X|B}(x),
\shortintertext{and thus as well}
  P_B \circ P_{X|B}(x_0)
     &= 1 - \sum_{x \ne x_0} P_B \circ P_{X|B}(x)
  \\ &= 1 - \sum_{x \ne x_0} P_U \circ P_{X|B}(x)
  \\ &= P_U \circ P_{X|B}(x_0),
  \\
  I(B; Y) - I(B; Z)
     &= H(P_B \circ P_{X|B} \circ W_{Y|X}) - H(P_B \circ P_{X|B} \circ W_{Z|X})
  \nonumber\\ &\quad
        + H(Z|B) - H(Y|B)
  \\ &= H(P_U \circ P_{X|B} \circ W_{Y|X}) - H(P_U \circ P_{X|B} \circ W_{Z|X})
  \nonumber\\ &\quad
        + H(Z|U) - H(Y|U)
  \\ &= I(U; Y) - I(U; Z)
  .
\end{align}

Since the mutual information is continuous and
the set $\set{ P_{U, X} : D(P_X \circ W_{Z|X} \| Q_Z) \le \delta^{(n)} }$
is compact, we have for $\eta^{(n)} = e^{-\sqrt{n}}$ that
$\alpha^{(n)} = \lambda^{(n)}_1 + \lambda^{(n)}_2 + 2\eta^{(n)} \in o(1)$,
i.e., $\lim\limits_{n \to \infty} \alpha^{(n)} = 0$,
and thus
\begin{align}
  &\capESID(W_{Y|X}, W_{Z|X}, Q_Z) \nonumber
  \\&
    \le
        \lim_{n \to \infty}
        \Biggl[
      o(1)
      +
      \frac{1}{1 - o(1)}
  \twocol{\nonumber\\\ampersand\hspace{7em}}
      \max_{\substack{
        P_{U, X} \in \cP(\cU \times \cX)
        \\ D(P_X \circ W_{Z|X} \| Q_Z) \le \delta^{(n)}/n
        \\ I(U; Y) \ge I(U; Z)
      }}
      I(X; Y)
    \Biggr]
      \\&
    = \max_{\substack{
        P_{U, X} \in \cP(\cU \times \cX)
        \\ P_X \circ W_{Z|X} = Q_Z
        \\ I(U; Y) \ge I(U; Z)
      }}
      I(X; Y).
\end{align}
This completes the proof of \cref{thm:esid.dm}.
\qed

\section{Summary and Discussion}
\label{sec:conclusion}

We introduced the problem of simultaneous approximation of output statistics
over one channel and transmission over another one, where the channel share the
same input (AOST, see \cref{AOSTcodeDef})
and determined the AOST-capacity of a discrete memoryless wiretap channel (DM-WTC)
in \cref{AOST.Theorem}. Further, we introduced the problem of identification
with effective secrecy (ESID) over a DM-WTC,
where a legitimate receiver should perform a reliable identification,
while a warden should not be able to distinguish his output distribution
from a reference, i.e., detect if an unexpected sensible identification
is ongoing, and neither should the warden be able to find out anything
about the content of the signal, i.e., the message.
In contrast to transmission with average-error criteria and
strong stealth or secrecy, where neither of stealth and secrecy implies the other
\cite{Hou14thesis,HouKramer14a,HouKramerBloch2017EffectiveSecrecyReliability},
we considered a maximal-error criterion and naturally semantic secrecy
and stealth, i.e., the worst-case over the messages,
where we noticed that stealth does imply secrecy. Thus,
semantic effective secrecy and semantic stealth are equivalent
(see \cref{seq:securityMeasures.stealth}).

We established upper and lower bounds on the ESID-capacity
in \cref{eq:esid.UpperBound}, \cref{eq:ESID.LowerBound.U}, and \cref{eq.ESID.LowerBound}.
The bounds coincide for more capable channels (\cref{corollary:esid.dm.moreCapable}), but this is a sufficient,
not a necessary condition.
In \cref{sec:example}, we discussed as examples several more capable channels,
but also a pair of reversely degraded channels, which is not more capable.
Hence, for some parameters, there is a gap between the capacity bounds;
yet, for certain other parameters, the bounds coincide.

To prove our ESID results,
we constructed a code based on almost universal hashing with a
random seed, as in
\cite{AhlswedeDueck1989Identificationpresencefeedback,AhlswedeZhang1995Newdirectionstheory},
where the seed is transmitted with a stealthy code, and the hash
with an AOST code. This code achieves our lower bounds.
The upper bounds are proved using a modification of
Watanabe's minimax converse for identification based
on partial resolvability~\cite{Watanabe2022MinimaxConverseIdentification},
combined with techniques
from~\cite{Hou14thesis,HouKramer14a,HouKramerBloch2017EffectiveSecrecyReliability}
and the converse of \cite{AhlswedeZhang1995Newdirectionstheory}.

It \jrwarning*{this is our opinion; we don't know if it is true}{appears}
likely that the lower bound in \cref{corollary:achiev.U} is tight,
based on results from the AOS problem~\cite{HouKramer2013divergence, WatanabeHayashi14},
since the entire codeword is subject to the stealth constraint,
and thus, if an auxiliary pre-channel is used to establish stealth,
intuitively, the resulting virtual channel should be used for identification.
The challenging part of establishing a more stringent converse bound seems to
be the introduction of an auxiliary channel as in
\cref{ReverseDegExamp}, where the number of possible input sequences is suitably
bounded
(see \cref{lemma:idConverse.oneShot} and the discussion of the gap in Ahlswede's
broadcast converse in \cite[Section~2.4]{Bracher2016IdentificationZeroError_phd}).
This is a non-trivial task for the ESID problem and all identification problems
involving broadcast channels.
To the best of our knowledge, there are no existing ID capacity results involving auxiliary variables in the rate bound.
Typically, in converse proofs \cite{ElGamalKim2011NetworkInformationTheory},
the message is obtained as an auxiliary variable from Fano's inequality and then single-letterized.
However, the mutual information between the message and the channel
output cannot serve as an upper bound to the ID capacity,
since ID codes primarily transmit randomness, and not the message.
The capacity of memoryless channels is achieved with codes where
only a few ($\sqrt{n}$) codeword symbols depend on the
message \cite{AhlswedeDueck1989Identificationpresencefeedback}.
On the other hand, in identification converses, one usually approximates
the channel output distribution of the original code using a different code
with bounded complexity, and then the number of messages is bounded by the
complexity of the code. The proof of \cref{thm:esid.dm} follows the same
line of argument. However, this does not reveal any properties
of the code related to the other channel, in a broadcasting or wire-tap setting.

To close the achievability gap, new methods need to be developed to introduce
auxiliary variables in ID converse bounds. This would be a crucial step in the
development of further multi-user converses for ID and many other communication
tasks \cite{Ahlswede2008Generaltheoryinformation,
Ahlswede2021IdenticationOtherProbabilistic},
such as ID via the broadcast channel~\cite{Ahlswede2008Generaltheoryinformation,Bracher2016IdentificationZeroError_phd}.

To this end, observe that any auxiliary channel forms a polytope with extremal
points $P_{X|U=u}$, $u \in \mathcal{U}$, where $\mathcal{U}$ is the auxiliary
alphabet. This polytope must include the set of stealthy encoding distributions.
For such an auxiliary channel concatenated with the communication channel,
the usual ID converse can be applied to obtain an upper bound that matches
the bound in \cref{corollary:achiev.U}, except the cardinality
bound $\card\cU \le \card\cX$ on the auxiliary alphabet.
The latter is also satisfied if the number of
polytope vertices, $\card\cU$, is bounded from above such that
$\log\log \card\cU \in o(n)$.

\ifblind\else

\section*{Acknowledgment}

The authors thank Diego Lentner and Constantin Runge (Technical University of Munich)
for helpful discussions.
Abdalla Ibrahim, Johannes Rosenberger and Christian Deppe
acknowledge the financial support by the Federal Ministry of
Education and Research of Germany (BMBF) in the program of “Souverän. Digital.
Vernetzt.” Joint project 6G-life, project identification number: 16KISK002 and
16KISK263.
Johannes Rosenberger received additional support from the BMBF in the programs
QD-CamNetz (Grant 16KISQ077), QuaPhySI (Grant 16KIS1598K), and QUIET (Grant 16KISQ093).
Abdalla Ibrahim,  Christian Deppe and Roberto Ferrara were further supported in
part by the BMBF within the grant 16KIS1005. Christian Deppe was also supported by the DFG
within the project DE1915/2-1.
Further, Christian Deppe and Roberto Ferrara were supported by the 6GQT project, funded by the
Bavarian Ministry of Economic Affairs, Regional Development, and Energy.
The work of Boulat A.~Bash was supported by the National Science Foundation (NSF) under grants CCF-2006679 and CCF-2045530.
Uzi Pereg was supported by Israel Science Foundation (ISF) under grant 939/23,
in part by German–Israeli Project Cooperation (DIP) within the Deutsche
Forschungsgemeinschaft (DFG) under grant 2032991, in part by the
Ollendorff-Minerva Center (OMC) of Technion under grant 86160946, in part by the
Junior Faculty Program for Quantum Science and Technology of Israel Planning and
Budgeting Committee of the Council for Higher Education (VATAT) under grant
86636903, and in part by the Chaya Career Advancement Chair under grant 8776026.

\fi

\appendix[Proof of \cref{eq:AOSbound}]
\label{appendix:AOSbound}

To derive \cref{eq:AOSbound}, we denote $\cM \coloneqq \cM_{\text{AOST}}$ by $C_m$ the codebook $C_o$ where
the codeword $X(m)$ at position $m$ has been removed.
Therefore,
\begin{align}
  &\expect_{P_{C_o}} \brack{D( P_{Z^n|C_{o}}\| Q_Z^n )}
    = \expect_{P_{C_o}} \expect_{P_{Z^n|C_{o}}}
      \brack{\log\frac{P_{Z^n|C_o}(Z^n|C_o)}{Q_Z^n(Z^n)}} \nonumber\\
  &= \expect_{P_{C_o}}\expect_{P_{M|C_o}}\expect_{P_{Z^n|M,C_o}} \brack{\log\frac{P_{Z^n|C_o}(Z^n|C_o)}{Q_Z^n(Z^n)}} \nonumber \\
  &\overset{(a)}{=}\expect_{P_M}\expect_{P_{C_o}}\expect_{P_{Z^n|M,C_o}}\brack{\log\frac{P_{Z^n|C_o}(Z^n|C_o)}{Q_Z^n(Z^n)}} \nonumber \\
  &\overset{(b)}{=}\expect_{P_M}\expect_{P_{X^n(M)|M}}\expect_{P_{C_M|M}}\expect_{P_{Z^n|M,C_o}}\brack{\log\frac{\sum_{m=1}^{|\mathcal{M}|}
                   \frac 1 {\card\cM} W_{Z|X}(Z^n|X^n(m))}{Q_Z^n(Z^n)}} \nonumber \\
  &\overset{(c)}{=}\expect_{P_M}\expect_{P_{X^n(M)|M}}\expect_{P_{Z^n|X^n(M)}}\expect_{P_{C_M|M}}\Bigg[\log\Big(\frac{W_{Z|X}(Z^n|X^n(M))}{|\mathcal{M}|\, Q_Z^n(Z^n)} \nonumber \\
  &\quad + \frac{\sum_{m=1,m\neq M}^{|\mathcal{M}|}W_{Z|X}(Z^n|X^n(m))}{|\mathcal{M}|\, Q_Z^n(Z^n)}\Big)\Bigg] \nonumber \\
  &\overset{(d)}{\leq} \expect_{P_M}\expect_{P_{X^n(M)|M}}\expect_{P_{Z^n|X^n(M)}}\Bigg[\log\Big(\expect_{P_{C_M|M}}\Big[\frac{W_{Z|X}(Z^n|X^n(M))}{|\mathcal{M}|\, Q_Z^n(Z^n)} \nonumber \\
  &\quad + \frac{\sum_{m=1,m\neq M}^{|\mathcal{M}|}W_{Z|X}(Z^n|X^n(m))}{|\mathcal{M}|\, Q_Z^n(Z^n)}\Big]\Big)\Bigg] \nonumber \\
  &\overset{(e)}{=}\expect_{P_M}\expect_{P_{X^n(M)|M}}\expect_{P_{Z^n|X^n(M)}}\Bigg[\log\Big(\frac{W^n_{Z|X}(Z^n|X^n(M))}{|\mathcal{M}|\, Q_Z^n(Z^n)} \nonumber \\
  &\quad + \frac{\sum_{m=1, m\neq M}^{|\mathcal{M}|}\sum_{x^n \in \cX^n}P^n_X(x^n)W_{Z|X}^{n}(Z^n|x^n)}{|\mathcal{M}|\, Q_Z^n(Z^n)}\Big)\Bigg] \nonumber \\
  &\overset{(f)}{=}\expect_{P_M}\expect_{P_{X^n(M)}}\expect_{P_{Z^n|X^n(M)}}\Bigg[\log\Big(\frac{W_{Z|X}^{n}(Z^n|X^n(M))}{|\mathcal{M}|\, Q_Z^n(Z^n)} +\frac{|\mathcal{M}|-1}{|\mathcal{M}|}\Big)\Bigg] \nonumber \\
  & \leq\expect_{P_M}\expect_{P_{X^n(M)}}\expect_{P_{Z^n|X^n(M)}}\Bigg[\log\Big(\frac{W_{Z|X}^{n}(Z^n|X^n(M))}{|\mathcal{M}|\, Q_Z^n(Z^n)}+ 1 \Big)\Bigg],
  \label{AchievabilityStealth2}
\end{align}
where (a) holds as $C_{o}$ and $M$ are statistically independent;
(b) holds as $P_{C_o}= P_{X^n(M)} P_{C_M}$ where $C_M$ and $X^n(M)$ are statistically independent;
(c) holds as $ C_M - X^n(M) - Z^n$ ;
(d) holds due to Jensen's inequality $\expect(f(X)) \leq f(\expect(X))$ for a concave $f(X)=\log X$;
while (e) holds from the linearity of the expectation
  and the statistical independence of the codewords
  $X^n(m) \forall m \in \mathcal{M}$, and hence
\begin{equation}
\expect_{P_{C_M}} \brack{ \sum_{m=1,m\neq M}^{|\mathcal{M}|} f(X^n(m), Y^n) }
  = \sum_{m=1,m\neq M}^{|\mathcal{M}|} \expect_{P_{X^n(m)}} \brack{ f(X^n(m), Y^n) }
  .
\end{equation}
Further, (f) follows since for all $P_X \in \Paost$,
\begin{equation}
  Q_Z(z_i) = \sum_{x \in \cX} P_X(x) W_{Z|X}(z_i | x)
\end{equation}
for all \( m \in \mathcal{M} \) and \( i \in [1:N] \).
\qed

\clearpage
\printbibliography

\newpage
\listoffixmes

\end{document}

